\def\BState{\State\hskip-\ALG@thistlm}
\newcommand\reallywidehat[1]{%
\savestack{\tmpbox}{\stretchto{%
  \scaleto{%
    \scalerel*[\widthof{\ensuremath{#1}}]{\kern-.6pt\bigwedge\kern-.6pt}%
    {\rule[-\textheight/2]{1ex}{\textheight}}
  }{\textheight}%
}{0.5ex}}%
\stackon[1pt]{#1}{\tmpbox}%
}
\newtheorem*{theorem*}{Theorem}
\newtheorem{theorem}{Theorem}[section]
\newtheorem{lemma}[theorem]{Lemma}
\newtheorem{fact}[theorem]{Fact}
\newtheorem{claim}[theorem]{Claim}
\newtheorem{assumption}[theorem]{Assumption}
\newtheorem{corollary}[theorem]{Corollary}
\newtheorem{example}[theorem]{Example}
\newtheorem{proposition}[theorem]{Proposition}
\newtheorem{definition}[theorem]{Definition}
\newtheorem{observation}[theorem]{Observation}
\newcommand{\dmi}{\textsc{DMI}}
\newcommand{\fmi}{\textsc{FMI}}
\newcommand{\bmi}{\textsc{BMI}}
\newcommand{\mi}{\textsc{MI}}
\newcommand{\vol}{\mathrm{Vol}}
\newcommand{\E}{\mathrm{E}}
\newcommand{\vmi}{\textsc{VMI}}
\newcommand{\mvec}{\mathrm{vec}}
\newcommand{\ube}{\textsc{UBE}}
\begin{document}
\title{More Dominantly Truthful Multi-task Peer Prediction with a Finite Number of Tasks}
\author{Yuqing Kong \\The Center on Frontiers of Computing Studies,\\Peking University \\ \texttt{yuqing.kong@pku.edu.cn} \\}
\date{}
\maketitle
\begin{abstract}
In the setting where we ask participants multiple similar possibly subjective multi-choice questions (e.g. Do you like Bulbasaur? Y/N; do you like Squirtle? Y/N), peer prediction aims to design mechanisms that encourage honest feedback without verification. A series of works have successfully designed multi-task peer prediction mechanisms where reporting truthfully is better than any other strategy (dominantly truthful), while they require an infinite number of tasks. A recent work proposes the first multi-task peer prediction mechanism, Determinant Mutual Information (DMI)-Mechanism, where not only is dominantly truthful but also works for a finite number of tasks (practical). 

However, the existence of other practical dominantly-truthful multi-task peer prediction mechanisms remains to be an open question. This work answers the above question by providing
\begin{itemize}
        \item a new family of information-monotone information measures: volume mutual information (VMI), where DMI is a special case;
      \item a new family of practical dominantly-truthful multi-task peer prediction mechanisms, VMI-Mechanisms.
\end{itemize}

To illustrate the importance of VMI-Mechanisms, we also provide a tractable effort incentive optimization goal. We show that DMI-Mechanism may not be not optimal but we can construct a sequence of VMI-Mechanisms that are approximately optimal.

The main technical highlight in this paper is a novel geometric information measure, \emph{Volume Mutual Information}, that is based on a simple idea: we can measure an object $A$'s information amount by the number of objects that is less informative than $A$. Different densities over the object lead to different information measures. This also gives Determinant Mutual Information a simple geometric interpretation. 
\end{abstract}

\newpage

\section{Introduction}

Human evaluation is a commonly used measure when we lack an objective standard. For example, the internet company sometimes uses human evaluation to evaluate the online product's quality (e.g. app, online platform). However, eliciting high-quality feedback from the human evaluators can be tricky when they are asked to provide \emph{subjective} judgment. There is no way to verify their subjective opinions. Paying these evaluators only for the agreement will discourage valuable feedback from the minority. Peer prediction (i.e. information elicitation without verification) \cite{MRZ05}, aims to design mechanisms that encourage honest subjective feedback from the user, even she is in the minority. In the setting where two users, say Alice and Bob, are asked to rate several similar products (e.g. restaurants), the peer prediction reward system will take their feedbacks as input and return them proper rewards. We want the reward system to be dominantly truthful. That is, for each user (who can belong to a minority group), regardless of other people's behaviors, she will obtain the highest amount of expected reward when she tells the truth and she will be paid the lowest in expectation if she reports some garbage feedback like five stars for all products.

To design dominantly truthful reward systems, \citet{Kong:2019:ITF:3309879.3296670} propose an information-theoretic framework, Mutual Information Paradigm (MIP), to reduce the above mechanism design problem to the design of proper information measure. When the rating tasks are similar, we can assume that Alice and Bob' feedback for these tasks are i.i.d. samples of random variables $\hat{X}_A,\hat{X}_B$. MIP pays Alice and Bob the mutual information between $\hat{X}_A,\hat{X}_B$ in expectation. The mutual information measure should be information-monotone. That is, any data-processing method performed on the random variables will decrease the mutual information. When MIP pays an information-monotone mutual information, the strategic behavior of Alice or Bob will decrease their expected payments since the strategy is a data-processing method. Thus, to design a dominantly truthful mechanism, it is sufficient to design an information measure which 1) is information-monotone; 2) can be estimated unbiasedly with a certain amount of samples.

The original Shannon mutual information satisfies the monotonicity property. However, it cannot be estimated unbiasedly with a finite number of samples thus cannot be used to construct the reward system that works for a finite number of tasks. A recent work \cite{kong2020dominantly} solves this issue by proposing a new mutual information measure, Determinant Mutual Information (DMI). Its corresponded mechanism, DMI-Mechanism, is dominantly truthful with only a finite number of tasks. The trick is that DMI's square has a polynomial format and the polynomial mutual information can be estimated unbiasedly with a finite number of tasks. DMI-Mechanism shows the existence of the finite-number-task dominantly truthful mechanism. However, The existence of other\footnote{Other mechanisms means that these mechanisms are not simple transformations (e.g. affine transformation) of the DMI-Mechanism or based on a mutual information which is a polynomial of DMI (e.g. $\dmi^4+\dmi^2$).} finite-number-task dominantly truthful mechanisms remains to be an open question.

This work answers the above question by providing

\begin{itemize}
        \item a new family of information-monotone information measures: volume mutual information (VMI), where DMI is a special case;
      \item a new family of practical dominantly-truthful multi-task peer prediction mechanisms, VMI-Mechanisms.
\end{itemize}

The family of mechanisms is constructed via the new mutual information family. In detail, to obtain the above results, the paper first formally show that every degree $d$ \emph{polynomial mutual information} can be used to construct the dominantly truthful peer prediction mechanisms that work for $\geq d$ tasks. Most previous information measures are based on \emph{distance method}. The construction of these measures rely on proper distance measures. However, these distance measures based mutual information do not have a polynomial format. This work proposes a novel geometric information measure design framework, \emph{volume method}, to construct a new mutual information family, VMI. Previously, the square DMI is the only known polynomial mutual information even in the binary case. VMI contains a family of new\footnote{A polynomial mutual information is new if it is not a polynomial of DMI (e.g. $\dmi^4+\dmi^2$).} polynomial mutual information. We use these new polynomial mutual information measures to construct the new dominantly truthful peer prediction mechanisms that work for a finite number of tasks. To illustrate this new mutual information family better, we also provide a geometric visualization in the binary case. The visualization provides a deeper understanding of the existed and new mutual information. For example, although the noise decreases the mutual information, the visualization shows that the original Shannon mutual information punishes the two-sided noises more than DMI, and punish the one-sided noises less than DMI. 

Though this work is mainly motivated for answering the above open question, the volume mutual information is the main technical highlight of this work. The idea behind VMI is simple and natural. Given a pair of random variables $X,Y$, mutual information measure takes $X$ and $Y$'s joint distribution as input and outputs their mutual information. Here $(X',Y)$ is less informative than $(X,Y)$ if we can perform an operation on $X$ to obtain $X'$ and this operation is independent of $Y$. A mutual information measure is information-monotone if the mutual information between $X'$ and $Y$ is less than that between $X$ and $Y$. VMI measures how informative a distribution is by measuring the volume of distributions that is less informative than it. That is, the volume mutual information between $X$ and $Y$ is defined as follows: \[\vmi(X;Y):=\text{Volume}(\{(X';Y)|(X';Y)\preceq (X;Y)\}).\] Like other mutual information, volume mutual information operates on $X$ and $Y$'s joint distribution. By assigning different densities to the space of joint distributions, we can obtain different formulas of volume mutual information with different properties. In particular, when the density function is a polynomial of the elements in joint distribution, we can obtain a family of polynomial volume mutual information as well. 

Given a family of practical mechanisms, we have an optimization space. We then provide a tractable optimization goal and optimize over this family. If the participants do not need any effort to perform the tasks, we will focus on incentivizing the participants to tell the truth after they receive the signals. In this case, there is no need to construct more dominantly truthful, practical mechanisms. Thus, we consider the setting where participants require efforts to perform the tasks. In this setting, we want the participants not only to be honest after they have the signals but also to spend a certain amount of effort in obtaining the signals. We assume that the task requester has value for the elicited answers' distribution. We aim to maximize the requester's utility, which is defined as her value minus her payments for the participants. This work's analysis focuses on the setting where there are two participants, Alice and Bob. 

It's left to optimize over the new VMI-Mechanisms. One way is to directly optimize over the new family. Another way is to optimize over all possible dominantly truthful mechanisms first. Then we can approximate the optimal mechanism (may not be practical) by a sequence of practical, dominantly truthful VMI-Mechanisms. It turns out the second way is easier for this problem. First, we observe that the optimal dominantly truthful payment scheme is a threshold payment scheme: there is a threshold joint distribution $U^*$ and if Alice and Bob's reports' joint distribution is more informative than $U^*$, they will get a fixed amount of payments, otherwise, they get nothing. This payment scheme only works for an infinite number of tasks where we can perfectly estimate Alice and Bob's reports' joint distribution. However, there exists a sequence of practical VMI-Mechanisms that approximate the optimal threshold payment scheme. The idea is that the threshold payment scheme is a special VMI-Mechanism if we allow the density function to be a Dirac delta function on $U^*$. To construct a sequence of practical VMI-Mechanisms to approximate the threshold payment scheme, we use a sequence of polynomials to approximate the Dirac delta function. In the literature of proper scoring rules, there is a beta family of scoring rules \cite{2005Loss,2013Choosing} which can be used to approximate a threshold scoring rule, ``misclassification'' scoring. We are inspired to pick the multivariate Beta (Dirichlet) family to design a parametric family of VMI and use this family to approximate the optimal threshold payment scheme.

\begin{figure}[!ht]
\begin{center}
\includegraphics[height = 5cm]{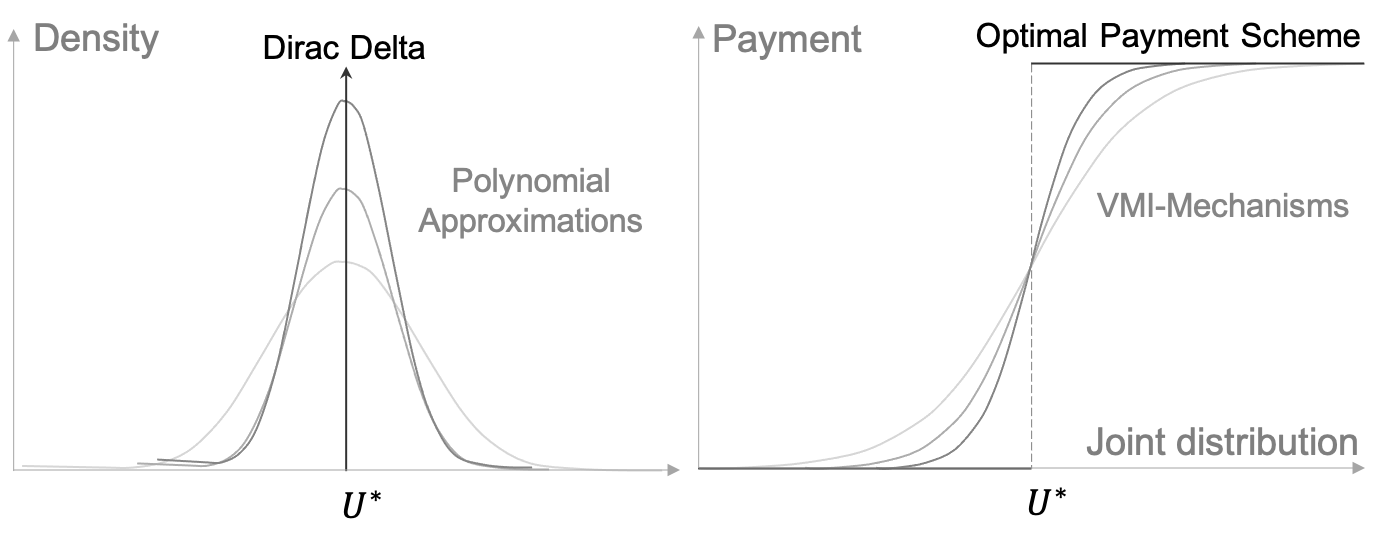}
\end{center}
\captionsetup{singlelinecheck=off} 

\caption[]{\textbf{An illustration of optimizing multi-task peer prediction mechanism}: the above figure illustrates the conceptual idea of optimizing the multi-task peer prediction mechanism. For ease of illustration, we draw the space of the joint distributions as a line though, in fact, it is not. The optimal payment scheme is a threshold function. A VMI-Mechanism corresponds to a density function over the space of the joint distributions. The Dirac delta density leads to the threshold payment scheme. We use a sequence of polynomial densities to approximate the Dirac delta density and use those densities to construct corresponding VMI-Mechanisms. Then we obtain a sequence of practical and dominantly truthful VMI-Mechanisms that approximate the optimal payment scheme.} \label{fig:illustraction}
\end{figure}

Thus, we not only contribute a new family of practical dominantly truthful multi-task peer prediction mechanisms, but also provide a tractable effort incentive optimization goal. We show that under this goal, DMI-Mechanism may not be optimal but we can use our new family to construct a sequence of approximately optimal practical dominantly truthful mechanisms.

\subsection{Related Work}

\citet{MRZ05} start the literature of peer prediction by considering the setting where the participants are asked a single question (e.g. do you like this restaurant or not?). They design a reward system where truth-telling is a strict equilibrium. However, this original peer prediction work requires the knowledge of the common prior over the participants. \citet{prelec2004bayesian} proposes the Bayesian Truth Serum that removes this prior knowledge requirement by asking the participants to additionally report their forecasts for other people (e.g. what percentage of your peers like this restaurant?). However, Bayesian Truth Serum requires an infinite number of participants. Moreover, the additional forecast report requires additional efforts from the participants and makes the mechanism \emph{non-minimal}. \citet{radanovic2015incentive} extend Bayesian Truth Serum to a slightly different setting involving sensors, but still requires a large number of agents. A series of works (e.g. \cite{radanovic2014incentives,faltings2014incentives,witkowski2012robust,DBLP:conf/innovations/KongS18}) study how to remove the requirement for a large number of participants, while their mechanisms are non-minimal. \citet{geometricpp} focus on the design of minimal mechanisms where the participants do not need to report additional forecasts. However, when participants are only assigned a single task, they point out that minimal mechanisms require constraints on the participants' belief model, i.e., are not \emph{prior-independent}. 

\citet{dasgupta2013crowdsourced} start to consider the setting where the participants are assigned multiple similar tasks, the multi-task setting. In contrast to the single-task setting, the multi-task setting enables the design of both \emph{prior-independent} and \emph{minimal} mechanisms. In the multi-task setting, \citet{rfj2016} use the distribution of reported answers from similar tasks as the prior probability of possible answers, while their mechanism requires the estimation of prior probability from a large number of tasks or participants. \citet{kamble2015truth} propose a mechanism where the participants can perform only a single task though the total number of tasks is large. However, this mechanism is not dominantly truthful and makes truth-telling only better than any symmetric equilibrium where all participants perform the same strategy. \citet{2016arXiv160303151S,Kong:2019:ITF:3309879.3296670,liuchen} focus on the setting where there are a small number of participants and show that the dominantly truthful multi-task peer prediction mechanism exists. \citet{Kong:2019:ITF:3309879.3296670} also provide a general information-theoretic framework for the design of the dominantly truthful peer prediction mechanisms. However, they all require the participants to perform an infinite number of tasks. 

\citet{kong2020dominantly} addresses this issue by proposing the first dominantly truthful mechanism, DMI-Mechanism, which is prior-independent, minimal, and works for two participants and a finite number of tasks (practical). This mechanism is constructed by a new information measure, Determinant Mutual Information (DMI) whose square has a polynomial format. However, the existence of other practical dominantly truthful mechanisms remains to be an open question. This work answers the above question by providing a family of practical dominantly truthful peer prediction mechanisms, as well as a new family of information-monotone mutual information: volume mutual information (VMI), where DMI is a special case. 

Regarding optimization in information elicitation, \citet{NNW-20,DBLP:journals/corr/abs-2007-02905,Z-11,2013Choosing, Osb-89} focus on optimizing over proper scoring rules. Unlike this work, in the setting of scoring rules, the ground truth will be revealed later and the participants report only once. \citet{2014Optimum} consider the setting where workers are asked to report a data point and aim to find the optimal statistical estimator with the best effort incentives. We consider a very different setting. \citet{geometricpp} optimize over single-task peer prediction mechanisms where their mechanisms are not dominantly truthful. In contrast, we focus on the multi-task peer prediction setting where ground truth does not exist and the participants will perform multiple tasks. Moreover, we optimize over dominantly truthful, and practical mechanisms. 



\subsection{Multi-task Peer Prediction and Mutual Information}\label{sec:mip}

In this section, we will show how to employ information-monotone mutual information measures to design dominantly truthful mechanisms. We will also connect polynomial mutual information to the practical mechanism. Then we can reduce the design of the dominantly truthful and practical mechanism to information-monotone polynomial mutual information. 

\paragraph{Multi-task Peer Prediction} We focus on the setting where there are two participants, Alice and Bob, and a task requester. Alice and Bob are assigned $T$ a priori similar tasks. For each task $t$, after performing the task, each participant $i=A,B$ will receive a private signal $c_i^t\in \mathcal{C}$ where $\mathcal{C}$ is a size $C$ choice set. For binary questions, $C=2$. By assuming the tasks are a priori similar, the participants' honest signals $\{(c_A^t,c_B^t)\}_t$ are $T$ i.i.d. samples from random variables $(X_A,X_B)$ whose distribution is denoted by $U_{A,B}$. $U_{A,B}$ can be seen as a $C\times C$ matrix where $U_{A,B}(c_A,c_B)$ is the probability that $(X_A,X_B)=(c_A,c_B)$. A multi-task peer prediction mechanism will take all participants' reports $\{(c_A^t,c_B^t)\}_{t=1}^T$ as input and output their corresponding payments $p_A,p_B$. 

\paragraph{Report Strategy Model} Alice may lie and her \emph{strategy} $S_A^t$ for each task $t$ can be seen as a $C\times C$ stochastic matrix where $S_A^t(\hat{c}_A^t,c_A^t)$ is the probability she reports $\hat{c}_A^t$ given that she receives $c_A^t$. We follow \citet{kong2020dominantly} and assume that every participant plays the consistent strategy for all tasks. That is, there exists $S_A$ such that $\forall t, S_A^t=S_A$. We model Bob analogously. With this assumption, not only the participants' honest signals are i.i.d. samples, but also their reported signals are i.i.d. samples from random variables $(\hat{X}_A,\hat{X}_B)$ whose distribution is denoted by $\hat{U}_{A,B}$. A strategy $S$ is uninformative if it is independent of private signals, i.e., $S(\hat{c},c)=S(\hat{c},c')$ for all $c,c',\hat{c}\in \mathcal{C}$. 

\begin{definition}[Dominantly truthful]
A multi-task peer prediction mechanism is dominantly truthful if, for all participants, truthful report strategy maximizes her expected payment regardless of other people's strategies; and if she believes other participants tell the truth, the truthful report strategy will be strictly better than uninformative report strategies.
\end{definition}

The second requirement guarantees that the flat payment mechanism is not dominantly truthful. With the above report strategy model, for a dominantly truthful mechanism where Alice's expected payment is represented as $\mathcal{P}_A(\hat{U}_{A,B})$ and Bob's expected payment is represented as $\mathcal{P}_B(\hat{U}_{A,B})$, we have $\forall S_A, S_B, U_{A,B}$, \[\mathcal{P}_A(S_A U_{A,B} S_B^{\top})\leq \mathcal{P}_A(U_{A,B}S_B^{\top})\] and analogously \[\mathcal{P}_B(S_A U_{A,B} S_B^{\top})\leq \mathcal{P}_B(S_A U_{A,B}).\] 

\citet{Kong:2019:ITF:3309879.3296670} introduce an information-theoretic framework, Mutual Information Paradigm (MIP), for the design of dominantly truthful multi-task peer prediction mechanisms. MIP pays each participant the mutual information between her report and her peer's report. Once the mutual information is information-monotone, each participant will be incentivized to tell the truth to avoid the loss of information.  We start to formally define information-monotonicity. 

Let $U_{X,Y}$ be a joint distribution over two random variables $X$ and $Y$. We want to design an information measure $\mi$ that takes $U_{X,Y}$ as input and outputs a non-negative real number, which reflects the amount of information contained in $X$ that is related to $Y$. We also want $\mi$ to be \emph{information-monotone}: when $X'$ is ``less informative'' than $X$ with respect to $Y$, $\mi(U_{X',Y})$ should be less than $\mi(U_{X,Y})$. Typically, the literature also writes $\mi(U_{X,Y})$ as $\mi(X;Y)$. The following definition is the formal definition of information-monotonicity.

\begin{definition}[Information-monotonicity]\cite{cover2006elements}
    $\mi$ is information-monotone if for every two random variables $X,Y$, when $X'$ is less informative than $X$ with respect to $Y$, i.e., $X'$ is independent of $Y$ conditioning $X$, \[\mi(X';Y)\leq \mi(X;Y).\]
\end{definition}

Mutual information requires the distribution as input while we only have samples. However, since the participants are assumed to be the expected payment maximizer, the unbiased estimator is sufficient.

\paragraph{Unbiased estimator of mutual information} Given a mutual information $\mi$, $\ube^{\mi}$ is an unbiased estimator of $\mi$ with $\geq r$ sample if for every two random variables $(X,Y)$, when $\{(x_t,y_t)\}_{t=1}^T$ are $T\geq r$ independent samples of $(X,Y)$, \[\E[\ube^{\mi}(\{(x_t,y_t)\}_{t=1}^T)]=\mi(X;Y).\]

\paragraph{Mutual Information Paradigm$(\ube^{\mi})$} Alice and Bob are assigned $T\geq r$ a priori similar tasks in independent random orders. The participants finish the tasks without any communication. 
\begin{description}
\item[Report] For each task $t$, Alice privately receives $c_A^t$ and reports $\hat{c}_A^t$ and Bob is analogous. 
\item[Payment] Alice's payment is 
\[ p_A:=  \ube^{\mi}(\{(\hat{c}_A^t,\hat{c}_B^t)\}_{t=1}^{T}) \] where $\ube^{\mi}$ is an unbiased estimator of an information-monotone $\mi$ that works for $\geq r$ samples. Bob is analogous. 

\end{description}

We say agents' prior is informative for $\mi$ if the mutual information tween their truthful reports are positive, i.e., $\mi(X_A;X_B)>0$. This assumption is required to guarantee the second property of dominant truthfulness. 

\begin{lemma}
\label{lem:paradigm}
    When $\mi$ is information-monotone, non-negative, and vanishes for independent random variables, if agents' prior is informative with respect to $\mi$, then the mutual information paradigm $\ube^{\mi}$ is dominantly truthful.
\end{lemma}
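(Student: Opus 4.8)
The plan is to verify the two requirements of dominant truthfulness separately, relying on the structural characterization of expected payments already recorded in the excerpt. For the first requirement, recall that under the consistent-strategy model, if Alice plays $S_A$ and Bob plays $S_B$ while their honest signals have joint distribution $U_{A,B}$, then their reports are i.i.d.\ samples from $\hat{U}_{A,B}=S_A U_{A,B} S_B^{\top}$. Since $\ube^{\mi}$ is an unbiased estimator of $\mi$ with $\geq r$ samples and the mechanism assigns $T\geq r$ tasks, Alice's expected payment equals $\mi(\hat{X}_A;\hat{X}_B)=\mi(S_A U_{A,B} S_B^{\top})$. The key observation is that applying the stochastic matrix $S_A$ to $X_A$ is a data-processing operation on $X_A$ that is independent of $\hat{X}_B$ conditioned on $X_A$; hence $(\hat{X}_A;\hat{X}_B)$ is less informative than $(X_A;\hat{X}_B)$ in the sense of the information-monotonicity definition, so $\mi(\hat{X}_A;\hat{X}_B)\leq \mi(X_A;\hat{X}_B)$. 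That is, fixing Bob's strategy, Alice weakly prefers reporting truthfully; the argument for Bob is symmetric. I would phrase this directly in terms of $\mathcal{P}_A(S_A U_{A,B} S_B^{\top})=\mi(S_A U_{A,B}S_B^{\top})\leq \mi(U_{A,B} S_B^{\top})=\mathcal{P}_A(U_{A,B}S_B^{\top})$, which is exactly the inequality displayed just before the lemma statement.

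For the second requirement, suppose Alice believes Bob reports truthfully, i.e.\ $\hat{X}_B=X_B$, and consider any uninformative strategy $S_A$ for Alice. Truthful reporting yields expected payment $\mi(X_A;X_B)$, which is positive by the assumption that agents' prior is informative. An uninformative $S_A$ makes $\hat{X}_A$ independent of $X_A$, hence independent of $X_B$; so by the hypothesis that $\mi$ vanishes on independent random variables, the uninformative strategy yields expected payment $\mi(\hat{X}_A;X_B)=0$. Combining, truthful reporting is strictly better than any uninformative report strategy. Non-negativity of $\mi$ is not strictly needed for the inequalities but reinforces that $0$ is indeed the floor, i.e.\ uninformative play is the worst case among the family being compared.

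The steps are thus: (i) translate expected payments into evaluations of $\mi$ on the appropriate joint distribution via unbiasedness and the consistent-strategy model; (ii) identify each unilateral strategy as a conditionally-independent post-processing and invoke information-monotonicity for the weak inequality; (iii) invoke vanishing-on-independence plus the informative-prior assumption for the strict inequality against uninformative strategies. I do not anticipate a genuine obstacle here — the lemma is essentially a bookkeeping consequence of the definitions — but the one place to be careful is the precise direction of the conditional-independence claim: we need $\hat{X}_A \perp \hat{X}_B \mid X_A$ (Alice's reporting noise is independent of Bob's report given Alice's signal), which follows because Alice finishes her tasks without communication and her strategy depends only on her own signal; this is where the "no communication" and "strategy is a function of one's own signal only" modeling assumptions do the real work, and it is worth stating explicitly rather than leaving implicit.
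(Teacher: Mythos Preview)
Your proposal is correct and follows essentially the same approach as the paper's proof, which is a terse three-sentence argument invoking unbiasedness to get $\mi(\hat{X}_A;\hat{X}_B)$ as the expected payment, information-monotonicity for the first requirement, and vanishing-on-independence plus the informative-prior assumption for the second. Your version simply fills in more of the bookkeeping (the conditional-independence direction, the role of the no-communication assumption, and why an uninformative $\hat{X}_A$ is independent of $X_B$), which the paper leaves implicit.
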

\begin{proof}
In expectation, Alice's payment is $ \mi(\hat{X}_A;\hat{X}_B) $ which will be maximized if she tells the truth. If agents' prior is informative with respect to $\mi$ and Alice believes Bob tells the truth, Alice's expected payment when she tells the truth will be $\geq \mi(X_A;X_B)>0$. If she reports uninformative signals, her expected payment will be zero since $\mi$ vanishes for independent random variables. Thus, the second property of dominant truthfulness is also satisfied. 
\end{proof}

To design a practical dominantly truthful mechanism, the unbiased estimator needs to work for only a finite number of samples. We will show that once the mutual information is a degree $d$ polynomial, it has an unbiased estimator that works for $\geq d$ samples. Currently, the only example of polynomial mutual information is $\dmi$'s square. 

\begin{definition}[Polynomial Mutual Information]
    $\mi$ is a polynomial mutual information when $\mi(X;Y)$ a multivariate polynomial of the entries of $U_{X,Y}$.\end{definition}

\begin{definition}[Determinant based Mutual Information (DMI) \cite{kong2020dominantly}]\[\dmi(X;Y):=|\det(U_{X,Y})|\]\end{definition}

DMI is not a polynomial mutual information while DMI's square is. For example, in the binary case for every joint distribution matrix $U_{X,Y}=\begin{bmatrix}u_{00} & u_{01}\\ u_{10} & u_{11}\end{bmatrix}$, $\dmi(X;Y)=|u_{00}u_{11}-u_{10}u_{01}|$ is not a polynomial while $\dmi^2(X;Y)=(u_{00}u_{11}-u_{10}u_{01})^2$ is a polynomial.


\begin{lemma}\label{lem:poly}
    Every degree $d$ polynomial mutual information $\mi$ has an unbiased estimator $\ube^{\mi}$ for $T\geq d$ samples. 
\end{lemma}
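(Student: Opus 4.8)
The plan is to reduce, by linearity of expectation, the construction of $\ube^{\mi}$ to the construction of an unbiased estimator for each monomial appearing in $\mi$, and then to estimate a degree-$k$ monomial by a U-statistic formed from products of indicator functions over $k$ distinct task indices.

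First, expand $\mi(X;Y)$ as a finite real linear combination of monomials, each of the form $\prod_{j=1}^{k} U_{X,Y}(c_j,d_j)$ with $k\le d$ (the factors counted with multiplicity, so the same entry may repeat). Since $\ube^{\mi}$ only needs to be \emph{unbiased}, it suffices to give an unbiased estimator of each such monomial using $T\ge d\ge k$ samples; the estimator of $\mi$ is then the same linear combination of these, and it depends only on the reported data and $T$.

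Second, fix a monomial $\prod_{j=1}^{k} U_{X,Y}(c_j,d_j)$. Each factor is $U_{X,Y}(c,d)=\Pr[(X,Y)=(c,d)]$, and since the $T$ samples are independent, for any distinct indices $t_1,\dots,t_k\in\{1,\dots,T\}$,
\[
\E\Big[\textstyle\prod_{j=1}^{k}\mathbf{1}\big[(x_{t_j},y_{t_j})=(c_j,d_j)\big]\Big]=\prod_{j=1}^{k}\Pr[(X,Y)=(c_j,d_j)]=\prod_{j=1}^{k}U_{X,Y}(c_j,d_j).
\]
Averaging the left-hand quantity over all ordered $k$-tuples of distinct indices, i.e. dividing $\sum_{(t_1,\dots,t_k)\text{ distinct}}\prod_{j}\mathbf{1}[(x_{t_j},y_{t_j})=(c_j,d_j)]$ by $T(T-1)\cdots(T-k+1)$, yields an unbiased estimator of the monomial; it is well defined because $T\ge d\ge k$ guarantees at least one such tuple, and a degree-$0$ term is simply its constant value. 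Taking the corresponding linear combination of these U-statistics gives $\ube^{\mi}$, proving the claim.

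I do not expect a real obstacle here; the only point requiring care is monomials in which an entry $U_{X,Y}(c,d)$ appears with multiplicity greater than $1$ — this is precisely why the summation ranges over \emph{distinct} indices $t_1,\dots,t_k$, so that independence of the samples still lets the expectation factor into the product of probabilities. (Equivalently one can sum over unordered $k$-subsets with normalization $\binom{T}{k}$ after symmetrizing over the assignment of the $(c_j,d_j)$ to the sampled pairs; the ordered-tuple version above is the cleanest to state.)
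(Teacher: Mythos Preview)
Your proof is correct and follows essentially the same approach as the paper: decompose the polynomial into monomials and estimate each degree-$k$ monomial by a product of indicator functions evaluated at $k$ distinct sample indices, using independence to factor the expectation. The only cosmetic difference is that you average over all ordered $k$-tuples of distinct indices (a U-statistic), whereas the paper is content to use any fixed choice of $k$ samples; both are valid unbiased estimators.
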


\begin{proof}
    Every degree $d$ polynomial mutual information $\mi$ can be written as the sum of terms of format $\Pr[X=c_1,Y=c'_1]*\Pr[X=c_2,Y=c'_2]*\cdots*\Pr[X=c_k,Y=c'_k],k\leq d$.
    
    For each term $\Pr[X=c_1,Y=c'_1]*\Pr[X=c_2,Y=c'_2]*\cdots*\Pr[X=c_k,Y=c'_k],k\leq d$, when we have $k$ independent samples $(x_1,y_1),(x_2,y_2),\cdots,(x_k,y_k)$ of $X,Y$, $\Pi_{i=1}^{k}\mathbbm{1}(x_i=c_i,y_i=c'_i)$ is an unbiased estimator. Thus, since $k\leq d$, $T\geq d$ independent samples is sufficient to construct an unbiased estimator of each term as well as the sum of these terms $\mi$. 
\end{proof}


The above lemma shows that every degree $d$ polynomial mutual information corresponds to a dominantly truthful peer prediction mechanism that works for $\geq d$ tasks. For example, DMI's square is a degree $2C$ polynomial. DMI-Mechanism \cite{kong2020dominantly} is constructed via an unbiased estimator of DMI's square and requires $\geq 2C$ tasks. 


\section{Volume Mutual Information}\label{sec:vmi}

This section will introduce the volume method and apply the volume method to obtain a new family of information-monotone mutual information measure, Volume Mutual Information (VMI), which can be polynomials. 
\subsection{Volume Method}\label{sec:volume}


Given a partially ordered set (poset) $(L,\preceq)$, we define the \emph{lower set} of $\ell$'s as \[\downarrow \ell:=\{\ell'|\ell'\in L, \ell'\preceq \ell \}.\] In discrete case, volume method measures each element by the size of its lower set. In continuous case, we need a monotone measure $\mu$ and integral $\int d\mu$ on $L$. That is, $\mu$ assigns higher volume to bigger set and for two integrable real-valued functions $f\leq g$ on $X$, $\int_X f d\mu\leq \int_X g d\mu$. We defer the basic definitions for measure and integral to appendix.

We assume that all lower sets are measurable with $\mu$. Since the higher-order element has a larger lower set, the volume of each element's lower set \[V(\ell):=\vol(\downarrow \ell):=\mu(\downarrow \ell)\] is a natural monotone function with respect to the partial order. More generally, we define a weighted version:

\begin{definition}[Volume function]
    Given a poset $(L,\preceq)$ with a monotone measure $\mu$ and a monotone integral $\int d\mu$ on $L$, for every integrable non-negative density function $w:L\mapsto \mathbb{R}^+$, we define the volume function that is associated with $w$ as 
\[ V^w(\ell):=\vol^w(\downarrow\ell):=\int_{\downarrow \ell} w(x) d\mu(x) .\] 
\end{definition}

When $w(x)=1$ everywhere, $V^w(\ell)=V(\ell)$.  

\begin{lemma}\label{lem:key}
The volume function $V^w: L\mapsto \mathbb{R}^+$ is a non-negative monotone function. 
\end{lemma}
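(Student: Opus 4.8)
The plan is to prove the two assertions of Lemma~\ref{lem:key}---non-negativity and monotonicity---separately, both directly from the definition $V^w(\ell)=\int_{\downarrow\ell} w(x)\,d\mu(x)$ together with the hypothesized properties of $\mu$ and $\int d\mu$.

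For non-negativity, the argument is immediate: $w$ is by assumption a non-negative density, so $w\geq 0$ on all of $L$, hence in particular on $\downarrow\ell$. Since the integral $\int d\mu$ is monotone in the integrand (integrating $0\leq w$ gives $0=\int_{\downarrow\ell} 0\,d\mu \leq \int_{\downarrow\ell} w\,d\mu$), we conclude $V^w(\ell)\geq 0$. This uses only the monotonicity of the integral and the assumption that every lower set is $\mu$-measurable, so that the integral is well-defined.

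For monotonicity, suppose $\ell'\preceq \ell$ in $(L,\preceq)$. The key observation is that lower sets are themselves monotone with respect to $\preceq$: if $x\preceq \ell'$ then by transitivity $x\preceq \ell$, so $\downarrow\ell' \subseteq \downarrow\ell$. Now I would write $V^w(\ell)-V^w(\ell') = \int_{\downarrow\ell} w\,d\mu - \int_{\downarrow\ell'} w\,d\mu$; the cleanest route is to express this difference as $\int_{\downarrow\ell\setminus\downarrow\ell'} w\,d\mu$, which is non-negative by the same non-negativity argument as above (with $\downarrow\ell\setminus\downarrow\ell'$ in place of $\downarrow\ell$). Alternatively, and perhaps more faithful to the ``monotone measure / monotone integral'' language the paper sets up, one can argue via the pointwise inequality $w\cdot\mathbbm{1}_{\downarrow\ell'}\leq w\cdot\mathbbm{1}_{\downarrow\ell}$ of functions on $L$ (valid because $\downarrow\ell'\subseteq\downarrow\ell$ and $w\geq 0$) and apply monotonicity of the integral over $L$ to these two integrands, giving $V^w(\ell')\leq V^w(\ell)$ directly.

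The only genuine subtlety---and the step I would be most careful about---is measurability: for the set-difference formulation we need $\downarrow\ell\setminus\downarrow\ell'$ to be measurable, and for the indicator formulation we need $w\cdot\mathbbm{1}_{\downarrow\ell}$ and $w\cdot\mathbbm{1}_{\downarrow\ell'}$ to be integrable over $L$. Both follow from the standing assumption that all lower sets are $\mu$-measurable (the difference of two measurable sets is measurable, and $w$ is assumed integrable), so this is a matter of invoking the right hypothesis rather than a real obstacle. No deep property of the poset is needed beyond transitivity of $\preceq$, which is what makes $\ell'\preceq\ell \Rightarrow \downarrow\ell'\subseteq\downarrow\ell$; everything else is bookkeeping with the monotone integral.
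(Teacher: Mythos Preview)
Your proposal is correct and follows essentially the same approach as the paper: the paper's proof simply notes that transitivity of $\preceq$ gives $\downarrow\ell'\subset\downarrow\ell$ whenever $\ell'\preceq\ell$, and then invokes monotonicity of the measure and integral. Your version is considerably more detailed (treating non-negativity explicitly, offering both the set-difference and indicator-function routes, and flagging the measurability step), but the underlying argument is the same.
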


The above lemma shows that $V^w$ extends a partial order to a total order. 

\begin{proof}

When $\ell'\preceq \ell$, since $\preceq$ is transitive, 
\[\downarrow \ell'\subset \downarrow \ell.\] Due to the fact that the measure and the integral are monotone, $V^w(\ell)$ is also monotone.

\end{proof}

\subsection{Information-monotone Partial Order}\label{sec:monotone}

To apply the volume method to the design of mutual information, we first use information-monotonicity to define a partial order among the joint distributions. $U_{X',Y}\preceq U_{X,Y}$ iff $X'$ is less informative than $X$ with respect to $Y$, i.e, $X'$ is independent of $Y$ conditioning $X$. We will show that this is equivalent to the following definition. 

\begin{definition}[$(L,\preceq)$ for $\mi$]\label{def:lpo}
    We define domain $L$ as the set of all possible $C\times C$ joint distribution matrices. $U'\preceq U$ if there exists a column-stochastic\footnote{A matrix $T$ is a column-stochastic matrix iff every entry of $T$ is non-negative and every column of $T$ sums to 1. } matrix $T$ such that $U'= T U$. 
\end{definition}

\begin{example}

 \[ \begin{bmatrix} .5 & .5\\ .5 & .5\end{bmatrix}U\cong\footnote{If $A\preceq B$ and $B\preceq A$, then $A\cong B$. }\begin{bmatrix} 0 & 0\\ 1 & 1\end{bmatrix}U\preceq \begin{bmatrix} .5 & 0\\ .5 & 1\end{bmatrix}U\preceq U.\]


The first equality holds since $\begin{bmatrix} 0 & 0\\ 1 & 1\end{bmatrix}=\begin{bmatrix} 0 & 0\\ 1 & 1\end{bmatrix} \begin{bmatrix} .5 & .5\\ .5 & .5\end{bmatrix}$ and $\begin{bmatrix} .5 & .5\\ .5 & .5\end{bmatrix}=\begin{bmatrix} .5 & .5\\ .5 & .5\end{bmatrix}\begin{bmatrix} 0 & 0\\ 1 & 1\end{bmatrix} $. The second partial order is valid since $\begin{bmatrix} 0 & 0\\ 1 & 1\end{bmatrix}=\begin{bmatrix} 0 & 0\\ 1 & 1\end{bmatrix} \begin{bmatrix} .5 & 0\\ .5 & 1\end{bmatrix}$. 

\end{example}


The following lemma shows that designing information-monotone mutual information is equivalent to designing a monotone function on $(L,\preceq)$.

\begin{lemma}\label{lem:mono}
$\mi$ is information-monotone if and only if $\mi$ is a monotone function on $(L,\preceq)$. 
\end{lemma}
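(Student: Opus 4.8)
The plan is to prove the two directions of the equivalence by directly matching the abstract definition of information-monotonicity against the concrete partial order in Definition~\ref{def:lpo}. The bridge is the simple observation that the strategy model of the mechanism is precisely a data-processing operation on one coordinate: if $X'$ is obtained from $X$ by a channel independent of $Y$, then the joint distribution transforms as $U_{X',Y} = T\,U_{X,Y}$ for the column-stochastic matrix $T$ encoding that channel, and conversely every column-stochastic $T$ is a valid such channel. So the first step is to establish this correspondence rigorously: given random variables $X,Y$ and $X'$ conditionally independent of $Y$ given $X$, write $T(c',c) := \Pr[X'=c'\mid X=c]$, check this is column-stochastic, and verify $\Pr[X'=c',Y=c''] = \sum_c T(c',c)\Pr[X=c,Y=c'']$, i.e.\ $U_{X',Y} = T U_{X,Y}$. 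For the converse, given a column-stochastic $T$ and a joint distribution $U_{X,Y}$, define $X'$ by passing $X$ through the channel $T$ independently of $Y$; then $U_{X',Y} = T U_{X,Y}$ and $X'$ is less informative than $X$ with respect to $Y$. This shows $U'\preceq U$ in the sense of Definition~\ref{def:lpo} if and only if $U'$ arises as $U_{X',Y}$ for some $X'$ less informative than $X$, which is exactly the relation used in the definition of information-monotonicity.

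With that correspondence in hand, both directions are essentially unwrapping definitions. For the ``information-monotone $\Rightarrow$ monotone on $(L,\preceq)$'' direction: suppose $U'\preceq U$ in $L$; by the correspondence there exist $X,Y,X'$ with $U_{X,Y}=U$, $U_{X',Y}=U'$, and $X'$ less informative than $X$ w.r.t.\ $Y$; information-monotonicity gives $\mi(X';Y)\le \mi(X;Y)$, i.e.\ $\mi(U')\le\mi(U)$, which is monotonicity on the poset. For the converse direction: suppose $\mi$ is monotone on $(L,\preceq)$, and let $X,Y,X'$ be given with $X'$ less informative than $X$ w.r.t.\ $Y$; by the correspondence $U_{X',Y}\preceq U_{X,Y}$ in $L$, so monotonicity yields $\mi(U_{X',Y})\le \mi(U_{X,Y})$, i.e.\ $\mi(X';Y)\le\mi(X;Y)$, which is information-monotonicity.

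The one point that needs a little care — and which I expect to be the main obstacle, though it is more a matter of bookkeeping than depth — is making sure the two notions of ``less informative'' genuinely coincide, including the claimed equivalence mentioned just before Definition~\ref{def:lpo} that ``$X'$ independent of $Y$ conditioning $X$'' is the same as ``$U'=TU$ for some column-stochastic $T$''. Here one must be slightly careful about the support of $X$: if $\Pr[X=c]=0$ for some $c$, the conditional $\Pr[X'=c'\mid X=c]$ is undefined, so the column $T(\cdot,c)$ can be chosen arbitrarily (any probability vector works, since that column is multiplied by zeros in $U_{X,Y}$), and one should note that this freedom does not affect $U_{X',Y}=TU_{X,Y}$. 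Conversely, constructing $X'$ from a given $T$ requires specifying a joint law of $(X,X',Y)$; the natural choice is $\Pr[X=c,X'=c',Y=c''] := T(c',c)\,U_{X,Y}(c,c'')$, and one checks this is a valid distribution with the right marginals and the desired conditional independence. Once these routine sanity checks are done, the lemma follows, and it simultaneously records the promised equivalence between the ordering of Definition~\ref{def:lpo} and the conditional-independence formulation.
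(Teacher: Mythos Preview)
Your proposal is correct and follows essentially the same approach as the paper: both directions hinge on the correspondence between ``$X'$ is independent of $Y$ given $X$'' and the factorization $U_{X',Y}=T\,U_{X,Y}$ with $T$ column-stochastic, established via the explicit joint law $\Pr[X=x,X'=x',Y=y]=T(x',x)\,U_{X,Y}(x,y)$. Your treatment is in fact slightly more careful than the paper's (you flag the zero-probability columns of $T$), but the structure and key steps are the same.
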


\begin{proof}
    We first show the $\Leftarrow$ direction. when $X'$ is less informative than $X$ with respect to $Y$, i.e., $X'$ is independent of $Y$ conditioning $X$, \[U_{X',Y}(x',y)=\Pr[X'=x',Y=y]=\sum_x \Pr[X'=x'|X=x]\Pr[X=x,Y=y].\] Thus, $U_{X',Y}=U_{X'|X}U_{X,Y}$. Since $U_{X'|X}$ is a column-stochastic matrix, $U_{X',Y}\preceq U_{X,Y}$. When $\mi$ is a monotone function on $(L,\preceq)$, $\mi$ is information-monotone. 
    
    To show the opposite direction, we start from the situation that $\mi$ is information-monotone. For every $U$, for every column-stochastic matrix $T$, we only need to show there exists $X,X',Y$ such that $X'$ is less informative than $X$ and $U_{X,Y}=U$ and $U_{X',Y}=T U$. We can construct such $X,X',Y$ by setting $\Pr[X=x,X'=x',Y=y]=U_{X,Y}(x,y)U_{X'|X}(x',x)$ for every $x,x',y$. Here $U_{X,Y}(x,y):=U(x,y)$ and $U_{X'|X}(x',x):=T(x',x)$. It's easy to see that $U_{X',Y}=T U$ and $X'$ is less informative than $X$ for $Y$. 
    
    Thus, $\mi(TU)=\mi(X';Y)\leq \mi(X;Y)=\mi(U)$. The inequality follows from the fact that $\mi $ is information-monotone. Therefore, $\mi$ is also monotone on the poset and the $\Rightarrow$ direction is also valid. 
\end{proof}

\subsection{Constructing Volume Mutual Information}\label{sec:defvmi}
This section will apply the volume method to obtain a new family of monotone mutual information measures, Volume Mutual Information (VMI). We have already defined the poset. Thus, to apply the volume method, we only need to pick the measure and integral.  

We will use Hausdorff measure \cite{simon1983lectures}. Intuitively, to provide a measure for any triangle's area on $\mathbb{R}^2$, the 2-dimensional Lebesgue measure $\mathscr{L}^2$ works. However, $\mathscr{L}^2$ will assign zero measure to any curve in $\mathbb{R}^2$. To provide a measure for a curve's length in $\mathbb{R}^2$, we can use the Hausdorff measure $\mathscr{H}^1$. We defer more introduction about the basic measure theory to the appendix. 

\begin{definition}[$(L,\preceq,\mu,\int)$ for $\mi$]
    We define domain $L$ as the set of all possible $C\times C$ joint distribution matrices. $U'\preceq U$ if there exists a column-stochastic matrix $T$ such that $U'= T U$. We vectorize matrices and transform $L$ to space in $\mathbb{R}^{C^2}$. We pick $\mu$ as the $C(C-1)$-dimensional\footnote{Though $L$ is a subset of a $C^2$ dimensional space, in the later sections, we will see the lower set has at most $C(C-1)$ dimension.} Hausdorff measure $\mathscr{H}^{C(C-1)}$. 
    
\end{definition}


\begin{example}[$(L,\preceq,\mu,\int)$ in binary case]

The following observation allows us to visually illustrate $(L,\preceq,\mu,\int)$ for binary case in Figure~\ref{fig:binary}. 

    \begin{observation}\label{obs:binary}
In binary case, there is an one to one mapping from $[0,1]^3$ to $L$. In fact,
\[L=\{\begin{bmatrix} s & t\\ 1-s & 1-t\end{bmatrix} \begin{bmatrix}p & 0\\ 0 & 1-p\end{bmatrix}|s,t,p\in[0,1] \}.\]

Fixing $p$, there is an one to one mapping from $[0,1]^2$ to $\downarrow  U_p$ where $U_p= \begin{bmatrix}p & 0\\ 0 & 1-p\end{bmatrix}$. $\downarrow  U_p$ is the space of all joint distribution matrices whose column sum is $(p,1-p)$ and $L=\{\downarrow  U_p|p\in[0,1]\}$.   

\end{observation}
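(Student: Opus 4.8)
The plan is to prove the three assertions of the observation by direct computation with $2\times 2$ matrices: (i) the displayed product defines a surjection $[0,1]^3\to L$ that is injective away from two degenerate slices; (ii) for each fixed $p$, the lower set $\downarrow U_p$ is exactly the ``column-sum $(p,1-p)$'' slice of $L$, parametrized by $(s,t)\in[0,1]^2$; and (iii) these slices partition $L$.

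First I would record two normal forms. Identifying $L$ with the set of $2\times 2$ matrices $\begin{bmatrix}u_{00}&u_{01}\\u_{10}&u_{11}\end{bmatrix}$ having nonnegative entries summing to $1$, and noting (Definition~\ref{def:lpo}) that a $2\times 2$ column-stochastic matrix is precisely $T_{s,t}:=\begin{bmatrix}s&t\\1-s&1-t\end{bmatrix}$ for $s,t\in[0,1]$, I would multiply out $T_{s,t}U_p$: its columns are $(sp,(1-s)p)$ and $(t(1-p),(1-t)(1-p))$, so it has nonnegative entries summing to $p+(1-p)=1$ (hence lies in $L$) and its column sums are $(p,1-p)$. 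Conversely, given $U\in L$ with first column sum $q:=u_{00}+u_{10}$, any preimage $(s,t,p)$ must satisfy $p=q$; when $q\in(0,1)$ the equations $sq=u_{00}$ and $t(1-q)=u_{01}$ determine $s,t$ uniquely and one checks $T_{s,t}U_q=U$, while when $q\in\{0,1\}$ a column of $U_q$ vanishes, one of $s,t$ is free, and the fiber over $U$ is a segment. Thus $(s,t,p)\mapsto T_{s,t}U_p$ is a bijection from $(0,1)^3$ onto the interior of $L$ and is injective on every slice $\{p=\text{const}\}$ with $p\in(0,1)$; the slices $p\in\{0,1\}$ are the only place the ``one-to-one'' claim must be read up to this collapse. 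In particular $L=\{T_{s,t}U_p:s,t,p\in[0,1]\}$, which is the displayed identity.

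Next, fix $p$. By the definition of $\preceq$, $\downarrow U_p=\{U':U'\preceq U_p\}=\{TU_p:T\text{ column-stochastic}\}=\{T_{s,t}U_p:s,t\in[0,1]\}$ directly; no transitivity argument is needed since this set is already downward closed, a product of column-stochastic matrices being column-stochastic. Combining with the computation above, every element of $\downarrow U_p$ has column sums $(p,1-p)$, and conversely any $U\in L$ with those column sums equals $T_{s,t}U_p$ for $s=u_{00}/p$, $t=u_{01}/(1-p)$ (with the degenerate reading when $p\in\{0,1\}$); hence $\downarrow U_p$ is exactly the set of joint distributions whose column sums are $(p,1-p)$, parametrized by $(s,t)$ and bijectively so when $p\in(0,1)$. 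Finally, each $U\in L$ has a unique first column sum $p\in[0,1]$, so it belongs to $\downarrow U_p$ for that $p$ and no other, giving $L=\bigsqcup_{p\in[0,1]}\downarrow U_p$.

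There is no substantive obstacle; the only point requiring care is $p\in\{0,1\}$, where $U_p$ has a zero column and the $(s,t)$-parametrization of $\downarrow U_p$ degenerates from a square to a segment. Since these two slices are lower-dimensional they are $\mathscr{H}^{C(C-1)}$-null, so they do not affect the volume computations in the later sections and the observation can be used exactly as stated.
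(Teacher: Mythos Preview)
Your proposal is correct and follows essentially the same approach as the paper: compute $T_{s,t}U_p$ directly, read off that it lies in $L$ with column sums $(p,1-p)$, and invert via $p=u_{00}+u_{10}$, $s=u_{00}/p$, $t=u_{01}/(1-p)$. In fact you are more thorough than the paper's own proof, which only verifies the displayed identity $L=L_1$ and silently ignores the degenerate slices $p\in\{0,1\}$; your explicit treatment of $\downarrow U_p$ as the column-sum slice and your remark that the boundary collapse is $\mathscr{H}^{C(C-1)}$-null are welcome additions.
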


The proof is deferred to the appendix.

\end{example}

\begin{figure}[!ht]
\begin{center}
\includegraphics[height = 7cm]{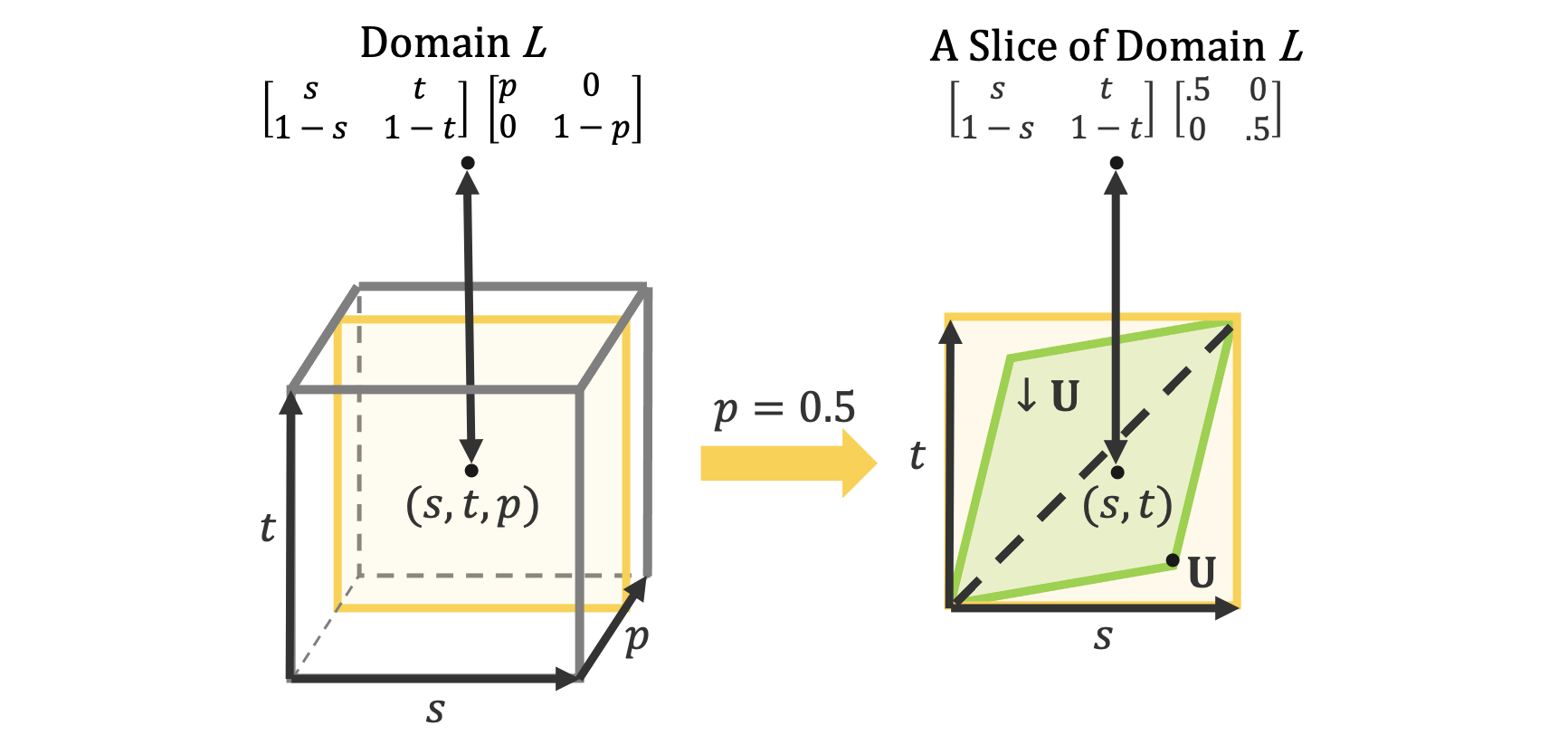}
\end{center}
\captionsetup{singlelinecheck=off} 

\caption[foo bar]{\textbf{Visual illustration $(L,\preceq,\mu,\int)$ in binary case}:
  \begin{itemize}
    \item Domain $L$: there exists a one to one mapping from the domain $L$ to a unit cube $[0,1]^3$. Thus, we visualize $L$ as a unit cube. The right square represents a slice of $L$, $\downarrow  U_{.7}$, the space of all joint distribution matrices whose column sum is $(.7,.3)$.
    \item Lower set $\downarrow U$: for each element $U$, all $U'\preceq U$ constitute a parallelogram (the light green area) whose endpoints are $\{U, \begin{bmatrix} 0 & 1\\ 1 & 0\end{bmatrix}U, \begin{bmatrix} 1 & 1\\ 0 & 0\end{bmatrix}U, \begin{bmatrix} 0 & 0\\ 1 & 1\end{bmatrix}U\}$. This parallelogram is also called $U$'s lower set. 
    \item Uninformative set: when $s=t$ (the black dashed line), the distribution represents independent $X$ and $Y$. In this case, the mutual information should be zero. We call the set of these independent distributions the uninformative set. 
    \item Measure $\mu$: since the lower set is always on a 2-dimensional space, we use the 2-dimensional Hausdorff measure $\mathscr{H}^2$ to measure the area of the parallelogram in $\mathbb{R}^3$. 
\end{itemize} } \label{fig:binary}
\end{figure}


\begin{definition}[Volume Mutual Information $\vmi^{w}$]
    Given an integrable non-negative density function $w$, we define the Volume Mutual Information as \[\vmi^{w}(X;Y):= V^w(U_{X,Y})=\vol^w(\downarrow  U_{X,Y})=\int_{\downarrow  U_{X,Y}}w(x)d\mathscr{H}^{C(C-1)}(x).\] \end{definition}
    
Aided by programming, we can obtain the explicit formula of VMI (Example~\ref{ex:binary}). The choice of density functions affects the property of VMI. Theoretically, we will show that uniform density leads to DMI and polynomial density obtains polynomial VMI (Theorem~\ref{thm:vmi}), which leads to a family of practical dominantly truthful peer prediction mechanisms (Corollary~\ref{coro:main}). Numerically, we will show the influence of density visually by three concrete binary VMI (Example~\ref{ex:binary}). To state the theorem formally, we first give a formal definition for polynomial mutual information.




\begin{theorem}\label{thm:vmi}
$\vmi^{w}$ is an information-monotone mutual information. $\vmi^{w}$ is also non-negative and when $X$ and $Y$ are independent, $\vmi^{w}(X;Y)=0$. Moreover, 
\begin{description}
\item [Uniform density] with the uniform density, $\vmi(X;Y)\propto \dmi(X;Y)^{C-1}$;
\item [Polynomial density] when the density function $w$ is a non-negative degree $d_w$ polynomial, when $C$ is an odd number, $\vmi^D$ is a degree $d_w+C(C-1)$ information-monotone polynomial mutual information and when $C$ is an even number, $\dmi* \vmi^D$ is a degree $d_w+C^2$ information-monotone polynomial mutual information. $(\vmi^D)^2$ is a degree $2(d_w+C(C-1))$ information-monotone polynomial mutual information.
\end{description}
\end{theorem}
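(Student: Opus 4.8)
The plan is to establish the four claims in sequence, handling the qualitative properties first and the polynomial-degree bookkeeping last. For information-monotonicity, I would invoke Lemma~\ref{lem:mono}: it suffices to show $V^w$ is a monotone function on the poset $(L,\preceq)$, and that is exactly the content of Lemma~\ref{lem:key}, so information-monotonicity is immediate. Non-negativity is also immediate since $w\ge 0$ and the Hausdorff integral of a non-negative function over the (measurable) lower set is non-negative. For the vanishing-on-independence claim, I would show that when $X$ and $Y$ are independent, $U_{X,Y}$ is a rank-one matrix (outer product of the marginals), and argue that its lower set $\downarrow U_{X,Y}$ is a lower-dimensional subset of the generic $C(C-1)$-dimensional lower set — concretely, $U_{X,Y}=\mathbf{1}\,q^\top$ for a row vector $q$, so any $U'=TU_{X,Y}$ is again of the form $(T\mathbf{1})q^\top$, and as $T$ ranges over column-stochastic matrices $T\mathbf{1}$ ranges over a set of dimension at most $C-1$; hence $\mathscr{H}^{C(C-1)}(\downarrow U_{X,Y})=0$ and $\vmi^w(X;Y)=0$.

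The substantive part is the explicit-formula claim. For the uniform-density case, the key geometric fact (visible in Observation~\ref{obs:binary} for $C=2$) is that $\downarrow U_{X,Y}$ is the image of the standard parametrization $T\mapsto TU_{X,Y}$ where $T$ ranges over the $C(C-1)$-dimensional simplex-product of column-stochastic matrices; the Hausdorff measure of this image is, by the area formula, the integral over the parameter domain of the Jacobian factor $\sqrt{\det(J^\top J)}$, and I would compute that this Jacobian is a fixed constant times a power of $|\det U_{X,Y}|$. The natural route is to use multilinearity: fixing the column sums (i.e. fixing which "slice" $\downarrow U_p$ we are in, generalizing the binary picture), the lower set is an affine image of a fixed polytope under a linear map built blockwise from $U_{X,Y}$, and the volume scales like $|\det U_{X,Y}|^{C-1}$ because each of the $C-1$ "free" coordinate blocks contributes one factor of the determinant. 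This gives $\vmi \propto \dmi^{C-1}$.

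For the polynomial-density case I would build on the uniform computation: after the change of variables to the fixed parameter polytope, $\vmi^w(U_{X,Y})$ becomes $\int_{\text{polytope}} w\big(\text{affine function of the parameters with coefficients linear in the entries of }U_{X,Y}\big)\cdot |\det U_{X,Y}|^{C-1}\,d(\text{Lebesgue})$. Since $w$ is a degree-$d_w$ polynomial and its argument is affine-linear in the entries of $U_{X,Y}$, the integrand is $|\det U_{X,Y}|^{C-1}$ times a polynomial of degree $d_w$ in those entries; integrating over the fixed polytope (which does not depend on $U_{X,Y}$) preserves polynomiality in the entries and does not raise the degree. Then $\det U_{X,Y}$ is a degree-$C$ polynomial in the entries, so $|\det U_{X,Y}|^{C-1}$ is a genuine polynomial exactly when $C-1$ is even, i.e. $C$ odd, giving total degree $d_w+C(C-1)$; when $C$ is even, $|\det U_{X,Y}|^{C-1}=|\det U_{X,Y}|\cdot(\det U_{X,Y})^{C-2}$, so multiplying by $\dmi=|\det U_{X,Y}|$ clears the absolute value and yields degree $d_w+C+(C-2)C+? $ — more carefully, $\dmi\cdot\vmi^D$ has the $|\det|$ appearing to the even power $C$, hence is a polynomial of degree $d_w + C(C-1) + C = d_w + C^2$. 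Finally $(\vmi^D)^2$ squares away the absolute value unconditionally, doubling the degree to $2(d_w+C(C-1))$. I expect the main obstacle to be rigorously justifying the area-formula/Jacobian computation — in particular pinning down the correct parametrization of $\downarrow U_{X,Y}$ as a $C(C-1)$-dimensional manifold-with-corners and proving the Jacobian equals a constant times $|\det U_{X,Y}|^{C-1}$ — since everything else is degree accounting once that identity is in hand. I would handle $C=2$ explicitly as a warm-up and then argue the general case by the block/multilinearity structure of the parametrizing map.
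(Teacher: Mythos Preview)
Your proposal is correct and follows essentially the same route as the paper: Lemmas~\ref{lem:key} and~\ref{lem:mono} for monotonicity, a dimension count for vanishing on independent pairs, the area formula for the uniform case, and degree bookkeeping after change of variables for the polynomial case. The one place where the paper is sharper is exactly the gap you flagged: rather than arguing the Jacobian scales like $|\det U|^{C-1}$ by ``block/multilinearity,'' the paper parametrizes $T$ by its first $C-1$ rows $T_*$, writes $\mvec(TU)=(U^\top\otimes \mathbf{W})\mvec(T_*)+\text{const}$ for an explicit $C\times(C-1)$ matrix $\mathbf{W}$, and then reads off $\sqrt{\det(\mathbf{J}^\top\mathbf{J})}=\sqrt{\det((UU^\top)\otimes(\mathbf{W}^\top\mathbf{W}))}=C^{C/2}|\det U|^{C-1}$ directly from the Kronecker-product determinant identity; this cleanly closes the step you were worried about.
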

Every degree $d$ polynomial monotone mutual information directly induce a dominantly truthful multi-task peer prediction mechanism that works for $\geq d$ tasks (Lemma~\ref{lem:poly}). 

\begin{corollary} \label{coro:main}   There exists a family of practical, dominantly truthful and prior-independent multi-task peer prediction mechanisms. \end{corollary}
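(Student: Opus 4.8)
The plan is to assemble Corollary~\ref{coro:main} directly from the machinery already built up: Lemma~\ref{lem:paradigm}, Lemma~\ref{lem:poly}, and Theorem~\ref{thm:vmi}. First I would fix the choice set size $C$ and pick a non-negative polynomial density $w$ of degree $d_w$ on $L$ — for instance any strictly positive polynomial, so that the ``informative prior'' hypothesis of Lemma~\ref{lem:paradigm} is genuinely satisfiable. By Theorem~\ref{thm:vmi}, the resulting $\vmi^D$ (or, handling the parity of $C$, the product $\dmi\cdot\vmi^D$ when $C$ is even, or $(\vmi^D)^2$ uniformly) is an information-monotone polynomial mutual information of a fixed finite degree $d$: concretely $d = d_w + C(C-1)$ when $C$ is odd, $d = d_w + C^2$ when $C$ is even, and $d = 2(d_w + C(C-1))$ for the squared version. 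Theorem~\ref{thm:vmi} also supplies the two auxiliary properties needed downstream: $\vmi^w$ is non-negative and vanishes on the uninformative set where $X,Y$ are independent.

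Next I would invoke Lemma~\ref{lem:poly}: every degree $d$ polynomial mutual information admits an unbiased estimator $\ube^{\mi}$ that works with $T \geq d$ samples. Feeding this estimator into the Mutual Information Paradigm$(\ube^{\mi})$ of Section~\ref{sec:mip} yields a concrete mechanism. Lemma~\ref{lem:paradigm} then certifies that this mechanism is dominantly truthful, since $\mi = \vmi^D$ (resp.\ $\dmi\cdot\vmi^D$, resp.\ $(\vmi^D)^2$) is information-monotone, non-negative, vanishes for independent variables, and — by the choice of a strictly positive $w$ — the agents' prior is informative with respect to it whenever $\dmi(X_A;X_B)>0$. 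The mechanism is practical because it runs on $T \geq d$ tasks with $d$ finite, and it is prior-independent and minimal because the Mutual Information Paradigm only feeds raw reports into a fixed estimator, never eliciting forecasts or using the prior.

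Finally, to get a genuine \emph{family} rather than a single mechanism, I would note that the construction is parametrized by the density $w$: ranging $w$ over the (infinite-dimensional) cone of non-negative polynomials produces infinitely many distinct VMI measures, and — crucially for the paper's stated contribution — Theorem~\ref{thm:vmi}'s uniform-density case shows DMI itself arises as the special instance $w\equiv 1$, while other choices of $w$ give polynomial mutual information measures that are \emph{not} polynomials of $\dmi$ (this non-degeneracy is what makes the family ``new''; it can be checked on small cases, e.g.\ binary, by exhibiting two densities whose VMI formulas are not functionally related through $\dmi$). I expect the only real content here is bookkeeping — matching the parity cases of $C$ to the right polynomial object and tracking the degree so that the ``$\geq d$ tasks'' bound is explicit — since all the substantive work (information-monotonicity, the polynomial format, unbiased estimation) is already discharged by the cited results; the main obstacle is purely in stating precisely, for the even-$C$ case, which polynomial ($\dmi\cdot\vmi^D$ versus $(\vmi^D)^2$) one hands to Lemma~\ref{lem:poly}, and confirming that it still vanishes exactly on the uninformative set so that Lemma~\ref{lem:paradigm} applies verbatim.
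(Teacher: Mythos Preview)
Your proposal is correct and follows exactly the paper's approach: invoke Theorem~\ref{thm:vmi} to produce a family of information-monotone polynomial mutual information measures (handling the parity of $C$), then apply Lemma~\ref{lem:poly} to obtain unbiased estimators with finitely many samples, and finally apply Lemma~\ref{lem:paradigm} to conclude dominant truthfulness. The paper's own proof is a three-sentence version of precisely this chain, so your added bookkeeping (explicit degrees, the strictly-positive-$w$ remark for the informative-prior condition, the parametrization-by-$w$ argument for a genuine family) is all consistent elaboration rather than a different route.
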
 

\begin{proof} [Proof of Corollary~\ref{coro:main}] Theorem~\ref{thm:vmi} shows the existence of a family of polynomial mutual information. Lemma~\ref{lem:poly} shows that each degree $d$ polynomial mutual information $\mi$ has an unbiased estimator with $\geq d$ samples. Lemma~\ref{lem:paradigm} shows that when agents' prior is informative for $\mi$, we can use the above unbiased estimator to construct a dominantly truthful peer prediction mechanism that works for $\geq d$ tasks.\end{proof}
We have proved that polynomial VMI can be used to construct practical mechanisms. In Appendix~\ref{sec:newmec}, we will also provide a concrete example for VMI-Mechanism in the binary case.

\paragraph{Proof outline for Theorem~\ref{thm:vmi}} The fact that $\vmi^w$ is information-monotone follows directly from Lemma~\ref{lem:key} and Lemma~\ref{lem:mono}. We will apply the area formula (Fact~\ref{fact:area}) to prove the other parts. With the uniform density, to show that $\vmi(X;Y)\propto \dmi(X;Y)^{C-1}$, we only need to show the original volume of the lower set is proportional to $\dmi(X;Y)^{C-1}$. We will construct a proper affine mapping from $\mathbb{R}^{C(C-1)}$ to $L$ and directly apply the area formula to show this result. To show the last part of this theorem, we will write down the integration explicitly and then analyze it. We defer the full proof to the appendix.

\subsection{Visualization of Binary Volume Mutual Information}\label{sec:binary}

This section will provide a visualization method for all binary mutual information. By using this visualization method, we visualize three new VMIs for three styles of densities (mountain, plain, basin). We additionally visualize two existed mutual information measures in Appendix~\ref{sec:exist}.

\begin{definition}[Contour plots of binary MIs]
    In binary case, the mutual information can be seen as a function with 3 variables:\[\mi^{3d}(s,t,p;\mi):=\mi(\begin{bmatrix} s & t\\ 1-s & 1-t\end{bmatrix}  \begin{bmatrix} p & 0\\ 0 & 1-p\end{bmatrix}).\] To visualize the contour plot in a 2 dimensional space, we fix $p=p_0$ and draw the contours of $\mi^{2d}(s,t;p_0,\mi):=\mi^{3d}(s,t,p_0;\mi)$ on slice $p=p_0$. 
\end{definition}

Figure~\ref{fig:contour} illustrates the contours for information-monotone MIs and a MI that is not information-monotone. 

\begin{figure}[!ht]

\includegraphics[width = \textwidth]{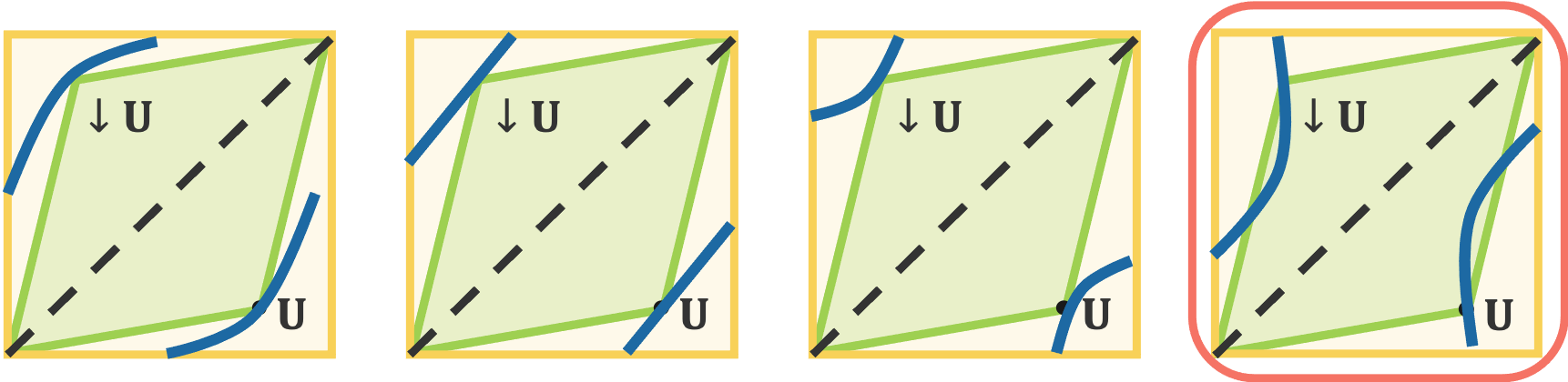}

\caption{\textbf{Information-monotone MI vs Un-information-monotone MI}: the first three figures illustrate the contours of different information-monotone MIs. In these figures, the contours on each element $U$ (the blue lines) must always contain $U$'s lower set (the green parallelogram). The last figure (with a red frame) illustrates the contours of a MI which is not information-monotone. } \label{fig:contour}
\end{figure}

We first visualize multiple commonly used MIs and compare their contours in the same square slice. 

\paragraph{Visualization of Commonly Used Mutual Information}\label{sec:exist}

We will visualize two existed commonly used mutual information measures in this section. These measures are designed by a distance-based approach. For two random variables $X$ and $Y$, $U_Y$ represents the prior distribution over $Y$ \emph{when we have no information}. That is $U_Y(y)=\Pr[Y=y]$. $U_{Y|x}$ denotes the posterior distribution $Y$, i.e. $U_{Y|x}(y)=\Pr[Y=y|X=x]$ \emph{when we have information $X=x$}. When $X$ and $Y$ are independent, knowing $X$ will not change our belief for $Y$, i.e., $U_{Y|X}$ equals $U_{Y}$. When $X$ and $Y$ are highly correlated, knowing $X$ changes the belief for $Y$ a lot, i.e., $U_{Y|X}$ is quite different from $U_{Y}$. Intuitively, we can use the ``distance'' between the informative prediction $U_{Y|X}$ and the uninformative prediction $U_{Y}$ to represent the mutual information between $X$ and $Y$. The distance measure should be picked carefully to satisfy information-monotonicity. Two different distance families, $f$-divergence $\mathrm{D}_f(\cdot,\cdot)$ and Bregman-divergence $\mathrm{D}_{PS}(\cdot,\cdot)$, can induce two families of information-monotone mutual information measures \cite{Kong:2019:ITF:3309879.3296670}. We list these measures here. 

\begin{itemize}

    \item $f$ Mutual Information ($\fmi^f$): $\E_{x\leftarrow U_X}\mathrm{D}_f(U_{Y|x},U_Y)$ 
    \item Bregman Mutual Information ($\bmi^{PS}$): $\E_{x\leftarrow U_X}\mathrm{D}_{PS}(U_{Y|x},U_Y)$ 

\end{itemize}

We then give two special cases of the above families. The commonly used KL-divergence belongs to both of the families and induces the classic Shannon mutual information. The commonly used scoring rule, the quadratic scoring rule, induces the quadratic mutual information. 

\begin{itemize}
\item Shannon Mutual Information (SMI): $\E_{x\leftarrow U_X}\mathrm{D}_{KL}(U_{Y|x},U_Y)$ 

\item Quadratic Mutual Information (QMI): $\E_{x\leftarrow U_X}||U_{Y|x}-U_Y||^2$ 

\end{itemize}

We visualize SMI and QMI, as well as DMI in Figure~\ref{fig:mi}.

\begin{figure}[!ht]
\begin{center}
\includegraphics[width = 15cm]{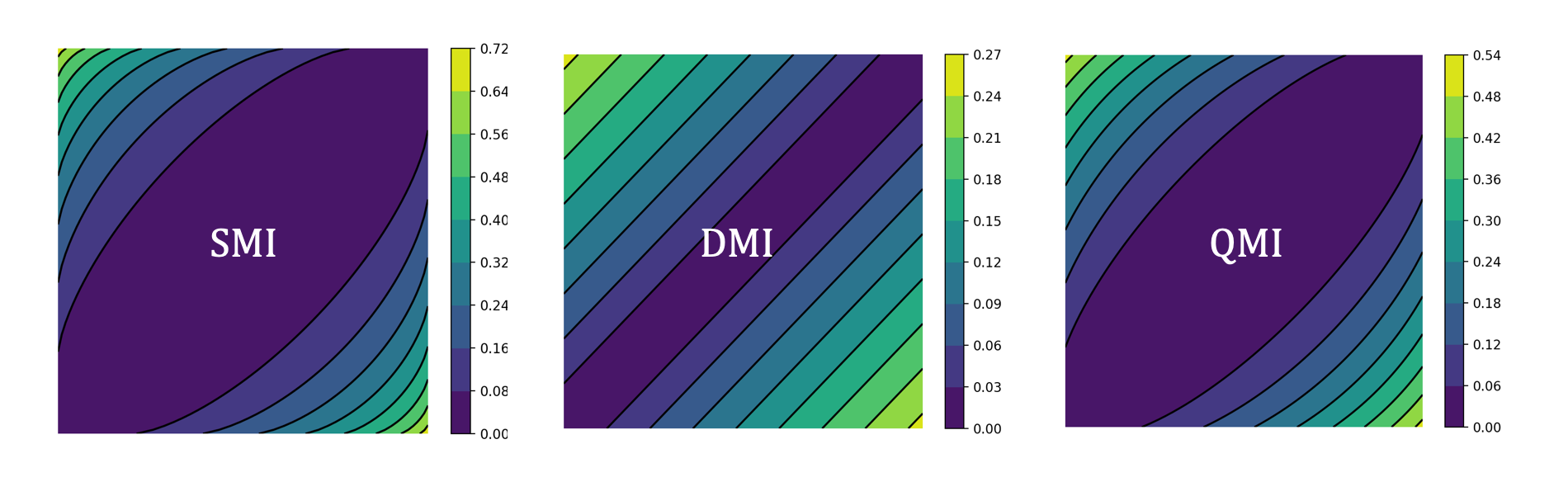}
\end{center}
\caption{Contours of SMI, DMI, QMI on slice $p=.5$: DMI has the parallel lines ``|||'' as contours and both SMI and QMI have shapes like ``(|)''. Compared with ``|||'', This ``(|)'' shape contour will punish two-sided noise (far from the square frame's boundary) more and one-sided noise (on the boundary of the square frame) less.}\label{fig:mi}
\end{figure}

\paragraph{Visualization of Binary Volume Information} We use the results of Lemma~\ref{lem:binaryvmi} and employ the computer to compute the indefinite integration and obtain the explicit formula of $\vmi^w$ in the binary case. 

\begin{lemma}\label{lem:binaryvmi}
In binary case, 
\begin{align*}
    \vmi^{w}(U)=& 2|\det(U)|\int_{s=0}^{1}\int_{t=0}^1 w(\begin{bmatrix} s & t\\ 1-s & 1-t\end{bmatrix} U )d s d t\\
    =& 2|u_{00}u_{11}-u_{10}u_{01}|\int_{s=0}^{1}\int_{t=0}^1 w(\begin{bmatrix} s & t\\ 1-s & 1-t\end{bmatrix} \begin{bmatrix}u_{00} & u_{01}\\ u_{10} & u_{11}\end{bmatrix}
 )d s d t
\end{align*}
\end{lemma}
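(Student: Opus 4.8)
The statement is a computation of the volume integral $\vmi^w(U) = \int_{\downarrow U} w(x)\, d\mathscr{H}^2(x)$ in the binary case. The plan is to parametrize the lower set $\downarrow U$ by the unit square $[0,1]^2$ via an affine map and then apply the area formula (Fact~\ref{fact:area}) to reduce the Hausdorff-measure integral to an ordinary Lebesgue integral over $[0,1]^2$, with a constant Jacobian factor. The only thing to pin down is that this Jacobian factor equals $2|\det(U)|$.

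First I would recall from the excerpt (the description of Figure~\ref{fig:binary} and Observation~\ref{obs:binary}) that in the binary case $U'\preceq U$ iff $U' = \begin{bmatrix} s & t\\ 1-s & 1-t\end{bmatrix}U$ for some $s,t\in[0,1]$: the columns of the column-stochastic matrix $T$ are each a point of the simplex $\{(s,1-s): s\in[0,1]\}$, so $T$ is determined by the pair $(s,t)\in[0,1]^2$. Hence the map
\[
\Phi:[0,1]^2\to \mathbb{R}^4,\qquad \Phi(s,t) = \mvec\!\left(\begin{bmatrix} s & t\\ 1-s & 1-t\end{bmatrix}U\right)
\]
is an affine parametrization of $\downarrow U$, and it is injective whenever $\det(U)\neq 0$ (two different $(s,t)$ give two different first rows after multiplying by an invertible $U$; if $\det(U)=0$ the lower set is lower-dimensional and both sides of the claimed identity vanish, so that degenerate case is handled separately and trivially).

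Next I would compute the Jacobian. Writing $U = \begin{bmatrix} u_{00} & u_{01}\\ u_{10} & u_{11}\end{bmatrix}$, the image point has first row $(su_{00}+tu_{10},\, su_{01}+tu_{11})$ and second row the complementary entries, so as a vector in $\mathbb{R}^4$ the two partial derivatives are $\partial_s\Phi = (u_{00}, u_{01}, -u_{00}, -u_{01})$ and $\partial_t\Phi = (u_{10}, u_{11}, -u_{10}, -u_{11})$. The area formula says $d\mathscr{H}^2 = J_\Phi\, ds\, dt$ where $J_\Phi = \sqrt{\det(D\Phi^\top D\Phi)}$ is the $2$-dimensional Jacobian, i.e.\ $J_\Phi = \sqrt{\|\partial_s\Phi\|^2\|\partial_t\Phi\|^2 - \langle \partial_s\Phi,\partial_t\Phi\rangle^2}$, the area of the parallelogram spanned by $\partial_s\Phi$ and $\partial_t\Phi$. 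Each of these two vectors is of the form $(a,b,-a,-b)$, so this parallelogram lies inside the $2$-plane $\{(a,b,-a,-b)\}$; up to the isometry $(a,b,-a,-b)\mapsto \sqrt 2\,(a,b)$ of that plane onto $\mathbb{R}^2$, the parallelogram has vertices $\sqrt2(u_{00},u_{01})$ and $\sqrt2(u_{10},u_{11})$, whose area is $(\sqrt2)^2|u_{00}u_{11}-u_{10}u_{01}| = 2|\det U|$. Plugging $J_\Phi = 2|\det U|$ (a constant, independent of $s,t$) into the area formula gives
\[
\vmi^w(U) = \int_{\downarrow U} w\, d\mathscr{H}^2 = \int_0^1\!\!\int_0^1 w(\Phi(s,t))\, J_\Phi\, ds\, dt = 2|\det U|\int_0^1\!\!\int_0^1 w\!\left(\begin{bmatrix} s & t\\ 1-s & 1-t\end{bmatrix}U\right) ds\, dt,
\]
which is exactly the claimed formula; the second displayed line in the lemma is just the substitution $|\det U| = |u_{00}u_{11}-u_{10}u_{01}|$.

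The main obstacle is the bookkeeping in the Jacobian step: one must correctly identify the $2$-dimensional Hausdorff measure on the affine image with the pullback measure $J_\Phi\, ds\,dt$ via the area formula (rather than naively using a $4\times4$ determinant, which does not exist for a non-square map), and then correctly evaluate the $\sqrt2$ scaling coming from the fact that the tangent vectors live in the ``anti-symmetric rows'' plane $\{(a,b,-a,-b)\}$ rather than in a coordinate plane. Everything else — injectivity of $\Phi$, the degenerate $\det U = 0$ case, and the final substitution — is routine.
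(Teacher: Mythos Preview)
Your proposal is correct and follows essentially the same route as the paper: parametrize $\downarrow U$ by the unit square via the affine map $(s,t)\mapsto \begin{bmatrix} s & t\\ 1-s & 1-t\end{bmatrix}U$ and apply the area formula (Fact~\ref{fact:area}) with constant Jacobian $2|\det U|$. Your explicit computation of the Jacobian via the isometry $(a,b,-a,-b)\mapsto\sqrt2\,(a,b)$ is in fact more detailed than the paper's proof, which writes down the matrix $\mathbf{J}$ and jumps directly to the result (implicitly relying on the general computation $\sqrt{\det(\mathbf{J}^\top\mathbf{J})}=C^{C/2}|\det U|^{C-1}$ from the proof of Theorem~\ref{thm:vmi}); the only cosmetic difference is that the paper vectorizes column-by-column rather than row-by-row, which is immaterial.
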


We defer the proof to appendix. 

\begin{figure}[!ht]
\begin{center}
\includegraphics[width = \textwidth]{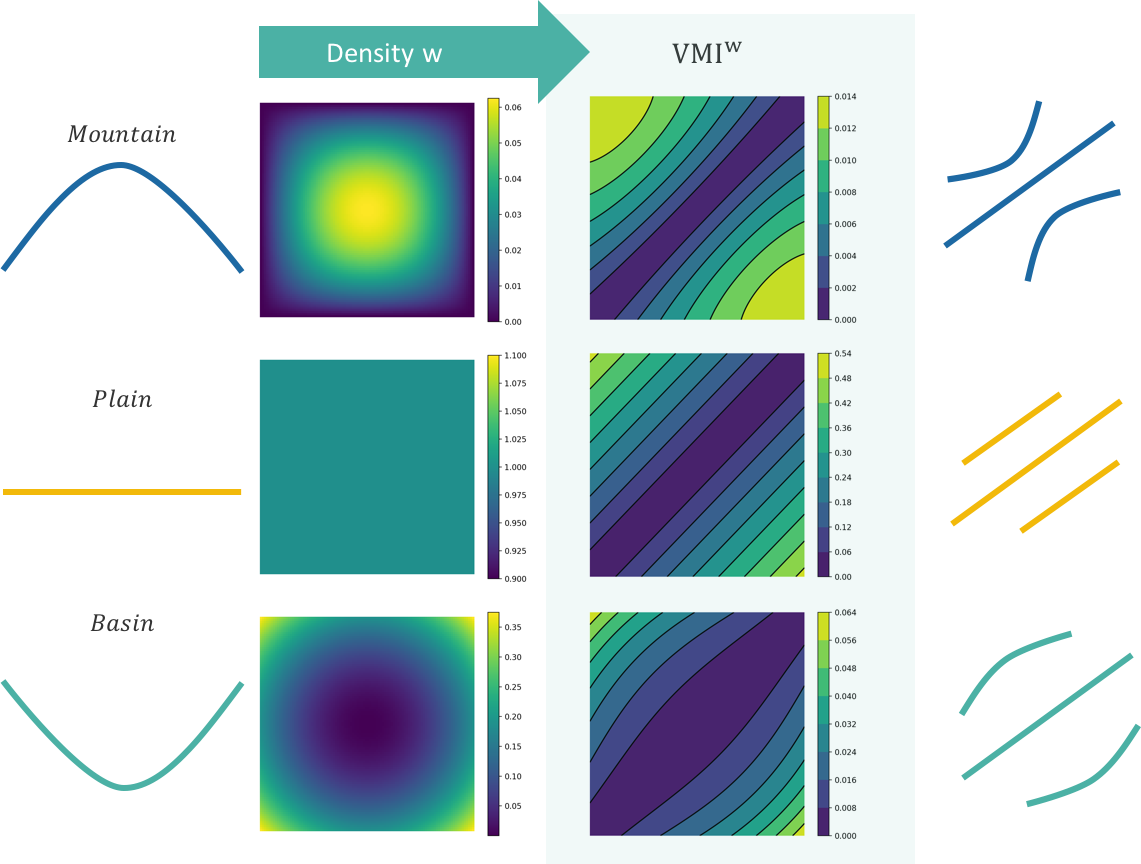}
\end{center}
\caption{\textbf{From density function $w$ to $\vmi^w$}: the left column shows the heatmaps of the density functions $w$ and the right column shows the contours of their corresponding $\vmi^w$s. The ``plain'' shape density has uniform density everywhere. Its corresponding VMI, DMI, has the parallel lines ``|||'' as contours. The ``mountain'' shape density has the highest density in the center. This will lead to a VMI whose contour is like ``)|(''. Compared with ``|||'', This ``)|('' shape contour will punish one-sided noise (e.g. say ``like'' when ``like'', say ``hate'' w.p. $\frac{1}{2}$ when ``hate'' ) more. The ``basin'' shape density has the lowest density in the center. This will lead to a VMI whose contour is like ``(|)''. Compared with ``|||'', This ``(|)'' shape contour will punish two-sided noise more. }\label{fig:wtovmi}

\end{figure}

\begin{example}\label{ex:binary}

Here we provide three concrete examples to show how the choice of density will affect the corresponding volume mutual information. 


We pick the $p_0=.5$ slice to illustrate the 2-dimensional contour of the VMIs, which is the contour of $\mi^{2d}(s,t;.5,\vmi^w)$. We will also draw the heatmap of the density function. In the $p_0=.5$ slice, in the new coordinates, the density function changes to $w^{2d}(s,t):=w(\begin{bmatrix} s & t\\ 1-s & 1-t\end{bmatrix}  \begin{bmatrix} .5 & 0\\ 0 & .5\end{bmatrix})$. 

\begin{enumerate}
    \item \textbf{Mountain} $w(\begin{bmatrix}a & b\\ c & d\end{bmatrix})=16 abcd$, $w^{2d}(s,t)=s(1-s)t(1-t)$: 

This density function is called ``Mountain'' since the center has a higher density than its surroundings. The highest density will be obtained when $s=t=.5$. \begin{align*}
    \vmi^{w}(U)
    = & 2|\det(U)|(\frac{8 u_{00}^{2}}{15} u_{01}^{2} + \frac{4 u_{01}}{3} u_{00}^{2} u_{11} + \frac{4 u_{00}^{2}}{9} u_{11}^{2}\\& + \frac{4 u_{00}}{3} u_{01}^{2} u_{10} + \frac{40 u_{00}}{9} u_{01} u_{10} u_{11} + \\& \frac{4 u_{00}}{3} u_{10} u_{11}^{2} + \frac{4 u_{01}^{2}}{9} u_{10}^{2} + \frac{4 u_{01}}{3} u_{10}^{2} u_{11} + \frac{8 u_{10}^{2}}{15} u_{11}^{2})\end{align*}
    \item \textbf{Plain} $w(\begin{bmatrix}a & b\\ c & d\end{bmatrix})=1$, $w^{2d}(s,t)=1$: \begin{align*}
    \vmi^{w}(U)=  & 2|\det(U)|\end{align*}
    \item \textbf{Basin} $w(\begin{bmatrix}a & b\\ c & d\end{bmatrix})=3((a-.25)^2+(b-.25)^2)$, $w^{2d}(s,t)=\frac{3}{4}((s-.5)^2+(t-.5)^2)$: 

This density function is called ``Basin'' since the center has a lower density than its surroundings. The lowest density will be obtained when $s=t=.5$.

\begin{align*}
    \vmi^{w}(U)=  & 2|\det(U)|(u_{00}^{2} + 1.5 u_{00} u_{10} + u_{01}^{2} + 1.5 u_{01} u_{11} \\& +  u_{10}^{2} +  u_{11}^{2}  - 0.375)\end{align*}
\end{enumerate}



The visualizations of $w$ and $\vmi^w$ are presented in Figure~\ref{fig:wtovmi}.

\end{example}

The above example also provides three concrete polynomial mutual information by multiplying $|\det(U)|$ to each of them. The plain one corresponds to DMI's square while the mountain and basin density provide two new polynomial mutual information for the binary case, which leads to two new practical dominantly truthful peer prediction mechanisms. 

\paragraph{Visualization of a New Practical Dominantly Truthful Mechanism} \label{sec:newmec}
We have proved that polynomial VMI can be used to construct practical mechanism. Here we will also provide a concrete example in the binary case. 
We use a new polynomial binary mutual information $\vmi^{\star}$ to construct a new peer prediction mechanism in the binary case. Our results work for non-binary case, this example uses the binary case for ease of illustration. 
We pick the ``Mountain'' case (Example~\ref{ex:binary}) and multiply $\det(U)$ to obtain a new polynomial binary mutual information $\vmi^{\star}$.

\begin{align*}
    \vmi^{\star}(X;Y)
    = & 2(u_{00}u_{11}-u_{01}u_{10})^2 (\frac{8 u_{00}^{2}}{15} u_{01}^{2} + \frac{4 u_{01}}{3} u_{00}^{2} u_{11} + \frac{4 u_{00}^{2}}{9} u_{11}^{2}\\& + \frac{4 u_{00}}{3} u_{01}^{2} u_{10} + \frac{40 u_{00}}{9} u_{01} u_{10} u_{11} + \\& \frac{4 u_{00}}{3} u_{10} u_{11}^{2} + \frac{4 u_{01}^{2}}{9} u_{10}^{2} + \frac{4 u_{01}}{3} u_{10}^{2} u_{11} + \frac{8 u_{10}^{2}}{15} u_{11}^{2})\end{align*}

where $U=\begin{bmatrix}
    u_{00} & u_{01}\\u_{10}&u_{11}
\end{bmatrix}$ is the joint distribution matrix of $X,Y$.

It's hard to tell that $\vmi^{\star}$ satisfies the information-monotonicity from the above formula while $\vmi^{\star}$'s contour plot (Figure~\ref{fig:demo}) intuitively shows the monotonicity. Section~\ref{sec:binary} shows that $\vmi^{\star}$ is information-monotone from its construction. With $\vmi^{\star}$'s formula, we can construct a new constant-round dominantly truthful mechanism in the binary case by paying the participants the unbiased estimator of $\vmi^{\star}$. Previously, DMI-Mechanism is the only known constant-round dominantly truthful mechanism. 
    
\paragraph{$\vmi^{\star}$-Mechanism} $n$ participants are assigned $T\geq 8$ a priori similar tasks. The participants finish the tasks without any communication. 
\begin{description}
\item[Report] For each task $t$, each participant $i$ privately receives $c_i^t$ and reports $\hat{c}_i^t$. 
\item[Payment] For every two agents $i\neq j\in [n]$, we arbitrarily pick $8$ tasks and $E_t(c,c')$ is a binary indicator event such that $E_t(c,c')=1$ if for task $t$, agent $i$'s answer is $c$ and agent $j$'s answer is $c'$. Otherwise, $E_t(c,c')=0$. We define
 
\begin{align*}
    p_{ij}
    := & 2(E_1(0,0)E_2(1,1)-E_1(0,1)E_2(1,0)) (E_3(0,0)E_4(1,1)-E_3(0,1)E_4(1,0)) \\
    &\bigg(\frac{8 E_5(0,0)E_6(0,0)}{15} E_7(0,1)E_8(0,1)+ \frac{4 E_5(0,1)}{3} E_6(0,0) E_7(0,0) E_8(1,1)\\
    & + \frac{4 E_5(0,0)E_6(0,0)}{9} E_7(1,1)E_8(1,1) + \frac{4 E_5(0,0)}{3} E_6(0,1)E_7(0,1) E_8(1,0) \\&+ \frac{40 E_5(0,0)}{9} E_6(0,1) E_7(1,0) E_8(1,1)+ \frac{4 E_5(0,0)}{3} E_6(1,0) E_7(1,1)E_8(1,1)\\& + \frac{4 E_5(0,1)E_6(0,1)}{9} E_7(1,0)E_8(1,0) + \frac{4 E_5(0,1)}{3} E_6(1,0)E_7(1,0) E_8(1,1) \\&+ \frac{8 E_5(1,0)E_6(1,0)}{15} E_7(1,1)E_8(1,1)\bigg)\end{align*}
Agent $i$'s payment is $p_i:= \sum_{j\neq i\in [n]} p_{ij} $

\end{description}

The above mechanism is a special mutual information paradigm by using $\vmi^{\star}$'s unbiased estimator. According to Lemma~\ref{lem:paradigm}, $\vmi^{\star}$-Mechanism is dominantly truthful, prior-independent and works for $\geq 8$ tasks.

\begin{figure}[!ht]     \begin{center}
        \includegraphics[width=9cm]{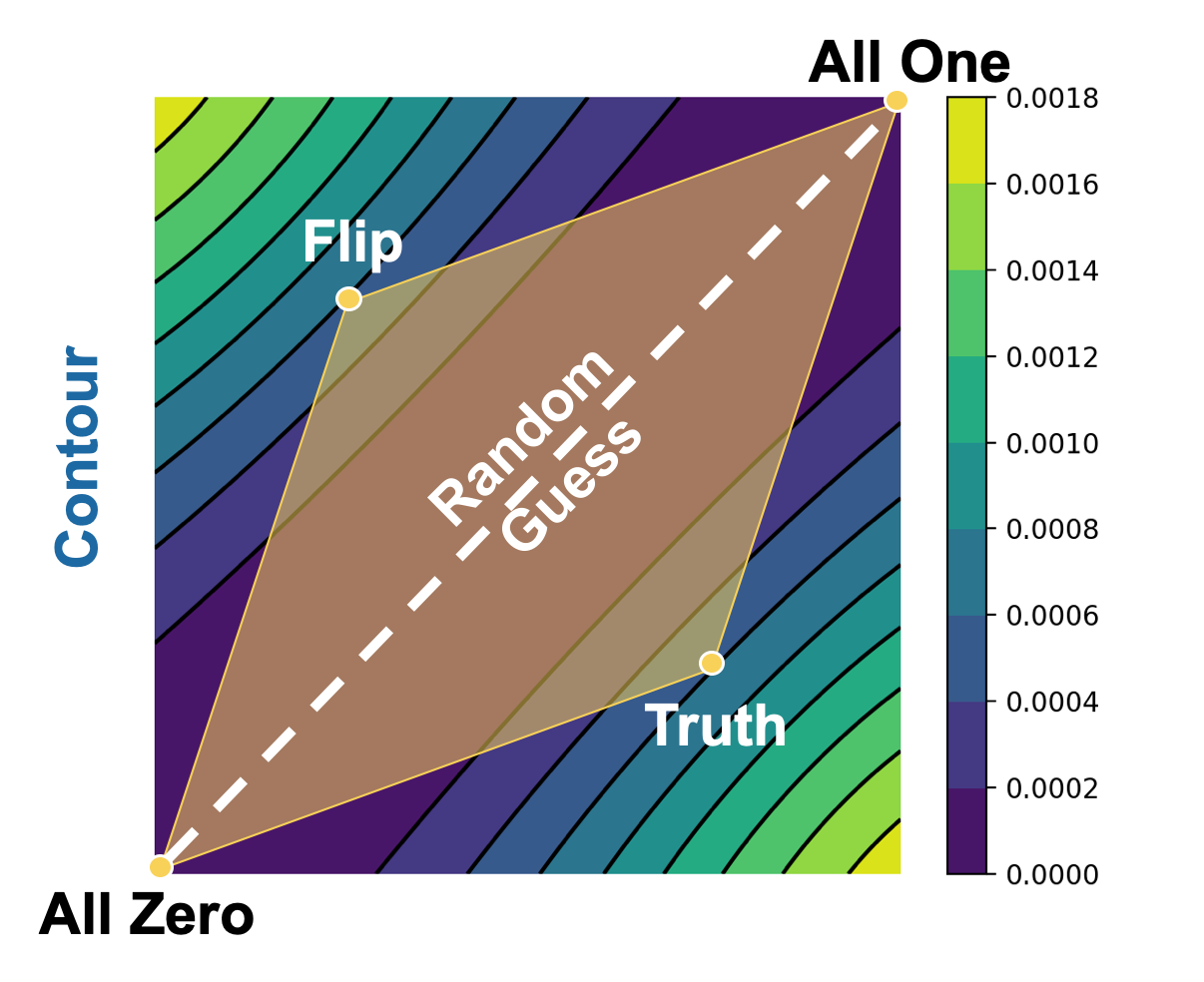}
        \caption{Illustration for $\vmi^{\star}$-Mechanism: Alice and Bob participate in the mechanism. Fixing Bob's strategy, when $U$ is the joint distribution over Bob and honest Alice's reports, Alice's strategy $\mathbf{S}$ corresponds to joint distribution $\mathbf{S}U$. We draw the contours of $\vmi^{\star}$ on the slice on $U$ and visualize Alice's strategy simultaneously. All strategies consist of a light yellow parallelogram with four pure strategies as vertices: truth-telling, always flipping the answer, always answering zero, always answering one. From the plot, when Alice tells the truth or always flips her answer, she will be paid the highest. When Alice reports uninformative answer like always saying zero/one or random guessing without looking at the questions, she will be paid zero, i.e., the lowest.} \label{fig:demo}

    \end{center}
\end{figure}

\section{Optimizing Multi-task Peer Prediction}\label{sec:opt}

Finally this section will discuss the optimization of multi-task peer prediction and use VMI to construct the optimal multi-task peer prediction. 

We start by introducing the optimization goal. The dominant truthfulness guarantees that truth-telling is the best report strategy, given that the participants receive the signals, while it may not give the participants incentive to spend a sufficient amount of effort to perform the tasks. Most previous work's analysis focuses on the setting where the participants do not need to invest any effort to obtain the signals (e.g. Do you like Panda Express). In this case, dominant truthfulness is sufficient. However, for a certain amount of tasks (e.g. online product evaluation, art evaluation), the participants need to invest effort. We will introduce an effort strategy model such that we can properly define the mechanism design goal about incentivizing efforts. 

\paragraph{Effort Strategy Model} We assume that when Alice and Bob spend full efforts, the joint distribution over their signals is $U_G$. Alice can pick an effort strategy that leads to an intrinsic noise $N_A\in \mathbb{R}^{C\times C}$ for the signal she observes. That is, $N_A(c',c)$ is the probability that her full effort's signal is $c$, while she observes signal $c'$. Alice's effort is modeled as a function of her intrinsic noise $N_A\in \mathbb{R}^{C\times C}$, $e_A(N_A)$. 

The requester's expected \emph{value} for the elicited answers is a function of the underlying joint distribution over Alice and Bob's answers, $v(\hat{U}_{A,B})$. In our model, since both $U_{A,B}$ and $\hat{U}_{A,B}$ can be seen as $C\times C$ matrices, we can represent $\hat{U}_{A,B}$ as follows. \[\hat{U}_{A,B}=S_A U_{A,B} S_B ^{\top}=S_A N_A U_G N_B^{\top}S_B ^{\top}.\] 

We will make natural monotonicity and continuity assumptions for the value and effort functions. Intuitively, more noisy intrinsic noise requires less effort and leads to less value to the task requester. 

\begin{assumption} [Information-monotonicity, continuous value/effort, and finite effort level choices]\label{assume:mono}
We assume that the effort functions and value functions and information-monotone in the sense that 
\[ \forall U'\preceq U, v(U')\leq v(U), v(U^{'\top})\leq v(U^{\top});\] \[\forall\footnote{We can naturally extend Definition~\ref{def:lpo} to the space of all column-stochastic matrices, i.e, effort strategies. That is, for two column-stochastic matrices $N'\preceq N$ if there exists a column-stochastic matrix $T$ such that $N'= T N$.} N'\preceq N, e_A(N')\leq e_A(N), e_B(N') \leq e_B(N) \] which implies that  post-processing the data does not require any effort or increase the value. We additionally assume that the value/effort function is continuous and both Alice and Bob pick their effort strategies from a finite discrete set.

\end{assumption}\label{assume:opt}

We will optimize over dominantly truthful and practical mechanisms. Thus, once Alice and Bob determine their effort strategies, they will truthfully report their signals. Therefore, we can use $U_{A,B}=N_A U_G N_B^{\top}$ instead of $\hat{U}_{A,B}$. Then Alice's expected payment is a function of $U_{A,B}$ and denoted by $\mathcal{P}_A(U_{A,B})$. We model Bob analogously.

\begin{example}\label{eg:opt}
Alice and Bob are assigned multiple similar quality evaluation task. Alice has three possible effort strategies which lead to the following intrinsic noises: 
\begin{align*}
&N_A^0(\text{bad},\text{bad})=.5,N_A^0(\text{bad},\text{good})=.5, e_A(N_A^0)=0\tag{full noise}\\ 
&N_A^1(\text{bad},\text{bad})=1,N_A^1(\text{bad},\text{good})=.4, e_A(N_A^1)=1\tag{one-sided noise}\\
&N_A^2(\text{bad},\text{bad})=.8,N_A^2(\text{bad},\text{good})=.2, e_A(N_A^2)=10 \tag{two-sided noise}
\end{align*}
and Bob has two possible effort strategies which lead to intrinsic noises $N_B^0=N_A^0,e_B(N_B^0)=0$, $N_B^1=N_A^1,e_B(N_B^1)=1$. 

Regarding the value of the requester, when either Alice or Bob's signal is fully noisy, the requester's value will be zero. For other cases, 
\[v(N_A^1 U_G N_B^{1\top})=15, v(N_A^2 U_G N_B^{1\top})=50\]


\end{example}

\paragraph{Optimization Goal} The optimization problem is \begin{align*}
\max_{\mathcal{P}_A,\mathcal{P}_B} \quad& v(U_{A,B})-\mathcal{P}_A(U_{A,B})-\mathcal{P}_B(U_{A,B}) \tag{maximize the requester's expected utility}\\
\text{s.t.} \quad& U_{A,B}=N_A U_G N_B^{\top}\\\quad& N_A\in\arg\max_{N_A'} \mathcal{P}_A(N_A' U_G \tag{$(N_A, N_B)$ consists of an equilibrium} N_B^{\top})-e_A(N_A')\\
& N_B\in\arg\max_{N_B'} \mathcal{P}_B(N_A U_G N_B'^{\top})-e_B(N_B')
\end{align*}

If there are multiple equilibria $(N_A, N_B)$, Alice and Bob will choose the equilibrium that maximizes min(Alice's expected utility, Bob's expected utility). If there are multiple equilibria that maximize their min expected utility, we will maximize the lower bound of the requester's utility over those equilibria.

\paragraph{DMI is not optimal} In this example, $N_A^1=\begin{bmatrix}1&.4\\0&.6\end{bmatrix}$ and $N_A^2=\begin{bmatrix}.8&.2\\.2&.8\end{bmatrix}$ have the same determinant, thus, \dmi-Mechanism must reward Alice the same amount of payment no matter Alice pick the one-sided noise effort or two-sided noise effort. Then as long as the expected payment is greater than 1, Alice must pick the one-sided noise since it requires much less effort. However, the requester values the other choice, the two-sided one, much more even if the requester should pay more. Later we will show, unlike DMI-mechanism which is less pleasant to the requester in this setting, a series of VMI-mechanisms can approximately make the requester obtain the optimal utility.

\paragraph{Modeling discussion} This optimization goal requires the knowledge of $U_G$ and the cost of different effort strategies. Note that $U_G$ does not represent the full knowledge. For example, the requester knows that about $10\%$ products are bad thus $U_G=\begin{bmatrix}
	10\% & 0 \\
	0 & 90\%
\end{bmatrix}$. However, the requester does not know which products are bad, thus she still need to elicit information from the crowds. The cost of different effort strategies represents the requester's estimation for the task difficulty. For example, for some tasks it may be easy to get a 80\% accurate answer but very difficult to get a 90\% accurate answer. Though this optimization goal requires a certain prior knowledge, we believe this gives the first step for effort incentive optimization over practical multi-task peer prediction mechanisms.

We will optimize over all possible $\mathcal{P}_A,\mathcal{P}_B$ which are Alice and Bob's \emph{expected} payments under \emph{dominantly truthful and practical} mechanisms. That is why the above formula does not involve Alice and Bob's report strategies. After we find a family of dominantly truthful and practical mechanisms, we can directly optimize the above goal over the family. Another way is to first optimize over all possible dominantly truthful $\mathcal{P}_A,\mathcal{P}_B$, even if there does not exist a practical mechanism which pays $\mathcal{P}_A,\mathcal{P}_B$ in expectation\footnote{In other words, we can implement such $\mathcal{P}_A,\mathcal{P}_B$ only if we have the perfect estimation of $\hat{U}_{A,B}$ from infinite number of tasks.}. Then we can use a sequence of practical mechanisms to approximate the optimal dominantly truthful mechanism. It turns out the second approach is much easier in our setting. 

\begin{description}
\item [Step 1 Practical VMI-Mechanisms:] Generalize DMI-Mechanism to a family of dominantly truthful and practical mechanisms, VMI-Mechanisms;
\begin{description}
\item [Step 1.1 Mechanism design $\Rightarrow$ Mutual information design:] Reduce the design of dominantly truthful and practical mechanisms to the design of polynomial information-monotone mutual information measure (Section~\ref{sec:mip});
\item [Step 1.2 VMI construction:] Construct information-monotone Volume Mutual Information (VMI) and show that we can obtain polynomial VMI by assigning distribution space a polynomial density (Section~\ref{sec:vmi});
\end{description}
\item [Step 2 Optimal threshold payment:] Optimize over all possible dominantly truthful $\mathcal{P}_A,\mathcal{P}_B$ and show that the optimal payment function is a threshold function (Section~\ref{sec:theta});
\item [Step 3 Approximating threshold payment via VMI-Mechanisms:] Show that the optimal threshold payment corresponds to a special VMI with Dirac delta density; use a sequence of polynomial densities to approximate the Dirac delta density and finally construct corresponding VMI-Mechanisms (Section~\ref{sec:approx}). 
\end{description}

We have finished the first step and will start the next two steps. 

\subsection{Optimal Threshold Payment}\label{sec:theta}

We will show that the optimal expected payment function is a threshold function. First, we observe that the requester should pay at least the participants' efforts. Thus, in the above example, the requester's utility will be at most either $15-1-1=13$ or $50-10-1=39$. This observation is formalized as follows.

\begin{observation}
The requester expected utility must be less than \[  v^*:=\max_{N_A,N_B} \quad v(U_{A,B})- e_A(N_A)-e_B(N_B), U_{A,B}=N_A U_G N_B^{\top}\]
\end{observation}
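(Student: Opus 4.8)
The plan is to prove the Observation by a one-line algebraic decomposition of the requester's utility, after establishing the single substantive fact that each agent's equilibrium expected utility is nonnegative. Fix any equilibrium $(N_A,N_B)$ with $U_{A,B}=N_AU_GN_B^{\top}$, and write
\[ v(U_{A,B})-\mathcal{P}_A(U_{A,B})-\mathcal{P}_B(U_{A,B}) = \big(v(U_{A,B})-e_A(N_A)-e_B(N_B)\big) - \big(\mathcal{P}_A(U_{A,B})-e_A(N_A)\big) - \big(\mathcal{P}_B(U_{A,B})-e_B(N_B)\big). \]
The first bracket is at most $v^*$ because $(N_A,N_B)$ is one admissible pair in the maximization defining $v^*$. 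Hence it suffices to show the last two brackets, which are exactly Alice's and Bob's equilibrium expected utilities, are each nonnegative; then the requester's utility cannot exceed $v(U_{A,B})-e_A(N_A)-e_B(N_B)\le v^*$.

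Next I would verify the nonnegativity of each agent's equilibrium utility via a deviation argument. Consider Alice; since $N_A$ is a best response, for every alternative effort strategy $N_A'$ we have $\mathcal{P}_A(N_A'U_GN_B^{\top})-e_A(N_A')\le \mathcal{P}_A(U_{A,B})-e_A(N_A)$. Apply this to the fully-noisy (uninformative) effort strategy $N_A^{0}$: by Assumption~\ref{assume:mono} post-processing costs no effort so $e_A(N_A^{0})=0$, and because $N_A^{0}$ has identical columns the report distribution $N_A^{0}U_GN_B^{\top}$ is a product distribution of two independent variables. Since we restrict to dominantly truthful mechanisms — whose payment, as constructed in the Mutual Information Paradigm (Lemma~\ref{lem:paradigm}), is a nonnegative mutual information that vanishes on independent variables, and which for any other dominantly truthful scheme can be shifted so that the uninformative (lowest) payment is $0$ without changing incentives or harming the requester — we get $\mathcal{P}_A(N_A^{0}U_GN_B^{\top})\ge 0$. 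Therefore $\mathcal{P}_A(U_{A,B})-e_A(N_A)\ge \mathcal{P}_A(N_A^{0}U_GN_B^{\top})-0\ge 0$, and the symmetric argument gives $\mathcal{P}_B(U_{A,B})-e_B(N_B)\ge 0$. Plugging these into the displayed identity completes the bound, and since it holds for every equilibrium it holds for the one picked by the tie-breaking rule.

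The step I expect to require the most care is the justification that the zero-effort, uninformative fallback is both available to each agent and yields a nonnegative payment: availability follows from the finite-effort-set assumption together with the monotonicity of effort (a least-informative strategy is a cheapest one), and nonnegativity is precisely the ``nonnegative, vanishes on independent variables'' property that our VMI-based payments enjoy and that a general dominantly truthful payment can be normalized to enjoy. Everything else is the elementary decomposition above, so once this fallback property is recorded the Observation follows immediately.
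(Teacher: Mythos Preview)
Your decomposition is exactly the paper's: rewrite the requester's utility as $(v-e_A-e_B)-(\mathcal{P}_A-e_A)-(\mathcal{P}_B-e_B)$, bound the first bracket by $v^*$, and show the other two brackets are nonnegative. The only difference is how you justify $\mathcal{P}_i-e_i\ge 0$. The paper simply invokes a participation constraint (``the participants are willing to participate if and only if their expected utility is positive''), whereas you derive it from the best-response condition via a deviation to a zero-effort uninformative strategy. Your route is more self-contained but leans on two extra premises---that a fully noisy effort option is actually present in the finite effort-strategy set, and that the payment there is nonnegative---which you correctly flag as the delicate step. One small correction: the normalization remark (shift an arbitrary dominantly truthful payment so its uninformative value is $0$ ``without harming the requester'') does not quite work for an \emph{upper} bound, since shifting payments upward raises the requester's cost and hence only yields $u\le v^*+\text{(shift)}$; so for general mechanisms you really do need either the participation constraint the paper uses or your two premises to hold outright.
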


\begin{proof}
The participants are willing to participate if and only if their expected utility is positive. In such case, 
\begin{align*}
&\mathcal{P}_A(N_A U_G N_B^{\top})>e_A(N_A)\\
&\mathcal{P}_B(N_A U_G N_B^{\top})>e_B(N_B)
\end{align*}
Thus, the requester's utility is less than 
\[v(U_{A,B})-e_A(N_A)- e_B(N_B)\] which is less than 
\[  \max_{N_A,N_B} \quad v(U_{A,B})- e_A(N_A)-e_B(N_B), U_{A,B}=N_A U_G N_B^{\top}.\]
\end{proof}

We pick optimal $U^*,N_A^*,N_B^*$ such that $U^*=N_A^* U_G N_B^{*\top}$ where \[N_A^*,N_B^*\in \arg\max_{N_A,N_B} \quad v(U_{A,B})- e_A(N_A)-e_B(N_B).\] We assume that $U^*$ is \emph{non-degenerate}, i.e., $\det(U^*)\neq 0$. By setting $U^*$ as a threshold and just pay the efforts participants make will be optimal. The result is formalized in the following proposition. In the above example, the threshold can be set as $N_A^2 U_G N_B^{1\top}$. This guarantees that Alice picks the desired two-sided noise effort and the requester will obtain the optimal utility 50-1-10=39. 

\begin{proposition}
For all $\epsilon>0$, by setting Alice's expected payment function as \[\mathcal{P}_A(U)=(e_A(N_A^*)+\epsilon)\mathbbm{1}(U\succeq U^*)\] and Bob's expected payment function as \[\mathcal{P}_B(U)=(e_B(N_B^*)+\epsilon)\mathbbm{1}(U^{\top}\succeq U^{*\top}),\] the requester will obtain at least an almost optimal utility $v^*-2\epsilon$ and the payment is bounded by $e_A(N_A^*)+e_B(N_B^*)+2\epsilon$. 
\end{proposition}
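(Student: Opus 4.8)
The plan is to split the argument into (i) checking that the stated pair $(\mathcal{P}_A,\mathcal{P}_B)$ is an admissible dominantly-truthful expected-payment profile, and (ii) analyzing the effort game it induces and showing that the equilibrium the agents select gives the requester utility at least $v^*-2\epsilon$ with the claimed payment bound.

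For (i) I would note that the only constraint dominant truthfulness places on an expected-payment function is monotonicity on the poset $(L,\preceq)$ of Definition~\ref{def:lpo} (for $\mathcal{P}_A$) and on its transpose (for $\mathcal{P}_B$), together with the strict-over-uninformative clause. Monotonicity of $U\mapsto \mathbbm{1}(U\succeq U^*)$ is immediate from transitivity of $\preceq$: if $V\preceq W$ and $V\succeq U^*$ then $W\succeq U^*$; the transpose version handles $\mathcal{P}_B$. For the strict clause I would use the assumed non-degeneracy $\det(U^*)\neq 0$: an uninformative report produces a rank-deficient reported joint distribution, which therefore cannot $\preceq$-dominate the full-rank $U^*$ and is paid $0$, whereas truthful reporting against a truthful full-effort peer yields the joint distribution $U^*\succeq U^*$ and is paid $e_A(N_A^*)+\epsilon>0$.

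For (ii) I would first verify $(N_A^*,N_B^*)$ is an equilibrium. Fix Bob at $N_B^*$. Playing $N_A^*$ gives Alice the joint distribution $N_A^*U_GN_B^{*\top}=U^*\succeq U^*$, hence utility $e_A(N_A^*)+\epsilon-e_A(N_A^*)=\epsilon>0$. Any deviation $N_A$ with $N_AU_GN_B^{*\top}\succeq U^*$ gives utility $e_A(N_A^*)+\epsilon-e_A(N_A)$; since $N_AU_GN_B^{*\top}\succeq U^*$ implies $v(N_AU_GN_B^{*\top})\ge v(U^*)$ by Assumption~\ref{assume:mono}, while the maximality defining $v^*$ gives $v(N_AU_GN_B^{*\top})-e_A(N_A)-e_B(N_B^*)\le v^*=v(U^*)-e_A(N_A^*)-e_B(N_B^*)$, cancelling and rearranging yields $e_A(N_A)\ge e_A(N_A^*)$, so that utility is $\le\epsilon$. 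A deviation with $N_AU_GN_B^{*\top}\not\succeq U^*$ yields utility $-e_A(N_A)\le 0<\epsilon$. Hence $N_A^*$ is a best response to $N_B^*$; the symmetric computation for Bob uses the transpose clause of Assumption~\ref{assume:mono}. At this equilibrium both agents receive exactly $\epsilon$, so $\min(\text{Alice's utility},\text{Bob's utility})=\epsilon$, and the requester receives $v(U^*)-(e_A(N_A^*)+\epsilon)-(e_B(N_B^*)+\epsilon)=v^*-2\epsilon$.

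It remains to show $(N_A^*,N_B^*)$ is among the equilibria maximizing the agents' min utility, hence (weakly) the one selected. No equilibrium $(N_A,N_B)$ can give both agents utility $>\epsilon$: utility $>\epsilon\ge 0$ forces the threshold to be crossed (a non-crossing strategy pays $0$ and costs $\ge 0$), so Alice's utility $=e_A(N_A^*)+\epsilon-e_A(N_A)>\epsilon$ forces $e_A(N_A)<e_A(N_A^*)$ and likewise $e_B(N_B)<e_B(N_B^*)$, while $N_AU_GN_B^\top\succeq U^*$ gives $v(N_AU_GN_B^\top)\ge v(U^*)$; then $v(N_AU_GN_B^\top)-e_A(N_A)-e_B(N_B)>v(U^*)-e_A(N_A^*)-e_B(N_B^*)=v^*$, contradicting maximality of $v^*$. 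Thus the best attainable min utility is exactly $\epsilon$, realized by $(N_A^*,N_B^*)$, and by the stated tie-break (maximize the requester's utility lower bound over these equilibria) the requester is guaranteed at least $v^*-2\epsilon$; and since every equilibrium pays each agent either $0$ or $e_X(N_X^*)+\epsilon$, the total payment is at most $e_A(N_A^*)+e_B(N_B^*)+2\epsilon$. The hard part will be the equilibrium check: ruling out an Alice deviation to a cheaper effort level that still crosses the $\preceq$-threshold needs both the information-monotonicity of $v$ and the optimality that defines $v^*$, and this is precisely the property the determinant-level sets of DMI-Mechanism fail; the equilibrium-selection bookkeeping and the $\det(U^*)\neq 0$ technicality in (i) are secondary but still need care.
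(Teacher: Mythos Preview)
Your proof is correct, but your best-response step takes a genuinely different route from the paper's. To rule out a threshold-crossing deviation $N_A$ against $N_B^*$, you invoke the \emph{value}-monotonicity half of Assumption~\ref{assume:mono} (from $N_AU_GN_B^{*\top}\succeq U^*$ you get $v(N_AU_GN_B^{*\top})\ge v(U^*)$) and then combine it with the optimality inequality defining $v^*$ to force $e_A(N_A)\ge e_A(N_A^*)$. The paper instead uses the non-degeneracy $\det(U^*)\neq 0$ to cancel $U_GN_B^{*\top}$ from both sides of $N_AU_GN_B^{*\top}\succeq N_A^*U_GN_B^{*\top}$, obtaining $N_A\succeq N_A^*$ in the poset of stochastic matrices, and then applies the \emph{effort}-monotonicity half of Assumption~\ref{assume:mono}. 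Your route is neat in that non-degeneracy plays no role in the equilibrium check itself (you only use it in part~(i) for the strict clause); the paper's route is more structural in that it actually exhibits $N_A^*$ as a $\preceq$-lower bound on every threshold-crossing deviation, which is the property that gets reused verbatim in the later approximation theorem. You also supply two pieces of bookkeeping the paper's proof omits or leaves implicit: the argument that no equilibrium can give both agents utility strictly above $\epsilon$ (so $(N_A^*,N_B^*)$ really is among the max-min equilibria), and the part~(i) verification that the threshold scheme is itself dominantly truthful. Both are fine additions; the paper simply takes the latter for granted and handles the former by appealing directly to the agents' equilibrium-selection rule once it has exhibited one equilibrium with strictly positive min utility.
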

\begin{proof}
First, if Alice and Bob choose an equilibrium where their joint distribution $U$ does not satisfy  $U\succeq U^*$ or $U^{\top}\succeq U^{*\top}$, then one of them will obtain 0 utility. Thus, if there exist equilibria that lead to $U\succeq U^*,U^{\top}\succeq U^{*\top}$ and both of them can obtain a strictly positive utility, then Alice and Bob must pick one of such equilibria, since we assume they will pick the equilibrium that maximizes min(Alice's expected utility, Bob's expected utility). We will show that such equilibrium exists by showing that $N_A^*,N_B^*$ is such an equilibrium. When Bob plays $N_B^*$, to obtain a strictly positive utility, Alice must play an effort strategy $N_A$ such that the corresponding joint distribution is more informative than $U^*$ from Alice's side, that is, $N_A U_G N_B^*\succeq U^*=N_A^* U_G N_B^*$. Then there exists $T$ such that $T N_A U_G N_B^*=N_A^* U_G N_B^*$ which implies that $T N_A=N_A^*$ since $U^*$ is non-degenerate. Thus, we have $N_A\succeq N_A^*$. Since the effort function is monotone, $e_A(N_A)\geq e_A(N_A^*)$, $N_A^*$ is a best effort strategy for Alice when Bob plays $N_B^*$. The analysis for Bob's side is analogous. Thus, $(N_A^*,N_B^*)$ is an equilibrium. Combining the above analysis, the requester's utility will be at least $v(U^*)-e_A(N_A^*)-e_B(N_B^*)-2\epsilon= v^*-2\epsilon$. 
\end{proof}

However, there does not exist any finite-number-of-tasks mechanism which pays the above optimal threshold function in expectation. Therefore, we will approximate this function by a series of polynomial mutual information and then employ mutual information paradigm to construct the corresponding mechanism. 

A naive attempt is to compute the polynomial approximation of the threshold function directly. However, the obtained polynomial approximation may not be information-monotone thus cannot induce the dominant truthfulness. Thus, instead of computing a polynomial approximation of the threshold function directly, we will compute a polynomial approximation of the ``derivative'' of the threshold function, a Dirac delta function, and then use the approximation as density to construct the corresponding polynomial volume mutual information. 


\subsection{Approximating the Optimal Threshold Payments via VMI-Mechanisms}\label{sec:approx}

To formally state our approximation process, we first give a formal definition for slice whose intuition has been illustrated in Figure~\ref{fig:binary}.

\begin{definition}[Slice]
For all joint distribution $U$, we define $u_j:=\sum_{i} u_{ij}$ as the the sum of the $j^{th}$ column of $U$. We use $slice(U)$ to define the space of joint distributions whose column sums are the same as $U$. That is, $slice(U):=\{U'|\forall j, u'_j=u_j\}$. In Figure~\ref{fig:binary}, $slice(U)$ is the slice that contains $U$. 
\end{definition}

We define a Dirac delta function $\delta_{U^*}$ such that $\forall U\neq U^*, \delta_{U^*}(U)=0$, for all open set $O\subset slice(U^*)$ that contains $U^*$, $\int_{O} \delta_{U^*}(U)d U = 1$. When we use the Dirac delta function as density, the corresponding VMI will be a threshold function $\mathbbm{1}(U\succeq U^*)$.

\paragraph{Approximation of the Optimal Payment}
The process has three steps. 
\begin{description}
\item [Step 1: Polynomial approximation for Dirac delta] Given degree $k$, we obtain an polynomial approximation $\phi_k,\psi_k$ for $\delta_{U^*},\delta_{U^{*\top}}$
\item [Step 2: Using the polynomial density to construct VMI] When $C$ is even, we set the expected payment function as \[\mathcal{P}_A(U)=(e_A(N_A^*)+\epsilon)(\vmi^{\phi_k}(U))^2,\mathcal{P}_B(U)=(e_B(N_B^*)+\epsilon)(\vmi^{\psi_k}(U^{\top}))^2,\] when $C$ is odd, 
we set the expected payment function as \[\mathcal{P}_A(U)=(e_A(N_A^*)+\epsilon)\vmi^{\phi_k}(U),\mathcal{P}_B(U)=(e_B(N_B^*)+\epsilon)\vmi^{\psi_k}(U^{\top}).\] 
\item [Step 3: Constructing the VMI-Mechanisms] We set the mechanism correspondingly: when $C$ is even, \[p_A=(e_A(N_A^*)+\epsilon)\ube^{(\vmi^{\phi_k})^2}(\{(\hat{c}_A^t,\hat{c}_B^t)\}_{t=1}^{T}),\mathcal{P}_B(U)=(e_B(N_B^*)+\epsilon)\ube^{(\vmi^{\psi_k})^2}(\{(\hat{c}_B^t,\hat{c}_A^t)\}_{t=1}^{T});\] when $C$ is odd, \[p_A=(e_A(N_A^*)+\epsilon)\ube^{\vmi^{\phi_k}}(\{(\hat{c}_A^t,\hat{c}_B^t)\}_{t=1}^{T}),\mathcal{P}_B(U)=(e_B(N_B^*)+\epsilon)\ube^{\vmi^{\psi_k}}(\{(\hat{c}_B^t,\hat{c}_A^t)\}_{t=1}^{T}).\]
\end{description}

We use the square of $(\vmi^{\phi_k}(U))^2$ when $C$ is even to guarantee that it is a polynomial based on the results of Theorem~\ref{thm:vmi}. It's left to construct a polynomial approximation for Dirac delta density. We will use a Dirichlet distribution family-based VMI to construct the polynomial distribution. As we mentioned before, we are inspired by a beta family of scoring rules \cite{2005Loss,2013Choosing} which are used to approximate a threshold scoring rule, ``misclassification'' scoring.

\paragraph{Dirichlet/Multivariate Beta distribution} We first introduce Dirichlet distributions. \begin{definition}[Dirichlet distribution~\cite{2006Continuous}]
Given $K\geq 2$, for all parameters $\bm{\beta}=\beta_1,\beta_2,\cdots,\beta_K>0$, the Dirichlet distribution $Dir(\beta_1,\beta_2,\cdots,\beta_K)$ is defined as a continuous multivariate probability distribution with density \[w^{\bm{\beta}}(x_1,x_2,\cdots,x_K)=\frac{1}{B(\bm{\beta})}\Pi_i x_i^{\beta_i-1}\] with respect to Lebesgue measure on $\mathbb{R}^{K-1}$ where $\sum_k x_k=1,x_k\geq 0$ and $B(\cdot)$ is the beta function. 
\end{definition}

\begin{fact}[Mean/Variance of Dirichlet-distributed variables~\cite{2006Continuous}]\label{fact:beta}
For $Dir(\beta_1,\beta_2,\cdots,\beta_K)$-distributed random variables $(X_1,X_2,\cdots,X_K)$, the mean of $X_k$ is $\frac{\beta_k}{\sum_i \beta_i }$ and the variance of $X_k$ is $\frac{\frac{\beta_k}{\sum_i \beta_i }(1-\frac{\beta_k}{\sum_i \beta_i })}{\sum_i \beta_i+1}$.
\end{fact}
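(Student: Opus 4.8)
The plan is to reduce everything to a single moment identity: for a $Dir(\bm{\beta})$-distributed vector $(X_1,\dots,X_K)$ and any nonnegative integers $a_1,\dots,a_K$,
\[
\E\!\left[\prod_i X_i^{a_i}\right]=\frac{B(\beta_1+a_1,\dots,\beta_K+a_K)}{B(\bm{\beta})}.
\]
This is immediate from the form of the density stated in the paper: multiplying $w^{\bm{\beta}}$ by $\prod_i x_i^{a_i}$ turns it into $\tfrac{B(\beta_1+a_1,\dots,\beta_K+a_K)}{B(\bm{\beta})}$ times the Dirichlet density with parameters $\beta_i+a_i$, and the latter integrates to $1$ over the simplex; since every integrand is nonnegative, Tonelli justifies the interchange. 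So first I would record this identity together with the closed form $B(\bm{\beta})=\prod_i\Gamma(\beta_i)\big/\Gamma(\beta_0)$, writing $\beta_0:=\sum_i\beta_i$.

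Next I would read off the mean by taking $a_k=1$ and $a_i=0$ for $i\neq k$: using $\Gamma(z+1)=z\Gamma(z)$, the ratio $B(\bm{\beta}+\text{unit})/B(\bm{\beta})$ telescopes to $\frac{\Gamma(\beta_k+1)\Gamma(\beta_0)}{\Gamma(\beta_k)\Gamma(\beta_0+1)}=\frac{\beta_k}{\beta_0}$, the claimed mean. For the variance I would take $a_k=2$ to get $\E[X_k^2]=\frac{\beta_k(\beta_k+1)}{\beta_0(\beta_0+1)}$ by the same cancellation, and then
\begin{align*}
\var(X_k)&=\E[X_k^2]-\E[X_k]^2=\frac{\beta_k(\beta_k+1)}{\beta_0(\beta_0+1)}-\frac{\beta_k^2}{\beta_0^2}\\
&=\frac{\beta_k}{\beta_0}\cdot\frac{\beta_0-\beta_k}{\beta_0(\beta_0+1)},
\end{align*}
which, setting $m:=\beta_k/\beta_0$, is exactly $\frac{m(1-m)}{\beta_0+1}$, the stated variance.

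As a cross-check I would also sketch the Gamma-representation route: write $X_k=Y_k/S$ with independent $Y_i\sim\mathrm{Gamma}(\beta_i,1)$ and $S=\sum_i Y_i$, and invoke the classical fact that $S$ is independent of the normalized vector $(X_1,\dots,X_K)$; then $\E[Y_k]=\E[X_k]\,\E[S]$ and $\E[Y_k^2]=\E[X_k^2]\,\E[S^2]$ recover both moments from the elementary Gamma moments $\E[Y_k]=\var(Y_k)=\beta_k$ and $\E[S]=\var(S)=\beta_0$.

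The computation itself is pure bookkeeping. The only content is, in the first approach, the normalization of the Dirichlet density (already built into the paper's definition), and, in the second approach, the independence of the total $S$ from the normalized vector — the one genuinely nontrivial ingredient, and the step I would cite rather than reprove. Since the statement is flagged as a Fact from \cite{2006Continuous}, I would keep the actual write-up to the few lines of $\Gamma$-function algebra above and defer the rest to the reference.
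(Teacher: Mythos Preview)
Your derivation is correct and standard. Note that the paper does not actually prove this statement: it is recorded as a \emph{Fact} with a citation to \cite{2006Continuous} and no argument is given, so there is no ``paper's own proof'' to compare against. Your write-up via the Beta--Gamma moment identity (and the alternative Gamma-representation sketch) is exactly the kind of short justification one would supply if asked to fill in the reference.
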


\paragraph{Dirichlet family of VMI} We define a parametric family of VMI where the density function is inspired from Dirichlet distributions. 

\begin{definition}[Dirichlet family of VMI]
We define the Dirichlet family of volume mutual information, $\vmi^{w^{\bm{\alpha}}}$, by parameterizing the density function as  
\[w^{\bm{\alpha}}(U)=\frac{1}{C(\bm{\alpha})}\Pi_{ij} u_{ij}^{\alpha_{ij}-1}\] regarding parameters $\bm{\alpha}:=\{\alpha_{ij}>0,i,j\in[C]\}$ and $C(\bm{\alpha})$\footnote{In fact, the proof of Lemma~\ref{lem:dirich} shows that $C(\bm{\alpha})\propto\Pi_j {\alpha_j}^{\alpha_j-1}*B(\bm{\alpha}_j)$ where $\alpha:=\sum_{ij}\alpha_{ij}, \alpha_j:=\sum_i \alpha_{ij},\bm{\alpha}_j=(\alpha_{1j},\alpha_{2j},\cdots)$.} is a normalizing constant such that the volume of $slice(\bm{\alpha}/\alpha)$ be 1, i.e., $\int_{x\in slice(\bm{\alpha}/\alpha)} w^{\bm{\alpha}}(x) d\mathscr{H}^{C(C-1)}(x)=1$. 
\end{definition}



\begin{lemma}\label{lem:dirich}
Given a joint distribution $U^{\star}$, we set $\bm{\alpha}(U^{\star})=\{\alpha u^{\star}_{ij},i,j\in[C],\alpha>0\}$, for all $U$ such that $U^{\star} \notin \partial{\downarrow U}$, 
\[ \lim_{\alpha\rightarrow \infty} \vmi^{w^{\bm{\alpha}(U^{\star})}}(U)=\mathbbm{1}(U\succeq U^{\star})\] where $\partial{\downarrow U}$ is the boundary of $\downarrow U$.
\end{lemma}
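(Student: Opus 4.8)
The plan is to show that the Dirichlet-family density $w^{\bm{\alpha}(U^{\star})}$ concentrates all its mass near $U^{\star}$ as $\alpha\to\infty$, so that the volume integral defining $\vmi^{w^{\bm{\alpha}(U^{\star})}}(U)$ tends to $1$ exactly when $\downarrow U$ contains $U^{\star}$ in its interior, and to $0$ when $U^{\star}$ lies strictly outside $\downarrow U$. First I would use Lemma~\ref{lem:binaryvmi} (and its general-$C$ analogue from the proof of Theorem~\ref{thm:vmi}, which rewrites $\vmi^w(U)$ as a constant times $|\det(U)|^{C-1}$ times an integral of $w$ over the ``slice coordinates'' parametrizing $\downarrow U$) to express $\vmi^{w^{\bm{\alpha}(U^{\star})}}(U)$ as an integral of the Dirichlet density over a region inside $slice(U)$. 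The key structural point is that $\downarrow U$ is precisely the set of joint distributions in $slice(U)$ dominated by $U$, so the integral is $\int_{\downarrow U} w^{\bm{\alpha}(U^{\star})}(x)\,d\mathscr{H}^{C(C-1)}(x)$; meanwhile the normalizing constant $C(\bm{\alpha})$ is chosen so that the same density integrates to $1$ over $slice(\bm{\alpha}/\alpha)=slice(U^{\star})$ (note $\bm{\alpha}(U^{\star})/\alpha = U^{\star}$).

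Next I would establish the concentration. Writing $\bm{\alpha}(U^{\star})=\alpha U^{\star}$, the density $w^{\bm{\alpha}(U^{\star})}(U)=\frac{1}{C(\bm{\alpha})}\prod_{ij}u_{ij}^{\alpha u^{\star}_{ij}-1}$ restricted to a slice is, up to the column-wise normalization detailed in the footnote after the Dirichlet-family definition, a product over columns $j$ of Dirichlet densities $Dir(\alpha_{1j},\dots,\alpha_{Cj})$ on the $j$-th column (rescaled to sum to $u^{\star}_j$). By Fact~\ref{fact:beta}, the mean of such a Dirichlet-distributed column is exactly the $j$-th column of $U^{\star}$, and each coordinate variance is $O(1/\alpha)$; hence as $\alpha\to\infty$ the probability mass concentrates in an arbitrarily small neighborhood of $U^{\star}$ within $slice(U^{\star})$. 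Since all slices are affine translates of one another and $\det$ is continuous, $|\det(U)|^{C-1}$ is essentially constant on this shrinking neighborhood, which is why the same normalizing constant works for $\vmi$ over $\downarrow U$ as for the plain Dirichlet normalization over $slice(U^{\star})$ in the limit.

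Then I would conclude by a case analysis. If $U\succeq U^{\star}$ and $U^{\star}\notin\partial{\downarrow U}$, then $U^{\star}$ is in the interior of $\downarrow U$, so for $\alpha$ large enough $\downarrow U$ contains the concentration neighborhood, and $\int_{\downarrow U}w^{\bm{\alpha}(U^{\star})}\,d\mathscr{H}^{C(C-1)}\to 1$. If $U\not\succeq U^{\star}$ and again $U^{\star}\notin\partial{\downarrow U}$, then $U^{\star}$ is outside the closed set $\downarrow U$ (or would be on its boundary, which is excluded), so the concentration neighborhood is eventually disjoint from $\downarrow U$ and the integral $\to 0$; together these give $\vmi^{w^{\bm{\alpha}(U^{\star})}}(U)\to\mathbbm{1}(U\succeq U^{\star})$. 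The main obstacle is handling the geometry carefully: one must verify that $\downarrow U\subseteq slice(U)$ is genuinely the set cut out by the column-sum constraint plus domination, reconcile the $\mathscr{H}^{C(C-1)}$ Hausdorff measure on these affine slices with the Lebesgue measure underlying the Dirichlet density (a fixed Jacobian factor per slice, which cancels against $C(\bm{\alpha})$ since $U^{\star}$ lies in $\downarrow U$'s interior and $|\det|^{C-1}$ is locally constant), and make the ``concentration neighborhood eventually inside/outside $\downarrow U$'' step rigorous using the assumption $U^{\star}\notin\partial{\downarrow U}$ together with a Chebyshev-type bound from Fact~\ref{fact:beta}.
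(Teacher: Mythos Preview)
Your concentration argument on $slice(U^{\star})$ is essentially the paper's own argument: restrict to that slice, recognize the density as a product over columns of Dirichlet densities with parameters $\alpha u^{\star}_{ij}$, use Fact~\ref{fact:beta} to get $O(1/\alpha)$ variances, and conclude via a Markov/Chebyshev bound that the mass concentrates at $U^{\star}$. For $U\in slice(U^{\star})$ with $U^{\star}\notin\partial{\downarrow U}$, your case split (interior vs.\ exterior of $\downarrow U$) then gives the limit correctly, exactly as in the paper.

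The gap is the case $U\notin slice(U^{\star})$. Here $\downarrow U\subset slice(U)$ and $slice(U)\cap slice(U^{\star})=\emptyset$, so your ``concentration neighborhood'' (a subset of $slice(U^{\star})$) is trivially disjoint from $\downarrow U$; but that fact alone tells you nothing about $\int_{\downarrow U}w^{\bm{\alpha}(U^{\star})}\,d\mathscr{H}^{C(C-1)}$. The density $w^{\bm{\alpha}(U^{\star})}$ is normalized to have total mass $1$ only on $slice(U^{\star})$; on $slice(U)$ its total mass is not controlled by your concentration argument, and in fact the mode on $slice(U)$ sits at $\tilde u_{ij}=u_j\,u^{\star}_{ij}/u^{\star}_j$, which need not be outside $\downarrow U$. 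The paper handles this case by a direct computation: it bounds $\vmi^{w^{\bm{\alpha}(U^{\star})}}(U)$ by the total weighted volume of $slice(U)$, factors the integral column-by-column, substitutes $y_{ij}=x_{ij}/u_j$, and obtains
\[
\vol^{w^{\bm{\alpha}(U^{\star})}}(slice(U))=\prod_j\Bigl(\tfrac{u_j}{u^{\star}_j}\Bigr)^{\alpha u^{\star}_j-1},
\]
which tends to $0$ because $\prod_j(u_j/u^{\star}_j)^{u^{\star}_j}<1$ whenever $(u_j)\neq(u^{\star}_j)$ (a Gibbs/AM--GM inequality). You need this off-slice step; without it the argument is incomplete. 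Also, your aside about $|\det|^{C-1}$ being ``locally constant'' and cancelling against $C(\bm{\alpha})$ is unnecessary and somewhat misleading: the normalization is defined directly on $slice(U^{\star})$, so no asymptotic cancellation is needed there.
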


To prove the above lemma, we first observe that for $U\notin slice(U^{\star})$ (which is definitely not more informative than $U^{\star}$), VMI at $U$ is less than the volume of $slice(U)$. Then we will show that the volume of $slice(U)$ goes to zero, which implies that VMI at $U$ goes to zero. For joint distribution on $slice(U^{\star})$, we will show that the density function restricted to $slice(U^{\star})$ is a probability density over $C$ independent Dirichlet-distributed random variables and we can show that it converges in distribution to constant $U^{\star}$ at continuous point, which leads to the above lemma's results. We defer the formal proof to Appendix~\ref{sec:additionalproof}. 

Note that the convergence happens only for $U$ whose lower set's boundary does not contain the special $U^{\star}$ such that it has zero measure in the limit to guarantee continuity. Then if we set $U^{\star}=U^*$ directly, the VMI at $U^*$ will not converge to 1. Thus, instead, we will use a lower-bound of $U^*$ as a substituted threshold such that the VMI at $U^*$ converges to one and the requester's value will only be sacrificed a little bit by using $\mathbbm{1}(U\succeq U^{\star})$ instead of $\mathbbm{1}(U\succeq U^*)$.

We formally state the polynomial approximation part here. Figure~\ref{fig:optapprox} presents an illustration. 

\begin{figure}[!ht]
\begin{center}
\includegraphics[width = 10 cm]{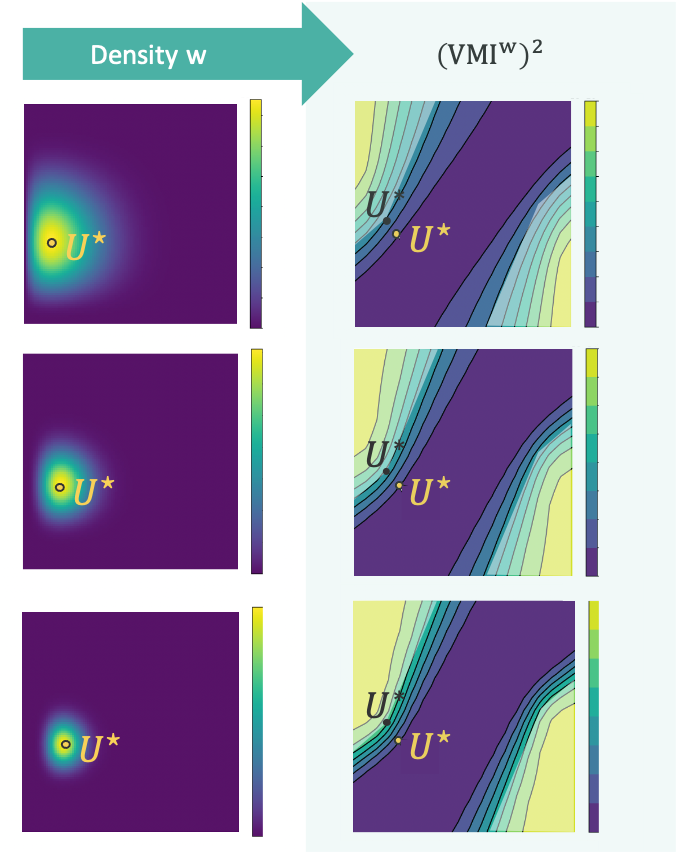}
\end{center}
\caption{\textbf{Polynomial approximation of OPT threshold function}: We aim to approximate the optimal $\mathbbm{1}(U\succeq U^*)$, which is shown as the light white area in the figures. To guarantee that convergence happens at $U^*$, we use $U^*$'s substituted threshold $U^{\star}=\begin{bmatrix}.2&.1\\.3&.4\end{bmatrix}$. The density is $w^{\alpha(U^{\star})}(U)\propto u_{00}^{.2\alpha -1}u_{01}^{.1\alpha -1}u_{10}^{.3\alpha -1}u_{11}^{.4\alpha -1}$ and $\alpha=20,50,100$ from top to bottom. The right side shows the contours of $(\vmi^w)^2$, which are guaranteed to be polynomial. As $\alpha$ increases, the density becomes more concentrated on $U^{\star}$ and the corresponding volume mutual information becomes closer to the optimal $\mathbbm{1}(U\succeq U^*)$. }\label{fig:optapprox}

\end{figure}

\begin{description}
\item [Polynomial approximation for Dirac delta] Given the optimal $U^*$, we pick a non-degenerate $N_A^{\star}U_G N_B^{*}=:U^{\star}\prec U^*=N_A^{*}U_G N_B^{*}$ such that $v(U^{\star})\geq v(U^*)-\epsilon$, $e_A(N_A^{\star})\geq e_A(N_A^{*})-\epsilon$, $U^{\star}\notin \partial{\downarrow U^*}$, and all numbers in $U^{\star}$ are rational, as the substituted threshold. Given proper integer $\alpha>0$ such that all $\{\alpha U^{\star}_{ij},i,j\in[C]\}$ are integers, we use $\phi_\alpha:=w^{\bm{\alpha}(U^{\star})}$ as our polynomial approximations. We define $\psi_\alpha$ analogously.
\end{description}

We can always find such $U^{\star}$ since we assumed that $U^*$ is non-degenerate and the value/effort function is continuous. We will use the above polynomial approximation to construct the corresponding VMI, as well as the VMI-Mechanism. To have the result that sufficiently large $\alpha>0$ will lead to an almost optimal utility for the requester, we need to relax the equilibrium requirement for the effort strategy profile to $\delta$-equilibrium in the optimization goal.

\begin{definition}[$\delta$-equilibrium]
A strategy profile is a $\delta$-equilibrium if for each agent, given other agent's strategy, she cannot change her strategy to improve her expected utility by more than $\delta$. 
\end{definition}

The relaxation guarantees that in the mechanism which approximately pays participants in a threshold manner in expectation, the effort strategy profile $(N_A^*,N_B^*)$ at the threshold can still be considered by the participants. Note that we do not need any relaxed solution concept for agents' report strategies.

\begin{theorem}
For all $\delta>\epsilon>0$, there exists sufficiently large $\alpha>0$ such that when $C$ is even (odd), $T\geq 2(\alpha-C)$ ($T\geq \alpha-C$), mechanism  \[p_A=(e_A(N_A^*)+\epsilon)\ube^{(\vmi^{\phi_\alpha})^2}(\{(\hat{c}_A^t,\hat{c}_B^t)\}_{t=1}^{T}),\mathcal{P}_B(U)=(e_B(N_B^*)+\epsilon)\ube^{(\vmi^{\psi_\alpha})^2}(\{(\hat{c}_B^t,\hat{c}_A^t)\}_{t=1}^{T})\] \[\left(p_A=(e_A(N_A^*)+\epsilon)\ube^{\vmi^{\phi_\alpha}}(\{(\hat{c}_A^t,\hat{c}_B^t)\}_{t=1}^{T}),\mathcal{P}_B(U)=(e_B(N_B^*)+\epsilon)\ube^{\vmi^{\psi_\alpha}}(\{(\hat{c}_B^t,\hat{c}_A^t)\}_{t=1}^{T})\right)\] is practical, dominantly truthful and prior-independent. If we relax the equilibrium requirement for effort strategy profile to $\delta$-equilibrium in the optimization goal, the requester can obtain at least an almost optimal utility $v^*-4\epsilon$.
\end{theorem}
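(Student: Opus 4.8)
The plan is to combine the three pieces developed in the paper: (i) polynomial VMI yields a practical dominantly truthful mechanism via Lemma~\ref{lem:paradigm} and Lemma~\ref{lem:poly}; (ii) the Dirichlet-family density $w^{\bm{\alpha}(U^{\star})}$ concentrates on a substituted threshold $U^{\star}$ as $\alpha\to\infty$ (Lemma~\ref{lem:dirich}); and (iii) the threshold payment scheme attains utility $v^*-2\epsilon$ (the Proposition in Section~\ref{sec:theta}). First I would verify practicality and incentive compatibility: by Theorem~\ref{thm:vmi}, when $C$ is even $(\vmi^{\phi_\alpha})^2$ is a polynomial mutual information of degree $2(d_{\phi_\alpha}+C(C-1))$ where $d_{\phi_\alpha}=\alpha-C$ (since $\phi_\alpha=w^{\bm{\alpha}(U^{\star})}$ has total degree $\sum_{ij}(\alpha u^{\star}_{ij}-1)=\alpha-C^2$, wait — I should recompute: each exponent is $\alpha u^{\star}_{ij}-1$, summing gives $\alpha-C^2$, so the polynomial density degree is $d_w=\alpha-C^2$ and $\vmi^{\phi_\alpha}$ has degree $d_w+C(C-1)=\alpha-C$ in the odd case, and $(\vmi^{\phi_\alpha})^2$ has degree $2(\alpha-C)$ in the even case), which matches the stated task bounds $T\geq 2(\alpha-C)$ (even) and $T\geq \alpha-C$ (odd). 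Lemma~\ref{lem:poly} then gives the unbiased estimator, Lemma~\ref{lem:paradigm} gives dominant truthfulness and prior-independence, and the whole construction is clearly prior-independent and minimal.

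Next I would establish the utility bound by a limiting/continuity argument. Fix the substituted threshold $U^{\star}\prec U^*$ guaranteed by the ``Polynomial approximation for Dirac delta'' paragraph, so that $v(U^{\star})\geq v(U^*)-\epsilon$ and $e_A(N_A^{\star})\geq e_A(N_A^{*})-\epsilon$ (and analogously for Bob), with $U^{\star}\notin\partial\!\downarrow\!U^*$ and rational entries. By Lemma~\ref{lem:dirich}, $\vmi^{\phi_\alpha}(U)\to\mathbbm{1}(U\succeq U^{\star})$ pointwise for every $U$ with $U^{\star}\notin\partial\!\downarrow\!U$; since the effort-strategy set is finite (Assumption~\ref{assume:mono}), there are finitely many candidate joint distributions $U_{A,B}=N_A U_G N_B^{\top}$, so the convergence is uniform over this finite set, and I can choose $\alpha$ large enough that for every such $U$, the payment $\mathcal{P}_A(U)=(e_A(N_A^*)+\epsilon)(\vmi^{\phi_\alpha}(U))^2$ (even case) is within, say, $\delta-\epsilon$ of the threshold payment $(e_A(N_A^*)+\epsilon)\mathbbm{1}(U\succeq U^{\star})$, and similarly for Bob. (I may need to handle the corner case where some candidate $U$ has $U^{\star}\in\partial\!\downarrow\!U$; since $U^{\star}$ has rational entries strictly below $U^*$, one can perturb $U^{\star}$ slightly within its slice to avoid all finitely many bad boundaries while keeping the value/effort inequalities by continuity.)

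Then I would run the same equilibrium analysis as in the threshold Proposition, but with $\delta$-slack. I claim $(N_A^{\star},N_B^{*})$ — or the appropriate pair attaining the substituted threshold — forms a $\delta$-equilibrium: if Bob plays $N_B^{*}$, any deviation $N_A$ with $N_A U_G N_B^{*}\not\succeq U^{\star}$ gives Alice payment $\leq \delta-\epsilon$ hence utility at most $\delta-\epsilon-0<\delta$ above her equilibrium utility minus her effort, while any $N_A$ with $N_A U_G N_B^{*}\succeq U^{\star}$ forces $N_A\succeq N_A^{\star}$ by non-degeneracy of $U^{\star}$ (the cancellation argument from the Proposition's proof), so $e_A(N_A)\geq e_A(N_A^{\star})$ and her utility cannot improve by more than the $\delta-\epsilon$ payment fluctuation, which is $<\delta$. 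Among $\delta$-equilibria with both utilities positive, the tie-breaking rule (maximize the min utility, then the requester's lower bound) selects one where $U\succeq U^{\star}$ and $U^{\top}\succeq U^{\star\top}$, so the requester collects at least $v(U^{\star})-\mathcal{P}_A-\mathcal{P}_B \geq v(U^{\star})-(e_A(N_A^*)+\epsilon)-(e_B(N_B^*)+\epsilon) \geq v(U^*)-\epsilon-e_A(N_A^*)-e_B(N_B^*)-2\epsilon \geq v^*-4\epsilon$, wait — I should track constants carefully: $v^* = v(U^*)-e_A(N_A^*)-e_B(N_B^*)$, and the payment is bounded by $e_A(N_A^*)+e_B(N_B^*)+2\epsilon$ plus a $2(\delta-\epsilon)$ fluctuation, and the value lost is at most $\epsilon$, so after folding the fluctuation into the choice of $\alpha$ (taking it negligible, or absorbing into $\epsilon$) the requester gets $\geq v^*-4\epsilon$, matching the statement.

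The main obstacle I expect is the rigorous handling of the convergence near the boundary and the interplay between the pointwise limit in Lemma~\ref{lem:dirich} and the $\delta$-equilibrium selection: Lemma~\ref{lem:dirich} only guarantees convergence at $U$ with $U^{\star}\notin\partial\!\downarrow\!U$, and the threshold indicator $\mathbbm{1}(U\succeq U^{\star})$ is discontinuous, so I must argue that no candidate equilibrium $U$ sits exactly on the problematic locus, and that a deviation cannot exploit the gap between the discontinuous threshold payment and the smooth polynomial that approximates it. The finiteness of the effort-strategy set is what makes this tractable — it converts ``uniform approximation of a discontinuous function'' into ``simultaneous approximation at finitely many continuity points,'' and lets me pick $\alpha$ after enumerating the finitely many joint distributions that could arise. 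The other slightly delicate point is double-checking the degree bookkeeping so that the claimed task counts $T\geq 2(\alpha-C)$ and $T\geq \alpha-C$ are exactly right given how $\alpha$ enters the Dirichlet exponents.
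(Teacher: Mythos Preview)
Your approach is essentially the paper's: verify the degree/task bookkeeping via Theorem~\ref{thm:vmi}, invoke Lemma~\ref{lem:dirich} on the finite effort menu, and rerun the threshold equilibrium argument with $\delta$-slack. Two small corrections are worth making. First, the $\delta$-equilibrium should be anchored at $(N_A^*,N_B^*)$, not at $(N_A^{\star},N_B^*)$: the substituted threshold $N_A^{\star}$ is chosen by continuity and need not lie in Alice's \emph{finite} effort menu, so it is not a legal strategy; the whole point of requiring $U^{\star}\notin\partial\!\downarrow\!U^*$ is precisely to make Lemma~\ref{lem:dirich} apply at $U^*$ so that Alice's payment there converges to $e_A(N_A^*)+\epsilon$. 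Second, you do not need to perturb $U^{\star}$ to dodge all finitely many boundaries. The paper uses only three facts: (i) if $U\nsucceq U^{\star}$ then automatically $U^{\star}\notin\partial\!\downarrow\!U$ (since $\downarrow U$ is closed), so the payment there tends to $0$; (ii) at $U^*$ the payment tends to $e_A(N_A^*)+\epsilon$ by construction; and (iii) a global upper bound $\mathcal{P}_A(U)\leq (e_A(N_A^*)+\epsilon)\cdot\vol^{w^{\bm{\alpha}}}(slice(U))\leq e_A(N_A^*)+\epsilon+o(1)$, coming from the slice-volume normalization. These three together give the $\delta$-equilibrium and the $v^*-4\epsilon$ bound without any further pointwise convergence.
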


\begin{proof}

To distinguish, we denote the substituted threshold $U^{\star}$ for Alice's (Bob's) side as $U_A^{\star}$ ($U_B^{\star}$). First, if Alice and Bob choose an equilibrium where their joint distribution $U$ does not satisfy  $U\succeq U_A^{\star}$ or $U^{\top}\succeq U_B^{\star\top}$, then one of them's expected payment will converge to zero due to Lemma~\ref{lem:dirich} (note that if $U\nsucceq U_A^{\star}$, then we must have $U_A^{\star}\notin \partial{\downarrow U}$ such that the convergence happens at $U$; Bob's side is analogous). Moreover, both Alice and Bob pick their effort strategies from a finite discrete set. Thus, if there exist equilibria that lead to $U\succeq U_A^{\star}, U^{\top}\succeq U_B^{\star\top}$ and both of them can obtain a strictly positive utility in the limit, then Alice and Bob must pick one of such equilibria with sufficiently large $\alpha$, since we assume they will pick the equilibrium that maximizes min(Alice's expected utility, Bob's expected utility). We will show that such equilibrium exists when we relax to $\delta$-equilibrium. In fact, we will show that $(N_A^*,N_B^*)$ is such a $\delta$-equilibrium. 

Note that except $slice(U_A^{\star})$, other slices' volume will go to zero when $\alpha$ goes to infinity (Lemma~\ref{lem:dirich}). Moreover, the volume of $slice(U_A^{\star})$ is one due to our definition for normalization constant. Thus, Alice's expected payment is bounded by $e_A(N_A^*)+\epsilon+o(1)$. Moreover, when Bob chooses $N_B^*$, to obtain a strictly positive utility in the limit, Alice must play $N_A$ such that $N_A U_G N_B^*\succeq U_A^{\star}=N_A^{\star} U_G N_B^*$. Due to the fact that $U_A^{\star}$ is non-degenerate, $N_A \succeq N_A^{\star}$. Thus, Alice needs to spend at least $e_A(N_A^{\star})\geq e_A(N_A^{*})-\epsilon$ effort. This implies that Alice's utility is bounded by $2\epsilon+o(1)$. Moreover, since we pick $U_A^{\star}\prec U^*, U_A^{\star}\notin \partial{\downarrow U^* }$, we have \[ \lim_{\alpha\rightarrow \infty} \vmi^{w^{\bm{\alpha}(U_A^{\star})}}(U^*)=\mathbbm{1}(U^*\succeq U_A^{\star})=1\] whose square will also converge to 1. Therefore, when Bob chooses $N_B^*$, choosing $N_A^{*}$ will give Alice $e_A(N_A^*)+\epsilon-o(1)$ payment and $\epsilon-o(1)$ utility. We have analogous analysis for Bob's side. Thus, given $\delta>\epsilon>0$, for sufficiently large $\alpha>0$, $(N_A^*, N_B^{*})$ is a $\delta$-equilibrium. Combining the above analysis, the requester's utility will be at least $v(U^*)-\epsilon-e_A(N_A^*)-\epsilon-e_B(N_B^*)-\epsilon-o(1)\geq v(U^*)-e_A(N_A^*)-e_B(N_B^*)-3\epsilon-o(1)\geq  v^*-4\epsilon$ for sufficiently large $\alpha$. 

It's left to analyze the requirement for the number of tasks. For even $C$, the degree of the polynomials $(\vmi^{\phi_\alpha})^2,(\vmi^{\psi_\alpha})^2$ is $2(\alpha-C^2+C(C-1))=2(\alpha-C)$, thus we only need at least $2(\alpha-C)$ tasks to implement the above mechanism. For odd $C$, since $\vmi^{\phi_\alpha},\vmi^{\psi_\alpha}$ are already polynomials and we can use them directly such that we only need $\alpha-C$ number of tasks.
\end{proof}

\section{Conclusion and Discussion}\label{sec:conclusion}

We provide a novel construction of a new family of mutual information measures, volume mutual information (VMI). Aiding by VMI, we construct a family of dominantly truthful and practical multi-task peer prediction mechanisms, VMI-Mechanisms. Moreover, we provide a tractable effort incentive optimization goal for multi-task peer prediction. We show that with this goal, the optimal payment scheme is the threshold payment scheme and there always exists a sequence of dominantly truthful and practical multi-task peer prediction mechanisms, VMI-Mechanisms, that are approximately optimal.

Though the construction of approximately optimal VMI-Mechanisms requires us to perfectly know the optimal threshold, we believe this work provides the first step for optimization over dominantly truthful and practical multi-task peer prediction mechanisms. One important future direction is to relax the modeling assumption for optimization. For example, when we do not perfectly know the threshold, we can use proper densities (e.g. a smaller $\alpha$ with more uncertainty) to obtain a more robust mechanism. The approximation gradually increases the requirement for the number of tasks. When given the constraint for the number of tasks, another future direction is to use a computer-aided approach to optimize over VMI-Mechanisms directly. 

Moreover, we provide a visualization that eases the understanding of mutual information measures. Additionally, this visualization naturally leads to a visual way to fully classify all monotone mutual information in the binary case by the shape of contours. We hope this visualization in binary can also provide insights for the non-binary case. 

\newpage

\bibliographystyle{ACM-Reference-Format}
\bibliography{reference}

\appendix

\section{Basic Measure Theory} \label{sec:measure}

This section introduces several basic concepts in measure theory for rigorousness. However, a measure is just a generalization of the concepts of traditional length, area, and volume. Thus, readers can skip this section and still understand the proof in an intuitive way.  

\paragraph{Measure, integral and monotonicity \cite{simon1983lectures}} We first introduce the concept of measure space $(X,\Sigma,\mu)$. Intuitively, this measure space provides a way to measure the \emph{volume} of the set $X$'s subset. We then introduce the concept of integral. Intuitively, when the $X$ has a density, integral allows us to measure the \emph{volume} of the set $X$'s subset with this density. We require the definition of measure and integral to satisfy monotonicity: any set's volume must be greater than its subset's volume.

Let $X$ be a set. $\Sigma$ is a collection of $X$'s subsets that contains $X$ itself and is closed under complement and countable unions. $\mu$ is a non-negative function $\Sigma:\mapsto \mathbb{R}^+$. The members of $\Sigma$ are called \emph{measurable} sets. For every $A\in \Sigma$, $\mu(A)$ can be seen as $A$'s \emph{volume}. We require the $\mu$ here to satisfy \emph{monotonicity}: for every two measurable sets $A_1\subset A_2$, $\mu(A_1)\leq \mu(A_2)$. We call $(X,\Sigma,\mu)$ is a measure space. 

We also need the definition of \emph{integral} such that there exists a class of integrable functions $f$ where $\int_X f d\mu $ is well-defined. We require this integral definition to satisfy monotonicity as well: for two integrable real-valued functions $f\leq g$ on $X$, \[\int_X f d\mu\leq \int_X g d\mu.\] For a non-negative integrable function $w$ on $X$, $\int_A w d\mu$ can be seen as $A$'s volume with density $w$.  

\paragraph{Lebesgue measure and Hausdorff measure \cite{simon1983lectures}} Here we introduce two measures for the Euclidean space and their relationship. Intuitively, to provide a measure for any triangle's area on $\mathbb{R}^2$, the 2-dimensional Lebesgue measure $\mathscr{L}^2$ works. However, $\mathscr{L}^2$ will assign zero to the measure of any curve in $\mathbb{R}^2$. To provide a measure for any curve's length in $\mathbb{R}^2$, we need the Hausdorff measure $\mathscr{H}^1$.

Formally, the Lebesgue measure $\mathscr{L}^n$ is a measure on $\mathbb{R}^n$. The $\mathscr{L}^n$ of the unit cube $[0,1]^n$ is 1. The Hausdorff measure $\mathscr{H}^m,m\leq n$ is a $m$-dimensional measure on $\mathbb{R}^n$. It agrees with the classical mapping area of an embedded manifold, but it is defined for all subsets of $\mathbb{R}^n$. For Euclidean space, we use the Lebesgue measure as the default measure. That is, $\int_{E\subset\mathbb{R}^n} f(x)dx:=\int_{E\subset\mathbb{R}^n} f(x)d\mathscr{L}^n(x)$.

One commonly used technique in integration is \emph{change of variables}, which needs the \emph{area formula}. For example, when we map a square in $\mathbb{R}^2$ into a parallelogram in $\mathbb{R}^3$ via an affine transformation. The area formula shows how to calculate the area of the parallelogram (Figure \ref{fig:area}). 
\begin{figure}[!ht]
\begin{center}
    \includegraphics[width=10cm]{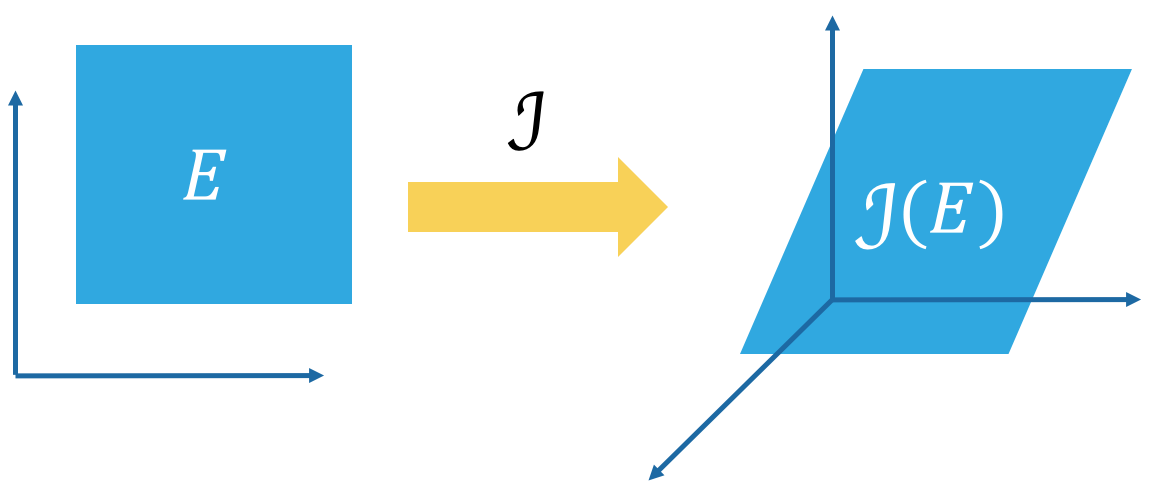}
\end{center}

\caption{An affine transformation $\mathcal{J}$ maps a set $E$ in $\mathbb{R}^2$ to $\mathcal{J}(E)$ in $\mathbb{R}^3$. The area formula will give the ratio of these two sets' areas. We use the Lebesgue measure $\mathscr{L}^2$ to measure the area of $E$ and the Hausdorff measure $\mathscr{H}^2$ to measure the area of $\mathcal{J}(E)$.}\label{fig:area}
    
\end{figure}

\begin{fact}(Area Formula/Change of Variables \cite{simon1983lectures})\label{fact:area}
    Let $\mathcal{J}:\mathbb{R}^m\mapsto \mathbb{R}^n,m\leq n$ be a one to one affine transformation where $\mathcal{J}(\mathbf{v})=\mathbf{J}\mathbf{v}+\mathbf{v}_0$. Let $E\subset \mathbb{R}^m$ be a measurable set and $f:E\mapsto \mathbb{R}^+$ be an integrable function, then
    \[ \int_{\mathcal{J}(E)} f(\mathcal{J}^{-1}(\mathbf{v'}))d\mathscr{H}^m(\mathbf{v'})=\int_E f(\mathbf{v}) \sqrt{\det( \mathbf{J}^{\top}\mathbf{J})} d\mathscr{L}^m(\mathbf{v})=\int_E f(\mathbf{v}) \sqrt{\det( \mathbf{J}^{\top}\mathbf{J})} d\mathbf{v}\]
\end{fact}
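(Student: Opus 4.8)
The plan is to reduce the affine map $\mathcal{J}$ to a composition of three pieces whose effect on Hausdorff measure is transparent — a translation, a linear isometric embedding $\mathbb{R}^m\hookrightarrow\mathbb{R}^n$, and a self-map of $\mathbb{R}^m$ — and then to pass from indicator functions to arbitrary non-negative integrable $f$ by the usual measure-theoretic approximation.

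First I would dispose of the translation: since $\mathscr{H}^m$ is defined through diameters of covering families, it is invariant under isometries of $\mathbb{R}^n$, so replacing $\mathcal{J}(\mathbf{v})=\mathbf{J}\mathbf{v}+\mathbf{v}_0$ by $\mathbf{v}\mapsto\mathbf{J}\mathbf{v}$ changes neither side. Because $\mathcal{J}$ is one-to-one and affine, $\mathbf{J}$ has rank $m$, so I can take the polar decomposition $\mathbf{J}=\mathbf{Q}\mathbf{S}$ with $\mathbf{S}=(\mathbf{J}^{\top}\mathbf{J})^{1/2}$ symmetric positive definite on $\mathbb{R}^m$ and $\mathbf{Q}=\mathbf{J}\mathbf{S}^{-1}$ an $n\times m$ matrix satisfying $\mathbf{Q}^{\top}\mathbf{Q}=I_m$; then $\mathbf{J}^{\top}\mathbf{J}=\mathbf{S}^{2}$, hence $\sqrt{\det(\mathbf{J}^{\top}\mathbf{J})}=\det\mathbf{S}$. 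The map $\mathbf{Q}$ is a linear isometry of $\mathbb{R}^m$ onto an $m$-plane in $\mathbb{R}^n$, so it preserves $\mathscr{H}^m$ and, using the normalization $\mathscr{H}^m=\mathscr{L}^m$ on $\mathbb{R}^m$, carries Lebesgue measure to Hausdorff measure undistorted: $\mathscr{H}^m(\mathbf{Q}A)=\mathscr{L}^m(A)$. For the automorphism $\mathbf{S}$ of $\mathbb{R}^m$ I would invoke the classical linear change-of-variables identity $\mathscr{L}^m(\mathbf{S}A)=|\det\mathbf{S}|\,\mathscr{L}^m(A)$ (obtainable by factoring $\mathbf{S}$ into elementary matrices, or from uniqueness of translation-invariant Radon measures up to a scalar). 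Composing, for every measurable $E\subset\mathbb{R}^m$ I get $\mathscr{H}^m(\mathbf{J}E)=\mathscr{H}^m(\mathbf{Q}\mathbf{S}E)=\mathscr{L}^m(\mathbf{S}E)=\det\mathbf{S}\cdot\mathscr{L}^m(E)=\sqrt{\det(\mathbf{J}^{\top}\mathbf{J})}\,\mathscr{L}^m(E)$, which is precisely the claimed formula for $f=\mathbbm{1}_E$, since $\mathbbm{1}_E\circ\mathcal{J}^{-1}=\mathbbm{1}_{\mathcal{J}(E)}$.

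Next I would bootstrap from indicators to general $f$: by linearity of the integral the identity holds for non-negative simple functions, and for an arbitrary non-negative integrable $f:E\to\mathbb{R}^+$ I would write $f$ as the increasing pointwise limit of simple functions and apply the monotone convergence theorem on both sides, with the constant $\sqrt{\det(\mathbf{J}^{\top}\mathbf{J})}$ passing through the limit; the final equality in the statement is just the convention $\int d\mathscr{L}^m=\int d\mathbf{v}$.

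The main obstacle is not conceptual but lies in the two ``black-box'' facts used in the middle step — that $\mathscr{H}^m$ and $\mathscr{L}^m$ coincide on $\mathbb{R}^m$ under the standard normalization of the Hausdorff measure, and that $\mathscr{L}^m$ scales by $|\det|$ under linear maps — both of which are standard (see \cite{simon1983lectures}); once granted, the polar-decomposition reduction and the approximation argument are routine. I would also note the degenerate edge for completeness: were the injectivity hypothesis dropped and $\mathbf{J}$ of rank $<m$, the image $\mathcal{J}(E)$ would sit in a lower-dimensional plane with $\mathscr{H}^m$-measure zero, matching $\sqrt{\det(\mathbf{J}^{\top}\mathbf{J})}=0$, but under the stated hypotheses this case does not occur.
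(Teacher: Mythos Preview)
The paper does not prove this statement at all: it is recorded as a \emph{Fact} with a citation to \cite{simon1983lectures} and used as a black box in the proof of Theorem~\ref{thm:vmi}. So there is no ``paper's own proof'' to compare against. Your argument is a correct, standard derivation of the affine area formula via polar decomposition $\mathbf{J}=\mathbf{Q}\mathbf{S}$, isometry-invariance of $\mathscr{H}^m$, the identification $\mathscr{H}^m=\mathscr{L}^m$ on $\mathbb{R}^m$, and the linear change-of-variables factor $|\det\mathbf{S}|$, followed by the routine simple-function / monotone-convergence extension; the two ``black boxes'' you flag are exactly the pieces one would cite from \cite{simon1983lectures}, so your write-up effectively unpacks what the paper takes for granted.
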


\begin{corollary}
    Let $\mathcal{J}:\mathbb{R}^m\mapsto \mathbb{R}^n,m\leq n$ be a one to one affine transformation where $\mathcal{J}(\mathbf{v})=\mathbf{J}\mathbf{v}+\mathbf{v}_0$. Let $E\subset \mathbb{R}^m$ be a measurable set, then by defining $\vol(E):=\mathscr{L}^m(E)$, $\vol(\mathcal{J}(E)):=\mathscr{H}^m(\mathcal{J}(E))$, 
    \[ \vol(\mathcal{J}(E))=\sqrt{\det( \mathbf{J}^{\top}\mathbf{J})} \vol(E)\]
\end{corollary}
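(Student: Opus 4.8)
The plan is to derive this directly from the Area Formula (Fact~\ref{fact:area}) by specializing to the constant function $f\equiv 1$. Concretely, since $\mathcal{J}$ is a one-to-one affine map from $\mathbb{R}^m$ into $\mathbb{R}^n$, the linear part $\mathbf{J}$ has full column rank $m$, so $\mathbf{J}^\top\mathbf{J}$ is a symmetric positive-definite $m\times m$ matrix and $\sqrt{\det(\mathbf{J}^\top\mathbf{J})}>0$ is a well-defined positive constant. I would take $E\subset\mathbb{R}^m$ measurable and apply Fact~\ref{fact:area} with $f(\mathbf{v})=1$ for all $\mathbf{v}\in E$.

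The left-hand side of the area formula then becomes $\int_{\mathcal{J}(E)} 1\, d\mathscr{H}^m(\mathbf{v}') = \mathscr{H}^m(\mathcal{J}(E))$, which by definition is $\vol(\mathcal{J}(E))$. On the right-hand side, because the Jacobian factor $\sqrt{\det(\mathbf{J}^\top\mathbf{J})}$ is a constant (it does not depend on $\mathbf{v}$, $\mathcal{J}$ being affine), it pulls out of the integral, leaving $\sqrt{\det(\mathbf{J}^\top\mathbf{J})}\int_E 1\, d\mathscr{L}^m(\mathbf{v}) = \sqrt{\det(\mathbf{J}^\top\mathbf{J})}\,\mathscr{L}^m(E) = \sqrt{\det(\mathbf{J}^\top\mathbf{J})}\,\vol(E)$. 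Equating the two sides gives exactly the claimed identity $\vol(\mathcal{J}(E)) = \sqrt{\det(\mathbf{J}^\top\mathbf{J})}\,\vol(E)$.

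The only point that needs a word of care is integrability of the constant function: if $\mathscr{L}^m(E)<\infty$, then $f\equiv 1$ is integrable on $E$ and Fact~\ref{fact:area} applies verbatim. If $\mathscr{L}^m(E)=\infty$, one can either invoke the monotone convergence / exhaustion of $E$ by finite-measure sets $E_k\uparrow E$ (applying the formula on each $E_k$ and letting $k\to\infty$, using monotonicity of both $\mathscr{L}^m$ and $\mathscr{H}^m$), or simply observe that both sides are then $+\infty$ since $\sqrt{\det(\mathbf{J}^\top\mathbf{J})}>0$. Either way the identity holds. There is no real obstacle here — the corollary is a one-line specialization of the Area Formula — so the "hard part'' is merely being precise about the degenerate/infinite-measure case, which the exhaustion argument handles cleanly.
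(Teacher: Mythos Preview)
Your proposal is correct and is exactly the intended argument: the paper states this as an immediate corollary of Fact~\ref{fact:area} without further proof, and specializing the area formula to $f\equiv 1$ is precisely how one reads it off. Your extra remark handling the infinite-measure case via exhaustion is a nice bit of care beyond what the paper spells out.
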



\section{Proof of Theorem~\ref{thm:vmi}}

The part 1, $\vmi^w$ is information-monotone and non-negative, follows directly from Lemma~\ref{lem:key} and Lemma~\ref{lem:mono}. To show $\vmi^w$ vanishes on independent variables, notice that the uninformative distributions form a $C-1$-space whose dimension is strictly less than $C(C-1)$. Thus, the $C(C-1)$-Hausdorff measure gives it zero volume. To show the rest of the results, here we introduce two linear algebra operations and their properties that the proof will use.

\subsection{Some Linear Algebra Operations}

\paragraph{Kronecker product} The Kronecker product \cite{vec} of matrix $\mathbf{A}=[A_{ij}]_{ij}\in\mathbb{R}^{m\times n}$ and matrix $\mathbf{B}\in\mathbb{R}^{p\times q}$ is defined as a $mp\times nq$ matrix such that $\mathbf{A}\otimes \mathbf{B}=\begin{bmatrix}
    A_{11} \mathbf{B}&\cdots & A_{1n} \mathbf{B}\\
    \cdots&\cdots & \cdots\\
    A_{n1} \mathbf{B}&\cdots & A_{nn} \mathbf{B}\\
\end{bmatrix}$

\begin{fact} \cite{vec}\label{fact:otimes}
Here are several properties of the Kronecker product. 
\begin{itemize}
    \item Transpose: $(\mathbf{A}\otimes \mathbf{B})^{\top}=\mathbf{A}^{\top}\otimes \mathbf{B}^{\top}$
    \item Determinant: let $\mathbf{A}$ be a $n\times n$ matrix and $\mathbf{B}$ be a $m\times m$ matrix, \[\det((\mathbf{A}\otimes \mathbf{B}))=\det(\mathbf{A})^m \det(\mathbf{B})^n\]
    \item Mixed-product: $(\mathbf{A}\otimes \mathbf{B})(\mathbf{C}\otimes \mathbf{D})=(\mathbf{A}\mathbf{C})\otimes (\mathbf{B}\mathbf{D})$
\end{itemize}

\end{fact}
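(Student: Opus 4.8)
The plan is to establish the three identities in the stated order, since the determinant formula will reuse the transpose and mixed-product identities. Throughout I index the rows of a Kronecker product $\mathbf{A}\otimes\mathbf{B}$ with $\mathbf{A}\in\mathbb{R}^{m\times n}$, $\mathbf{B}\in\mathbb{R}^{p\times q}$ by pairs $(i,k)$ with $i\in[m],k\in[p]$, and its columns by pairs $(j,l)$ with $j\in[n],l\in[q]$, so that the block definition is exactly the statement that the $((i,k),(j,l))$ entry equals $A_{ij}B_{kl}$.

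For the transpose identity, transposing swaps rows with columns, so the $((j,l),(i,k))$ entry of $(\mathbf{A}\otimes\mathbf{B})^{\top}$ is $A_{ij}B_{kl}=(\mathbf{A}^{\top})_{ji}(\mathbf{B}^{\top})_{lk}$, which is precisely the $((j,l),(i,k))$ entry of $\mathbf{A}^{\top}\otimes\mathbf{B}^{\top}$ under the same convention; hence the two matrices coincide. For the mixed-product identity, take $\mathbf{A}\in\mathbb{R}^{m\times n}$, $\mathbf{C}\in\mathbb{R}^{n\times r}$, $\mathbf{B}\in\mathbb{R}^{p\times q}$, $\mathbf{D}\in\mathbb{R}^{q\times s}$, and compute the product blockwise: viewing $\mathbf{A}\otimes\mathbf{B}$ as an $m\times n$ array with $(i,j)$-block $A_{ij}\mathbf{B}$ and $\mathbf{C}\otimes\mathbf{D}$ as an $n\times r$ array with $(j,k)$-block $C_{jk}\mathbf{D}$, the $(i,k)$-block of the product is $\sum_{j=1}^{n}(A_{ij}\mathbf{B})(C_{jk}\mathbf{D})=\big(\sum_{j}A_{ij}C_{jk}\big)\mathbf{B}\mathbf{D}=(\mathbf{A}\mathbf{C})_{ik}\,\mathbf{B}\mathbf{D}$, which is the $(i,k)$-block of $(\mathbf{A}\mathbf{C})\otimes(\mathbf{B}\mathbf{D})$.

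For the determinant identity, with $\mathbf{A}$ of size $n\times n$ and $\mathbf{B}$ of size $m\times m$, I first use the mixed-product identity to factor $\mathbf{A}\otimes\mathbf{B}=(\mathbf{A}\mathbf{I}_n)\otimes(\mathbf{I}_m\mathbf{B})=(\mathbf{A}\otimes\mathbf{I}_m)(\mathbf{I}_n\otimes\mathbf{B})$, so $\det(\mathbf{A}\otimes\mathbf{B})=\det(\mathbf{A}\otimes\mathbf{I}_m)\,\det(\mathbf{I}_n\otimes\mathbf{B})$. The factor $\mathbf{I}_n\otimes\mathbf{B}$ is block-diagonal with $n$ diagonal blocks equal to $\mathbf{B}$, so $\det(\mathbf{I}_n\otimes\mathbf{B})=\det(\mathbf{B})^{n}$. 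For $\det(\mathbf{A}\otimes\mathbf{I}_m)$, I reduce to the triangular case: over $\mathbb{C}$ take a Schur factorization $\mathbf{A}=\mathbf{U}\mathbf{T}\mathbf{U}^{*}$ with $\mathbf{U}$ unitary and $\mathbf{T}$ upper triangular; the mixed-product identity gives $\mathbf{A}\otimes\mathbf{I}_m=(\mathbf{U}\otimes\mathbf{I}_m)(\mathbf{T}\otimes\mathbf{I}_m)(\mathbf{U}^{*}\otimes\mathbf{I}_m)$, and $\mathbf{U}\otimes\mathbf{I}_m$ is unitary because $(\mathbf{U}\otimes\mathbf{I}_m)(\mathbf{U}\otimes\mathbf{I}_m)^{*}=(\mathbf{U}\mathbf{U}^{*})\otimes\mathbf{I}_m=\mathbf{I}_{nm}$ by the transpose (conjugate) and mixed-product identities; hence $\det(\mathbf{A}\otimes\mathbf{I}_m)=\det(\mathbf{T}\otimes\mathbf{I}_m)$. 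Since $\mathbf{T}\otimes\mathbf{I}_m$ is block upper triangular with diagonal blocks $T_{ii}\mathbf{I}_m$, its determinant is $\prod_{i=1}^{n}T_{ii}^{m}=\big(\prod_{i}T_{ii}\big)^{m}=\det(\mathbf{A})^{m}$. Combining the two factors yields $\det(\mathbf{A}\otimes\mathbf{B})=\det(\mathbf{A})^{m}\det(\mathbf{B})^{n}$.

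I expect the determinant identity to be the only nontrivial step: the transpose and mixed-product facts are routine index/block bookkeeping, whereas the determinant needs the extra idea of the split $\mathbf{A}\otimes\mathbf{B}=(\mathbf{A}\otimes\mathbf{I}_m)(\mathbf{I}_n\otimes\mathbf{B})$ combined with a triangularization argument — equivalently one could instead invoke the perfect-shuffle permutation matrix realizing $\mathbf{A}\otimes\mathbf{I}_m\sim\mathbf{I}_m\otimes\mathbf{A}$, which is a self-contained permutation-matrix computation. The only thing requiring care is keeping the $(i,k)/(j,l)$ index pairing consistent and tracking the rectangular shapes in the mixed-product step.
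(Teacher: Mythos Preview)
Your argument is correct. The paper does not actually prove this statement: it is recorded as a \emph{Fact} with a citation to \cite{vec} and is used without proof in the appendix computations. So there is no ``paper's own proof'' to compare against; you have supplied a standard self-contained argument where the paper simply invokes a reference. The only minor remark is that your Schur-based step is carried out over $\mathbb{C}$; for real $\mathbf{A}$ the identity then follows either by viewing $\mathbf{A}$ as a complex matrix (the determinant is the same polynomial in the entries) or, as you note, by the perfect-shuffle permutation argument, which stays entirely over the base field.
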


\paragraph{Vectorization}
For a matrix $\mathbf{A}\in\mathbb{R}^{m\times n}$, the vectorization \cite{vec} of $\mathbf{A}$ is defined as a $m*n$-dimensional column vector $\mvec(\mathbf{A})$ by stacking all column vectors of $\mathbf{A}$ one under the other. For example, when $\mathbf{A}=\begin{bmatrix}
    1 & 3\\ 2 & 4
\end{bmatrix}$, $\mvec(\mathbf{A})=\begin{bmatrix}
    1 \\ 2\\ 3 \\ 4
\end{bmatrix}$. 

\begin{fact} \cite{vec}\label{fact:vec}
    $\mvec(\mathbf{A}\mathbf{B}\mathbf{C})=(\mathbf{C}^{\top}\otimes\mathbf{A})\mvec(\mathbf{B})$
\end{fact}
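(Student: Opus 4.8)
The plan is to reduce the identity to the rank-one case and then sum, using only the definition of $\mvec$ fixed by the paper's example (column-stacking) and the mixed-product property of the Kronecker product recorded in Fact~\ref{fact:otimes}. First I would record the elementary fact that for column vectors $\mathbf{a}$ and $\mathbf{c}$, the vectorization of an outer product satisfies $\mvec(\mathbf{a}\mathbf{c}^{\top})=\mathbf{c}\otimes\mathbf{a}$: the $j^{th}$ column of $\mathbf{a}\mathbf{c}^{\top}$ is $c_j\mathbf{a}$, and stacking these columns yields exactly the block vector whose $j^{th}$ block is $c_j\mathbf{a}$, which is $\mathbf{c}\otimes\mathbf{a}$. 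Taking $\mathbf{a}=\mathbf{B}\mathbf{e}_j$ (the $j^{th}$ column of $\mathbf{B}$) and $\mathbf{c}=\mathbf{e}_j$ and summing over $j$ gives, since $\mathbf{B}=\sum_j(\mathbf{B}\mathbf{e}_j)\mathbf{e}_j^{\top}$, the useful identity $\mvec(\mathbf{B})=\sum_j \mathbf{e}_j\otimes(\mathbf{B}\mathbf{e}_j)$.

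Next I would expand the triple product into rank-one pieces in the same way: writing $\mathbf{B}=\sum_j(\mathbf{B}\mathbf{e}_j)\mathbf{e}_j^{\top}$ gives
\[ \mathbf{A}\mathbf{B}\mathbf{C}=\sum_j (\mathbf{A}\mathbf{B}\mathbf{e}_j)(\mathbf{e}_j^{\top}\mathbf{C})=\sum_j (\mathbf{A}\mathbf{B}\mathbf{e}_j)(\mathbf{C}^{\top}\mathbf{e}_j)^{\top}. \]
Applying the outer-product identity term by term yields $\mvec(\mathbf{A}\mathbf{B}\mathbf{C})=\sum_j (\mathbf{C}^{\top}\mathbf{e}_j)\otimes(\mathbf{A}\mathbf{B}\mathbf{e}_j)$. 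Then the mixed-product property of Fact~\ref{fact:otimes} factors each summand as $(\mathbf{C}^{\top}\mathbf{e}_j)\otimes(\mathbf{A}\,\mathbf{B}\mathbf{e}_j)=(\mathbf{C}^{\top}\otimes\mathbf{A})\big(\mathbf{e}_j\otimes(\mathbf{B}\mathbf{e}_j)\big)$, so that
\[ \mvec(\mathbf{A}\mathbf{B}\mathbf{C})=(\mathbf{C}^{\top}\otimes\mathbf{A})\sum_j\big(\mathbf{e}_j\otimes(\mathbf{B}\mathbf{e}_j)\big)=(\mathbf{C}^{\top}\otimes\mathbf{A})\,\mvec(\mathbf{B}), \]
using the identity for $\mvec(\mathbf{B})$ established in the first step. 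This is the claim.

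The argument is entirely routine and has no real obstacle; the only thing to be careful about is index bookkeeping — keeping the column-stacking $\mvec$ convention (as fixed by the paper's example) consistent throughout, and the resulting left-versus-right placement of $\mathbf{A}$ and $\mathbf{C}^{\top}$ inside the Kronecker product, since the opposite stacking convention would transpose the formula. An alternative, purely computational route is to verify the identity entrywise by chasing the linear index of $\mvec(\mathbf{A}\mathbf{B}\mathbf{C})$ back to a matrix index pair and matching it against the block structure of $\mathbf{C}^{\top}\otimes\mathbf{A}$; this also works but is messier, so I would present the rank-one decomposition above.
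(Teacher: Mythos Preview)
Your proof is correct. However, the paper does not actually prove Fact~\ref{fact:vec}: it is stated as a known result with a citation to \cite{vec} and used without justification. So there is no proof in the paper to compare your argument against; you have supplied a standard and clean derivation of a fact the paper simply quotes.
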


\paragraph{Proof of the Part 2: VMI provides an interpretation of \dmi: $\dmi^{C-1}\propto \vmi$}

In this part, we show that $\vmi(X;Y)\propto \dmi(X;Y)^{C-1}$.

\begin{proof}

\begin{align*}
    \vmi(X;Y)= \vol(\downarrow  U_{X,Y})=\mathscr{H}^{C(C-1)}(\downarrow  U_{X,Y})
\end{align*}

For simplicity, we replace $U_{X,Y}$ by $U$. It's left to calculate the volume of $\downarrow  U$. We will show that for every $U$, $\mathscr{H}^{C(C-1)}(\downarrow  U)\propto |\det(U)|^{C-1}$. We will first show that $\downarrow  U$ is an affine mapping $\mathcal{J}$ from a subset $E$ in $\mathbb{R}^{C(C-1)}$ whose volume is non-zero and bounded. $E$ is also independent of $U$. With the corollary of the area formula, the volume of $\downarrow  U$ will be proportional to $\sqrt{\det( \mathbf{J}^{\top}\mathbf{J})} $. It's only left to show that $\sqrt{\det( \mathbf{J}^{\top}\mathbf{J})}\propto |\det(U)|^{C-1} $. 


\paragraph{Constructing $E$}

For each column-stochastic matrix $T$, we define $T_*:=T(1:C-1,1:C)$. Let $E$ be the set of all possible $\mvec(T_*)$. Note that $E$ is in $\mathbb{R}^{C(C-1)}$. The following claim shows that the volume of $E$ is non-zero and bounded. 

\begin{claim}
    \[0<\vol(E):=\mathscr{L}^{C(C-1)}(E)<1\]
\end{claim}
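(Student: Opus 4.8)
The claim asserts that the set $E = \{\mathrm{vec}(T_*) : T \text{ column-stochastic}\}$, where $T_* := T(1{:}C-1, 1{:}C)$ is the matrix obtained by deleting the last row of the $C \times C$ column-stochastic matrix $T$, has finite positive $\mathscr{L}^{C(C-1)}$-measure. My plan is to give an explicit description of $E$ as a product of simplices and then compute (or bound) its Lebesgue volume directly.

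**Key steps.**
First I would observe that a column-stochastic matrix $T$ is completely determined by its columns, and each column is an arbitrary point in the standard simplex $\Delta_{C-1} = \{(t_1,\dots,t_C) : t_i \ge 0, \sum_i t_i = 1\} \subset \mathbb{R}^C$. Deleting the last row means recording, for each column, only the first $C-1$ coordinates $(t_1,\dots,t_{C-1})$; the last coordinate is recovered as $1 - \sum_{i=1}^{C-1} t_i$. The image of $\Delta_{C-1}$ under this coordinate projection is exactly the "corner simplex" $\Delta := \{(t_1,\dots,t_{C-1}) : t_i \ge 0, \sum_{i=1}^{C-1} t_i \le 1\} \subset \mathbb{R}^{C-1}$, which is a genuine full-dimensional simplex in $\mathbb{R}^{C-1}$ with volume $\mathscr{L}^{C-1}(\Delta) = 1/(C-1)!$. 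Since the $C$ columns of $T$ range independently over $\Delta_{C-1}$, the set $E$ is the $C$-fold product $\Delta \times \Delta \times \cdots \times \Delta \subset \mathbb{R}^{C(C-1)}$ (after identifying $\mathrm{vec}(T_*)$ with the concatenation of the $C$ truncated columns). Therefore
\[
\vol(E) = \mathscr{L}^{C(C-1)}(E) = \left(\mathscr{L}^{C-1}(\Delta)\right)^C = \left(\frac{1}{(C-1)!}\right)^C.
\]
This quantity is strictly positive and, since $(C-1)! \ge 1$, it is at most $1$, with strict inequality whenever $C \ge 3$ (and for $C = 2$ it equals $1$, so one should state the claim as $0 < \vol(E) \le 1$, or note $E$ is a proper subset to keep strictness — a harmless bookkeeping point I would flag).

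**Main obstacle.**
There is no serious analytic difficulty here; the only thing requiring care is the bookkeeping that identifies $E$ precisely with the product of corner simplices — in particular checking that the vectorization $\mathrm{vec}(T_*)$ lists the columns of $T_*$ in order, so that the coordinates of $\mathrm{vec}(T_*) \in \mathbb{R}^{C(C-1)}$ split cleanly into $C$ blocks of size $C-1$, each block being an independent point of $\Delta$. Once that identification is made, the volume is a one-line computation via Fubini (the Lebesgue measure of a product is the product of the measures) together with the standard formula for the volume of the standard simplex. I would therefore present the argument as: (i) parametrize $T$ by its columns; (ii) identify $E$ with $\Delta^C$; (iii) apply Fubini and the simplex volume formula; (iv) conclude $0 < \vol(E) \le 1$.
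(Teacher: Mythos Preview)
Your approach is correct and in fact sharper than the paper's. The paper does not compute $\vol(E)$ exactly; it simply observes that $E\subseteq[0,1]^{C(C-1)}$ to get the upper bound, and that $[0,1/C]^{C(C-1)}\subseteq E$ (since any vector with all entries in $[0,1/C]$ has column sums at most $1$) to get the lower bound $\vol(E)\geq C^{-C(C-1)}>0$. Your identification $E=\Delta^C$ and the resulting exact formula $\vol(E)=((C-1)!)^{-C}$ is cleaner and more informative; it also makes transparent, as you note, that the strict inequality $\vol(E)<1$ actually fails at $C=2$, where $E=[0,1]^2$ has volume exactly $1$. The paper's bounding argument glosses over this edge case, but since the claim is only used to establish $0<\vol(E)<\infty$ (so that $\vol(\downarrow U)\propto|\det(U)|^{C-1}$), your flag is correct and harmless for the downstream argument.
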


\begin{proof}
The set of $T_*$ denotes all $(C-1)\times C$ matrix whose every entry is non-negative and every column sums to a real number in $[0,1]$. Thus, $\vol(E)< 1$ and for a $C(C-1)$ vector, if its every entry is in $[0,\frac{1}{C}]$, then it must be in $E$. Therefore, 

\begin{align*}
    \vol(E):=\mathscr{L}^{C(C-1)}(E)\geq \int_{t_1=0}^{\frac{1}{C}}\int_{t_2=0}^{\frac{1}{C}}\cdots \int_{t_{C(C-1)}=0}^{\frac{1}{C}} d t_1 d t_2 \cdots d t_{C(C-1)}>0
\end{align*}
Thus, $0<\vol(E)<1$. 
\end{proof}

\paragraph{Constructing $\mathcal{J}$}
We start to construct an affine mapping $\mathcal{J}$ from $E$ to $\downarrow  U$. Since every column of $T$ sums to 1, we can represent $\mvec(T)$ as an affine transformation of $\mvec(T_*)$:

\[ \mvec(T)= (\mathbf{I}\otimes \mathbf{W})\mvec(T_*)+\mathbf{c}_0 \]

Here $\mathbf{I}$ is a $C\times C$ identity matrix. $\mathbf{W}$ is a $C\times (C-1)$ matrix, which is a $(C-1)\times (C-1)$ identity matrix with an additional all $-1$ row:

\[ \mathbf{W}:=\begin{bmatrix}
     1 & 0 & \cdots & 0\\
     0 & 1 & \cdots & 0\\
     \cdots & \cdots & \cdots & \cdots\\
     0 & 0 & \cdots & 1\\
    -1 & -1 & \cdots & -1
\end{bmatrix} \]

$\mathbf{c}_0$ is a $C(C-1)$-dimensional column vector where all entries are zero except that the $C^{th}, 2C^{th}, 3C^{th},\cdots$ entries are all one. 

For each element $\mvec(TU)\in \downarrow  U$, 

\begin{align*}
\mvec(TU) & = (U^{\top}\otimes \mathbf{I}) \mvec(T)\\
& = (U^{\top}\otimes \mathbf{I})( (\mathbf{I}\otimes \mathbf{W})\mvec(T_*)+\mathbf{c}_0)\\
& = (U^{\top}\otimes \mathbf{W})\mvec(T_*)+ (U^{\top}\otimes \mathbf{I}) \mathbf{c}_0
\end{align*}

The first equality uses Fact~\ref{fact:vec} and the third equality uses Fact~\ref{fact:otimes}. 

Therefore, $\downarrow  U$ is the image of $E$ with an affine transformation. The corresponding matrix $\mathbf{J}=U^{\top}\otimes \mathbf{W}$.

By applying the area formula (Fact~\ref{fact:area}), 

\begin{align*}
    \vol(\downarrow  U)&= \mathscr{H}^{C(C-1)}(\downarrow  U)\\
    &=\sqrt{\det(\mathbf{J}^{\top}\mathbf{J})}\vol(E)\\ \tag{Fact~\ref{fact:otimes}}
    &= \sqrt{\det((UU^{\top})\otimes (\mathbf{W}^{\top}\mathbf{W}) )}\vol(E)\\ \end{align*}

$\mathbf{W}^{\top}\mathbf{W}$'s dimension is $(C-1)\times (C-1)$ and $UU^{\top}$'s dimension is $C\times C$. Moreover, $\mathbf{W}^{\top}\mathbf{W}=\mathbf{I}+\mathbf{1}$ where $\mathbf{1}$ is a $(C-1)\times (C-1)$ matrix whose entries are all 1. By Gaussian elimination and induction, we can show that the determinant of $\mathbf{W}^{\top}\mathbf{W}$ is $C$. Therefore, based on the determinant property of Kronecker product (Fact~\ref{fact:otimes}), we have 

\begin{align*}
    \vol(\downarrow  U)     &= \sqrt{\det((UU^{\top})\otimes (\mathbf{W}^{\top}\mathbf{W}) )}\vol(E)\\ 
    &= C^{\frac{C}{2}} |\det(U)|^{C-1}  \vol(E)\\ \tag{$0<\vol(E)<1$}
    &\propto |\det(U)|^{C-1}
\end{align*}

\end{proof}

\paragraph{Proof of Part 3: Polynomial Volume Mutual Information}

In this part, we will show that when the density function $w$ is a non-negative degree $d_w$ polynomial, when $C$ is an odd number, $\vmi^D$ is a degree $d_w+C^{C-1}$ polynomial and when $C$ is an even number, $\dmi* \vmi^D$ is a degree $d_w+C^C$ polynomial. Moreover, $\dmi* \vmi^D$ is also information-monotone. 

\begin{proof}

The proof of part 2 shows that $\downarrow  U$ is an affine mapping $\mathcal{J}$ from a subset $E$. Recall that $E$ is the set of all possible $\mvec(T_*)$ where $T_*:=T(1:C-1,1:C)$. We also proved that $\sqrt{\det(\mathbf{J}^{\top}\mathbf{J})}=C^{\frac{C}{2}} |\det(U)|^{C-1}$. Thus, by changing the variables, we have 

\begin{align*}
    \vmi^w(U)&=\int_{x \in \downarrow  U}w(x)d\mathscr{H}^{C(C-1)}(x)\\&=C^{\frac{C}{2}}  |\det(U)|^{C-1} \int_{\forall j, t_{1j}+t_{2j}+\cdots t_{c-1,j}\leq 1,\forall i, t_{ij}\geq 0}w(TU)(\Pi_{j} dt_{1j}dt_{2j}\cdots dt_{c-1,j})\\
\end{align*}

When $w(U)$ is a degree $d_w$ polynomial of entries of $U$, then $w(TU)$ is also a polynomial of the entries of $U$ and $T$. Moreover, fixing $T$, $w(TU)$ is still a degree $d_w$ polynomial for $U$. 

$w(TU)$ can be written as the sum of terms of format $h(T)U_{c_1,c'_1}U_{c_2,c'_2}\cdots U_{c_k,c'_k},k\leq d_w$. We can take out $U_{c_1,c'_1}U_{c_2,c'_2}\cdots U_{c_k,c'_k}$ and only integrate $h(T)$. Thus, after integration, \[\int_{\forall j, t_{1j}+t_{2j}+\cdots t_{c-1,j}\leq 1,\forall i, t_{ij}\geq 0}w(TU)(\Pi_{j} dt_{1j}dt_{2j}\cdots dt_{c-1,j})\] is still a degree $d_w$ formula for $U$'s entries. Note that when $C$ is an odd number, $|\det(U)|^{C-1} = (\det(U))^{C-1} $ is a degree $C(C-1)$ polynomial. When $C$ is even number, we can multiply $\dmi$ to avoid the absolute $|\cdot|$ symbol but still keep the information-monotonicity (the multiplication of two non-negative monotone functions are still monotone). Therefore, when $C$ is an odd number, $\vmi^D$ is a degree $d_w+C(C-1)$ polynomial. When $C$ is an even number, $\dmi* \vmi^D$ is a degree $d_w+C^2$ polynomial and an information-monotone measure.




\end{proof}





\section{Additional proofs}\label{sec:additionalproof}
{
\renewcommand{\thetheorem}{\ref{obs:binary}}

\begin{observation}
In the binary case, there is a one to one mapping from $[0,1]^3$ to $L$. In fact,
\[L=\{\begin{bmatrix} s & t\\ 1-s & 1-t\end{bmatrix} \begin{bmatrix}p & 0\\ 0 & 1-p\end{bmatrix}|s,t,p\in[0,1] \}.\]

Fixing $p$, there is an one to one mapping from $[0,1]^2$ to $\downarrow  U_p$ where $U_p= \begin{bmatrix}p & 0\\ 0 & 1-p\end{bmatrix}$ and $L=\{\downarrow  U_p|p\in[0,1]\}$.

\end{observation}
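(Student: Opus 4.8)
The plan is to write down the parametrization map explicitly, check its image lies in $L$, and then invert it by reading off the column marginals. First I would expand the product
\[\begin{bmatrix} s & t\\ 1-s & 1-t\end{bmatrix}\begin{bmatrix}p & 0\\ 0 & 1-p\end{bmatrix}=\begin{bmatrix} sp & t(1-p)\\ (1-s)p & (1-t)(1-p)\end{bmatrix}.\]
Each entry is a product of numbers in $[0,1]$ and hence nonnegative, and the four entries sum to $p\big(s+(1-s)\big)+(1-p)\big(t+(1-t)\big)=p+(1-p)=1$, so the displayed matrix is a valid joint distribution; this gives the inclusion ``$\subseteq$'' in the claimed description of $L$.

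For the reverse inclusion and for injectivity I would invert the map using column sums. Given $U=(u_{ij})\in L$, put $p:=u_{00}+u_{10}$, the first column sum; since all entries of $U$ add to $1$, the second column sum equals $1-p$. Matching against the display, any preimage must use this value of $p$, and whenever $0<p<1$ we are forced to take $s=u_{00}/p$ and $t=u_{01}/(1-p)$, both of which lie in $[0,1]$ and reproduce $U$. This already yields a genuine bijection between $\{(s,t,p):s,t\in[0,1],\,0<p<1\}$ and the joint distributions whose two column sums are both strictly positive. The one place where ``one to one'' must be read with care is the degenerate boundary $p\in\{0,1\}$: when $p=0$ the first column of $U$ vanishes so $s$ is unconstrained (and symmetrically $t$ when $p=1$), hence the map collapses these two faces of $[0,1]^3$ onto lower-dimensional subsets of $L$. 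This is inconsequential for everything built on the observation, since those exceptional slices are $\mathscr{H}^{C(C-1)}$-null; in the write-up I would either restrict the stated bijection to the interior or simply flag this.

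For the ``fixing $p$'' part I would unwind Definition~\ref{def:lpo}: by definition $\downarrow U_p=\{T U_p: T\text{ is column-stochastic}\}$, and every $2\times2$ column-stochastic matrix is $\begin{bmatrix}s&t\\1-s&1-t\end{bmatrix}$ for a unique $(s,t)\in[0,1]^2$ (the first row ranges freely over $[0,1]^2$, the second is determined). Hence $\downarrow U_p$ is precisely the image of $[0,1]^2$ under $(s,t)\mapsto\begin{bmatrix}s&t\\1-s&1-t\end{bmatrix}U_p$, and the same column-sum inversion shows this is one-to-one for $p\in(0,1)$. The description ``$\downarrow U_p$ = all joint distributions with column sum $(p,1-p)$'' then follows from the computation above (every element of $\downarrow U_p$ has column sums $(p,1-p)$) together with the converse solvability of $T U_p=U$ established in the previous paragraph. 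Finally $L=\bigcup_{p\in[0,1]}\downarrow U_p$ because every $U\in L$ lies in $\downarrow U_p$ for $p$ equal to its first column sum, and this union is a disjoint slicing since membership in $\downarrow U_p$ pins the first column sum to $p$.

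The bulk of the argument is routine linear algebra and case-checking; the only subtlety I anticipate is the boundary degeneracy at $p\in\{0,1\}$, which I would dispatch by observing that the parametrization is a bijection on the open interior and that the two exceptional slices carry zero $\mathscr{H}^{C(C-1)}$-measure — which is all the later sections (e.g.\ Lemma~\ref{lem:binaryvmi}) actually rely on.
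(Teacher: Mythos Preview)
Your argument is correct and follows essentially the same route as the paper: expand the product, verify the result is a joint distribution, and invert via the column sums $p=u_{00}+u_{10}$, $s=u_{00}/p$, $t=u_{01}/(1-p)$. You are in fact more careful than the paper, which silently ignores the degeneracy at $p\in\{0,1\}$ and does not spell out the $\downarrow U_p$ part; your remark that the exceptional slices are $\mathscr{H}^{C(C-1)}$-null is the right way to reconcile the ``one to one'' phrasing with what the later sections actually use.
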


\addtocounter{theorem}{-1}
}

\begin{proof}

We use $L_1$ to denote $\{\begin{bmatrix} s & t\\ 1-s & 1-t\end{bmatrix} \begin{bmatrix}p & 0\\ 0 & 1-p\end{bmatrix}|s,t,p\in[0,1] \}$. 

It's easy to verify that $\begin{bmatrix} s & t\\ 1-s & 1-t\end{bmatrix} \begin{bmatrix}p & 0\\ 0 & 1-p\end{bmatrix}$ is a joint distribution matrix. Thus, $L_1\subset L$ and we have a natural mapping $U=\begin{bmatrix} s & t\\ 1-s & 1-t\end{bmatrix} \begin{bmatrix}p & 0\\ 0 & 1-p\end{bmatrix}$ from $(s,t,p)\in [0,1]^3$ to $U\in L$. For another direction, for every $U=\begin{bmatrix}u_{00} & u_{01}\\ u_{10} & u_{11}\end{bmatrix}$, we can set $p=u_{00}+u_{10}$ and $s=\frac{u_{00}}{p}$, $t=\frac{u_{01}}{1-p}$ such that \[\begin{bmatrix} s & t\\ 1-s & 1-t\end{bmatrix} \begin{bmatrix}p & 0\\ 0 & 1-p\end{bmatrix}=U.\] Thus, $L\subset L_1$ and there is a mapping from $U=\begin{bmatrix}u_{00} & u_{01}\\ u_{10} & u_{11}\end{bmatrix}\in L$ to $(s=\frac{u_{00}}{u_{00}+u_{10}},t=\frac{u_{01}}{1-(u_{00}+u_{10})},p=u_{00}+u_{10})\in [0,1]^3$. 
\end{proof}

{
\renewcommand{\thetheorem}{\ref{lem:binaryvmi}}

\begin{lemma}
In binary case, 
\begin{align*}
    \vmi^{w}(U)=& 2|\det(U)|\int_{s=0}^{1}\int_{t=0}^1 w(\begin{bmatrix} s & t\\ 1-s & 1-t\end{bmatrix} U )d s d t\\
    =& 2|u_{00}u_{11}-u_{10}u_{01}|\int_{s=0}^{1}\int_{t=0}^1 w(\begin{bmatrix} s & t\\ 1-s & 1-t\end{bmatrix} \begin{bmatrix}u_{00} & u_{01}\\ u_{10} & u_{11}\end{bmatrix}
 )d s d t
\end{align*}
\end{lemma}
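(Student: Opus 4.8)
The plan is to realize the lower set $\downarrow U$ as the image of the unit square under an explicit affine map and then invoke the area formula (Fact~\ref{fact:area}). By Observation~\ref{obs:binary}, in the binary case every column-stochastic matrix has the form $T_{s,t}:=\begin{bmatrix} s & t\\ 1-s & 1-t\end{bmatrix}$ for a unique $(s,t)\in[0,1]^2$, so Definition~\ref{def:lpo} gives $\downarrow U=\{T_{s,t}U:(s,t)\in[0,1]^2\}$. I would then introduce the affine map $\mathcal{J}\colon[0,1]^2\to\mathbb{R}^4$, $\mathcal{J}(s,t):=\mvec(T_{s,t}U)$, and read off its constant linear part, the $4\times2$ matrix
\[\mathbf{J}=\begin{bmatrix} u_{00} & u_{10}\\ -u_{00} & -u_{10}\\ u_{01} & u_{11}\\ -u_{01} & -u_{11}\end{bmatrix},\]
which is simply the $C=2$ instance of the factorization $\mathbf{J}=U^{\top}\otimes\mathbf{W}$ appearing in the proof of Theorem~\ref{thm:vmi}.

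The next step is to compute the Gram determinant. A direct calculation gives
\[\mathbf{J}^{\top}\mathbf{J}=2\begin{bmatrix} u_{00}^2+u_{01}^2 & u_{00}u_{10}+u_{01}u_{11}\\ u_{00}u_{10}+u_{01}u_{11} & u_{10}^2+u_{11}^2\end{bmatrix},\]
and the Lagrange (Cauchy--Binet) identity $(u_{00}^2+u_{01}^2)(u_{10}^2+u_{11}^2)-(u_{00}u_{10}+u_{01}u_{11})^2=(u_{00}u_{11}-u_{01}u_{10})^2=\det(U)^2$ yields $\det(\mathbf{J}^{\top}\mathbf{J})=4\det(U)^2$, hence $\sqrt{\det(\mathbf{J}^{\top}\mathbf{J})}=2|\det(U)|$.

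If $\det(U)\neq0$, then the submatrix of $\mathbf{J}$ formed by its first and third rows is $\begin{bmatrix} u_{00} & u_{10}\\ u_{01} & u_{11}\end{bmatrix}$, whose determinant is $\det(U)\neq0$; hence $\mathbf{J}$ has rank $2$ and $\mathcal{J}$ is one-to-one, so Fact~\ref{fact:area} (with $m=2$, $n=4$, $E=[0,1]^2$, and $f=w\circ\mathcal{J}$) applies and gives
\[\vmi^{w}(U)=\int_{\downarrow U}w\,d\mathscr{H}^2=\int_{0}^{1}\!\!\int_{0}^{1}w(T_{s,t}U)\sqrt{\det(\mathbf{J}^{\top}\mathbf{J})}\,ds\,dt=2|\det(U)|\int_{0}^{1}\!\!\int_{0}^{1}w(T_{s,t}U)\,ds\,dt,\]
which is the claimed formula, recalling $\det(U)=u_{00}u_{11}-u_{10}u_{01}$. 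Finally I would dispatch the degenerate case $\det(U)=0$ separately: then the right-hand side vanishes because of the prefactor $2|\det(U)|$, while $\mathbf{J}$ has rank at most $1$, so $\downarrow U=\mathcal{J}([0,1]^2)$ is contained in an affine line (or a single point) and therefore has $\mathscr{H}^2$-measure zero, forcing $\vmi^{w}(U)=0$ as well. The only mildly delicate points are the rank/injectivity bookkeeping needed to legitimately invoke the area formula and the separate treatment of the degenerate case; the Gram-determinant identity is the routine computational heart of the argument.
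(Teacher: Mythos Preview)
Your proposal is correct and follows essentially the same route as the paper: set up the affine map $\mathcal{J}(s,t)=\mvec(T_{s,t}U)$ from $E=[0,1]^2$ into $\mathbb{R}^4$, identify its linear part as the matrix $\mathbf{J}$ you wrote down, and apply the area formula. The paper's proof is terser---it writes out $\mathcal{J}$ and $\mathbf{J}$ and then jumps directly to the final formula---whereas you additionally spell out the Gram-determinant computation via the Lagrange identity, verify injectivity when $\det(U)\neq0$, and treat the degenerate case $\det(U)=0$ separately; these are exactly the routine checks the paper leaves implicit.
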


\addtocounter{theorem}{-1}
}

\begin{proof}

Let $E$ be $[0,1]^2$ and 
\begin{align*}
    \mathcal{J}(s,t):=&\mvec(\begin{bmatrix} s & t\\ 1-s & 1-t\end{bmatrix} \begin{bmatrix}u_{00} & u_{01}\\ u_{10} & u_{11}\end{bmatrix})\\
    =&\begin{bmatrix}u_{00}&u_{10}\\-u_{00}&-u_{10}\\u_{01}&u_{11}\\-u_{01}&-u_{11}\end{bmatrix}\begin{bmatrix}s\\t\end{bmatrix}+\begin{bmatrix}0\\u_{00}+u_{10}\\0\\u_{01}+u_{11}\end{bmatrix}\\
\end{align*}
    By applying the area formula (Fact~\ref{fact:area}), we have

    \begin{align*}
    \vmi^{w}(U)&= \int_{ x\in \downarrow  U} w(x)d\mathscr{H}^2(x) \\
    &= 2|\det(U)|\int_{s=0}^{1}\int_{t=0}^1 w(\begin{bmatrix} s & t\\ 1-s & 1-t\end{bmatrix} U )d s d t\\
    &= 2|u_{00}u_{11}-u_{10}u_{01}|\int_{s=0}^{1}\int_{t=0}^1 w(\begin{bmatrix} s & t\\ 1-s & 1-t\end{bmatrix} \begin{bmatrix}u_{00} & u_{01}\\ u_{10} & u_{11}\end{bmatrix}
 )d s d t
    \end{align*}
\end{proof}

{
\renewcommand{\thetheorem}{\ref{lem:dirich}}

\begin{lemma}
Given a joint distribution $U^{\star}$, we set $\bm{\alpha}(U^{\star})=\{\alpha u^{\star}_{ij},i,j\in[C],\alpha>0\}$, for all $U$ such that $U^{\star} \notin \partial{\downarrow U}$, 
\[ \lim_{\alpha\rightarrow \infty} \vmi^{w^{\bm{\alpha}(U^{\star})}}(U)=\mathbbm{1}(U\succeq U^{\star})\] where $\partial{\downarrow U}$ is the boundary of $\downarrow U$.
\end{lemma}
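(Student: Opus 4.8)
The plan is to reduce the statement to a concentration fact about Dirichlet distributions, via a change of variables that identifies $w^{\bm{\alpha}(U^{\star})}$, restricted to a \emph{slice}, with a product of Dirichlet densities. Note first that $\downarrow U\subseteq slice(U)$ for every $U$: if $U'=TU$ with $T$ column-stochastic then $\mathbf{1}^{\top}U'=\mathbf{1}^{\top}U$, so $U\succeq U^{\star}$ can hold only if $U$ and $U^{\star}$ have equal column sums. This suggests splitting on whether $U\in slice(U^{\star})$. For the change of variables, parametrize a matrix on the slice with column sums $(p_1,\dots,p_C)$ by $C$ simplex points $\mathbf{q}_1,\dots,\mathbf{q}_C\in\Delta^{C-1}$ via $u_{ij}=p_jq_{ij}$; this map is block-affine and, by the area formula (Fact~\ref{fact:area}), scales $C(C-1)$-dimensional volume by $\prod_j p_j^{C-1}$. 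Substituting $u_{ij}=p_jq_{ij}$ into $w^{\bm{\alpha}(U^{\star})}(U)=\frac{1}{C(\bm{\alpha})}\prod_{ij}u_{ij}^{\alpha u^{\star}_{ij}-1}$ and collecting powers of $p_j$ (using $\sum_i(\alpha u^{\star}_{ij}-1)=\alpha u^{\star}_j-C$) yields, after incorporating the Jacobian and solving the normalization $\int_{slice(U^{\star})}w^{\bm{\alpha}(U^{\star})}\,d\mathscr{H}^{C(C-1)}=1$ for $C(\bm{\alpha})$ (this is also where the footnote's formula for $C(\bm{\alpha})$ comes from): \emph{(i)} the total $w^{\bm{\alpha}(U^{\star})}$-mass of a slice with column sums $p$ equals $\prod_j(p_j/u^{\star}_j)^{\alpha u^{\star}_j-1}$; \emph{(ii)} on $slice(U^{\star})$, the measure $w^{\bm{\alpha}(U^{\star})}\,d\mathscr{H}^{C(C-1)}$ is exactly the law of the random matrix $\mathbf{U}_{\alpha}$ whose $j$-th column is $u^{\star}_j\mathbf{Q}_j$ with $\mathbf{Q}_1,\dots,\mathbf{Q}_C$ independent and $\mathbf{Q}_j\sim Dir(\alpha u^{\star}_{1j},\dots,\alpha u^{\star}_{Cj})$.

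With these in hand the conclusion is short. If $U\notin slice(U^{\star})$, then $U\not\succeq U^{\star}$, so the target value is $0$, and since $\downarrow U\subseteq slice(U)$ we get $\vmi^{w^{\bm{\alpha}(U^{\star})}}(U)\le\prod_j(p_j/u^{\star}_j)^{\alpha u^{\star}_j-1}$ by \emph{(i)}; taking logarithms, this equals $-\alpha D_{KL}(u^{\star}\Vert p)+O(1)\to-\infty$ because $p\neq u^{\star}$ as probability vectors, so the VMI tends to $0$. If $U\in slice(U^{\star})$, then $\downarrow U\subseteq slice(U^{\star})$ and \emph{(ii)} gives $\vmi^{w^{\bm{\alpha}(U^{\star})}}(U)=\Pr[\mathbf{U}_{\alpha}\in\downarrow U]$; here $\downarrow U$ is compact (a continuous image of the compact set of column-stochastic matrices), hence closed, and by Fact~\ref{fact:beta} we have $\E[\mathbf{U}_{\alpha}]=U^{\star}$ with every entry having variance $O(1/\alpha)$, so Chebyshev yields $\mathbf{U}_{\alpha}\to U^{\star}$ in probability. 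The hypothesis $U^{\star}\notin\partial\!\downarrow U$ (read as the boundary of $\downarrow U$ inside the affine subspace $slice(U)$, in which $\downarrow U$ is full-dimensional when $U$ is non-degenerate) then leaves exactly two cases: $U^{\star}$ interior to $\downarrow U$, whence $\Pr[\mathbf{U}_{\alpha}\in\downarrow U]\to1$ and $U\succeq U^{\star}$; or $U^{\star}$ outside the closed set $\downarrow U$, whence a neighbourhood of $U^{\star}$ misses $\downarrow U$, so $\Pr[\mathbf{U}_{\alpha}\in\downarrow U]\to0$ and $U\not\succeq U^{\star}$. In both cases the limit equals $\mathbbm{1}(U\succeq U^{\star})$.

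I expect the main obstacle to be the bookkeeping in the change-of-variables step: one must line up every Jacobian factor and power of $p_j$ so that the normalization $\int_{slice(U^{\star})}w^{\bm{\alpha}(U^{\star})}\,d\mathscr{H}^{C(C-1)}=1$ is consistent and the restricted measure is \emph{literally} a product of the stated Dirichlet laws; an off-by-one in an exponent (say $\alpha u^{\star}_{ij}$ versus $\alpha u^{\star}_{ij}-1$) would silently break both the KL estimate in \emph{(i)} and the mean/variance computation in \emph{(ii)}. A secondary point that needs care is the correct reading of $\partial\!\downarrow U$, since $\downarrow U$ has empty interior in the ambient $\mathbb{R}^{C^2}$; once it is taken relative to $slice(U)$, the interior/exterior dichotomy is clean and the hypothesis does precisely what is required, and the remaining ingredients (the Gibbs/KL inequality, Chebyshev, compactness of $\downarrow U$) are routine.
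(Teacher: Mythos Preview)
Your proposal is correct and follows essentially the same route as the paper: split on whether $U\in slice(U^{\star})$, bound the off-slice case by the slice's total $w^{\bm{\alpha}(U^{\star})}$-mass $\prod_j(p_j/u^{\star}_j)^{\alpha u^{\star}_j-1}\to 0$, and on $slice(U^{\star})$ identify the normalized density with a product of Dirichlet laws concentrating at $U^{\star}$. The only cosmetic differences are that the paper phrases the off-slice decay via $\prod_j(p_j/u^{\star}_j)^{u^{\star}_j}<1$ rather than your KL formulation, and handles the on-slice limit by invoking convergence in distribution and continuity sets (Portmanteau) rather than your direct interior/exterior dichotomy via compactness of $\downarrow U$.
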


\addtocounter{theorem}{-1}
}

\begin{proof}[Proof of Lemma~\ref{lem:dirich}]
We first show that for $U\notin slice(U^{\star})$, $\lim_{\alpha\rightarrow \infty} \vmi^{w^{\bm{\alpha}(U^{\star})}}(U)=0$.
\begin{align*}
\vmi^{w^{\bm{\alpha}(U^{\star})}}(U)\leq & \vol^{w^{\bm{\alpha}(U^{\star})}} (slice(U))\\
=&\int_{x\in slice(U)} w^{\bm{\alpha}(U^{\star})}(x)  d\mathscr{H}^{C(C-1)}(x)\\
=& \frac{1}{C(\bm{\alpha})} \int_{\forall j, \sum_i x_{ij}=\sum_i u_{ij}} \Pi_{ij} x_{ij}^{\alpha_{ij}-1}  d\mathscr{H}^{C(C-1)}(x)\\ \tag{For a single slice, we can integrate independently for each column $j$ and $x_j:=(x_{1j},x_{2j},\cdots,x_{ij})$.}
=& \frac{1}{C(\bm{\alpha})} \Pi_j (\int_{\sum_i x_{ij}=u_j} \Pi_{i} x_{ij}^{\alpha_{ij}-1}  d\mathscr{H}^{C-1}(x_j))\\ \tag{$y_{ij}=\frac{x_{ij}}{u_j}$}
=& \frac{1}{C(\bm{\alpha})} \Pi_j (\int_{\sum_i y_{ij}=1} \Pi_{i} (u_j y_{ij})^{\alpha_{ij}-1} d\mathscr{H}^{C-1}(u_j y_{ij})))\\
\tag{each part is proportional to a Dirichlet density distribution multiplying $u_j^{\alpha_j-1}$ }
= & \Pi_j (\frac{u_j}{u^{\star}_j})^{\alpha_j-1} \\ \tag{$\alpha_j=\alpha u^{\star}_j$}
= & \Pi_j (\frac{u_j}{u^{\star}_j})^{\alpha u^{\star}_j-1}
\end{align*}

Since $\Pi_j (\frac{u_j}{u^{\star}_j})^{u^{\star}_j}<1$, the limit of $\vmi^{w^{\bm{\alpha}(U^{\star})}}(U)$'s upper-bound will be zero as $\alpha$ goes to infinity. 

It's left to analyze the points on $slice(U^{\star})$. Based on the above analysis, we can write the density function on $slice(U^{\star})$ as 
\begin{align*}
\tag{$\sum_i x_{ij}=u^{\star}_j, y_{ij}:=\frac{x_{ij}}{u^{\star}_j}$}
w(x)=\frac{1}{C(\bm{\alpha})} \Pi_j (\Pi_{i} x_{ij}^{\alpha_{ij}-1})=\frac{1}{C(\bm{\alpha})} \Pi_j \left((u^{\star}_j)^{\alpha_{j}-C}(\Pi_{i} y_{ij}^{\alpha_{ij}-1})\right)
\end{align*}

Thus, the normalization constant $C(\bm{\alpha})$ makes the above function a probability density function on $slice(U^{\star})$. For a random $U$ that follows this probability, i.e., $\Pr[U=x]=w(x)$, we will show that it will converge in probability to constant $U^{\star}$. Due to Markov inequality, 

\begin{align*}
\Pr[|U-U^{\star}|^2\geq \epsilon^2]\leq \frac{\sum_{ij}\E[(u_{ij}-u^{\star}_{ij})^2]}{\epsilon^2}
\end{align*}

For all $j$, we use $\mathbf{u}_j$ to represent the $j^{th}$ column vector of matrix $U$. The normalized column vector $\frac{\mathbf{u}_{j}}{u^{\star}_j}$'s density $\Pr[\frac{\mathbf{u}_{j}}{u^{\star}_j}=\mathbf{y}_j]$ is proportional to $\Pi_{i} y_{ij}^{\alpha_{ij}-1}=\Pi_{i} y_{ij}^{\alpha u^{\star}_{ij}-1}$ thus is a Dirichlet-distributed random variable whose expectation is $\frac{\mathbf{u}^{\star}_{j}}{u^{\star}_j}$ and each individual coordinate's variance goes to zero as $\alpha$ goes to infinity (Fact~\ref{fact:beta}).

Thus, $U$ also converges in distribution to constant $U^{\star}$ since convergence in probability implies convergence in distribution \cite{billingsley2013convergence}. In such case, for all continuity set $A$ whose boundary has zero measure in the limit \cite{billingsley2013convergence}, we have $\Pr[U\in A]$ converges to $\Pr[U^{\star}\in A]=\mathbbm{1}(U^{\star}\in A)$. 

\begin{align*}
\vmi^{w^{\bm{\alpha}(U^{\star})}}(U) = \int_{x\in \downarrow U} w^{\bm{\alpha}(U^{\star})}(x) d\mathscr{H}^{C(C-1)}(x)=\Pr[U^{\star}\in \downarrow U]
\end{align*}

Therefore, as long as the boundary of $\downarrow U$ does not contain $U^{\star}$, the above formula will converge to $\mathbbm{1}(U^{\star}\in \downarrow U)=\mathbbm{1}(U\succeq U^{\star})$.

\end{proof}

\section{Divergence families}\label{sec:divergence}

We use $\Sigma$ to denote a discrete set of signals. 
\paragraph{$f$-divergence~\cite{ali1966general,csiszar2004information}} 
$f$-divergence $\mathrm{D}_f:\Delta_{\Sigma}\times \Delta_{\Sigma}\rightarrow \mathbb{R}$ is a non-symmetric measure of the difference between distribution $\mathbf{p}\in \Delta_{\Sigma} $ and distribution $\mathbf{q}\in \Delta_{\Sigma} $ 
and is defined to be $$\mathrm{D}_f(\mathbf{p},\mathbf{q})=\sum_{\sigma\in \Sigma}
\mathbf{p}(\sigma)f\left( \frac{\mathbf{q}(\sigma)}{\mathbf{p}(\sigma)}\right)$$
where $f(\cdot)$ is a convex function and $f(1)=0$. $f$-divergence is non-negative and equals zero if $\mathbf{p}=\mathbf{q}$.

Now we introduce two $f$-divergences in common use: KL divergence, and Total variation Distance.
\begin{example}[KL divergence]
Choosing $-\log(x)$ as the convex function $f(x)$, $f$-divergence becomes KL divergence $D_{KL}(\mathbf{p},\mathbf{q})=\sum_{\sigma}\mathbf{p}(\sigma)\log\frac{\mathbf{p}(\sigma)}{\mathbf{q}(\sigma)}$
\end{example}

\begin{example}[Total Variation Distance]
Choosing $|x-1|$ as the convex function $f(x)$, $f$-divergence becomes Total Variation Distance $D_{tvd}(\mathbf{p},\mathbf{q})=\sum_{\sigma}|\mathbf{p}(\sigma)-\mathbf{q}(\sigma)|$
\end{example}

\paragraph{Proper scoring rules~\cite{winkler1969scoring}}
A scoring rule $PS:  \Sigma \times \Delta_{\Sigma} \rightarrow \mathbb{R}$ takes in a signal $\sigma \in \Sigma$  and a distribution over signals $\mathbf{p} \in \Delta_{\Sigma}$ and outputs a real number.  A scoring rule is \emph{proper} if, whenever the first input is drawn from a distribution $\mathbf{p}$, then $\mathbf{p}$ will maximize the expectation of $PS$ over all possible inputs in $\Delta_{\Sigma}$ to the second coordinate. A scoring rule is called \emph{strictly proper} if this maximum is unique. We will assume throughout that the scoring rules we use are strictly proper. Slightly abusing notation, we can extend a scoring rule to be $PS:  \Delta_{\Sigma} \times \Delta_{\Sigma} \rightarrow \mathbb{R}$  by simply taking $PS(\mathbf{p}, \mathbf{q}) = \E_{\sigma \leftarrow \mathbf{p}}(\sigma,  \mathbf{q})$.  We note that this means that any proper scoring rule is linear in the first term. 

\begin{example}[Log Scoring Rule~\cite{winkler1969scoring,gneiting2007strictly}]
Fix an outcome space $\Sigma$ for a signal $\sigma$.  Let $\mathbf{q} \in \Delta_{\Sigma}$ be a reported distribution.
The Logarithmic Scoring Rule maps a signal and reported distribution to a payoff as follows:
\[LSR(\sigma,\mathbf{q})=\log (\mathbf{q}(\sigma)).\]
\end{example}

\begin{example}[Quadratic Scoring Rule~\cite{winkler1969scoring,gneiting2007strictly}]
Fix an outcome space $\Sigma$ for a signal $\sigma$.  Let $\mathbf{q} \in \Delta_{\Sigma}$ be a reported distribution.
The Quadratic Scoring Rule maps a signal and reported distribution to a payoff as follows:
\[QSR(\sigma,\mathbf{q})= \sum_{\sigma'}(\mathbf{1}_{\sigma}-\mathbf{q}(\sigma'))^2\] where $\mathbf{1}_{\sigma}$ is a $|\Sigma|$-dimensional vector such that $\mathbf{1}_{\sigma}(\sigma)=1$ and $\forall \sigma'\neq \sigma, \mathbf{1}_{\sigma}(\sigma')=0$.  \end{example}

\paragraph{Bregman Divergence~\cite{bregman1967relaxation}} Bregman divergence $\mathrm{D}_{PS}:\Delta_{\Sigma}\times \Delta_{\Sigma}\rightarrow \mathbb{R}$ is a non-symmetric measure of the difference between distribution $\mathbf{p}\in \Delta_{\Sigma} $ and distribution $\mathbf{q}\in \Delta_{\Sigma} $
and is defined to be \[\mathrm{D}_{PS}(\mathbf{p},\mathbf{q})=PS(\mathbf{p},\mathbf{p})-PS(\mathbf{p},\mathbf{q})\]
where $PS$ is a proper scoring rule. Like $f$-divergence, Bregman-divergence is non-negative and equals zero if $\mathbf{p}=\mathbf{q}$. For special Bregman-divergences, the log scoring rule leads to the KL-divergence as well. 

\begin{example}

The quadratic scoring rule leads to quadratic divergence \[D_{QSR}(\mathbf{p},\mathbf{q})=\sum_{\sigma}(\mathbf{p}(\sigma)-\mathbf{q}(\sigma))^2.\] 

\end{example}

\end{document}